\newcommand{\renyi}{R\'{e}nyi }
\newcommand{\cTheta}{\widetilde{\Theta}}
\newcommand{\ctheta}{\tilde{\theta}}
\newcommand{\dtheta}{\bar{\theta}}
\newcommand{\ctpen}{\tilde{L}}
\newcommand{\copen}{L}
\newcommand{\esigma}{\Sigma^{\lambda}_{\theta}}
\newcommand{\rs}[1]{\mathrsfs{#1}}
\newcommand{\thatis}{{i.e.}}
\newcommand{\eqnum}[1]{(\ref{#1})}
\newcommand{\eqnumn}[1]{(\ref{#1})}
\newcommand{\trace}{{\rm Tr}}
\newcommand{\mib}[1]{\mbox{\boldmath$#1$}}
\newcommand{\typicalset}{A_{\epsilon}^{n}}
\newcommand{\inprob}{P_{\epsilon}^{n}}
\newcommand{\condexp}{E^{n}_{\epsilon}}
\newcommand{\twop}{\tilde{p}_{2}}
\newcommand{\twol}{\tilde{L}_{2}}
\newcommand{\twopc}{p_{2}}
\newtheorem{theorem}{\indent Theorem}
\newtheorem{condition}{\indent Condition}
\newtheorem{corollary}{\indent Corollary}
\newtheorem{definition}[theorem]{\indent Definition}
\newtheorem{lemma}{\indent Lemma}
\begin{document}
\title{Minimum Description Length Principle in Supervised Learning with Application to Lasso}
\author{Masanori~Kawakita~
        and~Jun'ichi~Takeuchi
\thanks{This work was supported in part by JSPS KAKENHI Grant Number
25870503 and the Okawa Foundation for Information and
Telecommunications. This material will be presented in part at the 33rd
International Conference on Machine Learning in New York city, NY, USA.

M. Kawakita and J. Takeuchi are with the faculty of Information Science and
        Electrical Engineering, Kyushu University, 744, Motooka,
        Nishi-Ku, Fukuoka city, 819-0395 Japan 
        (e-mail: kawakita@inf.kyushu-u.ac.jp).}}

\maketitle

 \begin{abstract}
The minimum description length (MDL) principle in supervised
  learning is studied. One of the most important theories for the MDL
  principle is Barron and Cover's theory (BC theory), which gives a
  mathematical justification of the MDL principle. 
 The original BC theory, however, can be applied to supervised learning only
  approximately and limitedly.
 Though Barron et al. recently succeeded in removing a similar approximation in case of unsupervised learning, their
 idea cannot be essentially applied to supervised learning in
 general. To overcome this issue, an extension of BC theory to
  supervised learning is proposed. 
 The derived risk bound has several advantages inherited from the
 original BC theory. First, the risk bound holds for finite sample size. Second,
 it requires remarkably few assumptions. Third, the risk bound has a
 form of redundancy of the two-stage code for the MDL procedure. Hence, the proposed extension gives a mathematical justification of the
  MDL principle to supervised learning like the original BC theory. 
As an important example of application, new risk and (probabilistic)
  regret bounds of lasso with random design are derived. The derived risk bound holds for any
 finite sample size $n$ and feature number $p$ even 
 if $n\ll p$ without boundedness of features in contrast to the past
 work. Behavior of the regret bound is investigated by numerical
 simulations. We believe that this is the first extension of BC theory
 to general supervised learning with random design without
 approximation. 
 \end{abstract}

\begin{IEEEkeywords}
lasso, risk bound, random design, MDL principle
\end{IEEEkeywords}

\section{Introduction}\label{introduction}
There have been various techniques to evaluate performance of
machine learning methods theoretically. Taking lasso \cite{tibshirani96a} as an
example, lasso has been analyzed by nonparametric
statistics \cite{Bunea2007,Bunea2007a,zhang09,Bickel2009b}, empirical
process \cite{bartlettetal12}, statistical physics
\cite{bayatietal10,bayatimontanari12,bayatietal13} and so on.
In general, most of these techniques require either asymptotic assumption (sample number $n$ and/or feature
number $p$ go to infinity) or
various technical assumptions like boundedness of features or moment
conditions. Some of them are much restrictive for practical use. In
this paper, we try to develop another way for performance evaluation
of machine learning methods with as few assumptions as
possible. An important candidate for this purpose is Barron and Cover's
theory (BC theory), which is one of the most famous results 
for the minimum description length (MDL) principle.  The MDL principle
\cite{Rissanen1978,Barron1998,grunwald07,grunwaldetal05,takeuchi14} claims that the
shortest description of a given set of data leads to the best hypotheses
about the data source. A famous model selection criterion based
on the MDL principle was proposed by Rissanen \cite{Rissanen1978}. This criterion
corresponds to a codelength of a two-stage code in which one encodes a
statistical model to encode data and then the
data are encoded with the model. In this case, an MDL estimator is
defined as the minimizer of the total codelength of this two-stage code.
BC theory \cite{Barron1991} guarantees that a risk of the MDL estimator
in terms of the \renyi divergence \cite{renyi61} is tightly bounded from
above by redundancy of the corresponding two-stage code. Because this
result means that the shortest description of the data by the
two-stage code yields the smallest risk upper bound, this result gives a
mathematical justification of the MDL principle.  Furthermore, BC theory
holds for finite $n$ without any complicated technical
conditions. However, BC theory has been applied to supervised learning
only approximately or limitedly. The original BC theory seems to be
widely recognized that it can be applicable to both unsupervised and
supervised learning. Though it is not false, BC theory actually cannot
be applied to supervised learning without a certain condition (Condition
\ref{criticalcondition} defined in Section \ref{bctheory}).  This condition is critical in a
sense that lack of this condition breaks a key technique of BC
theory. The literature \cite{Yamanishi1992} is the only example of
application of BC theory to supervised learning to our knowledge. His work assumed a specific
setting, where Condition \ref{criticalcondition} can be satisfied. However, the risk
bound may not be sufficiently tight due to imposing Condition
\ref{criticalcondition} forcedly, which will be explained in Section \ref{bctheory}.  Another
well-recognized disadvantage is the necessity of quantization of
parameter space. Barron et al. 
proposed a way to avoid the quantization and derived a risk bound of
lasso \cite{Barron2008,chatterjeebarron14} as an example. However, their idea cannot be applied to supervised learning in
general.  The main difficulty stems from Condition
\ref{criticalcondition} as explained later. It is thus
essentially difficult to solve. Actually, their risk bound of lasso was
derived with fixed design only (\thatis, essentially unsupervised
setting). The fixed design, however, is not satisfactory to evaluate
generalization error of supervised learning.  In this paper, we propose
an extension of BC theory to supervised learning without quantization in
random design cases. The derived risk bound inherits most of advantages
of the original BC theory. The main term of the risk bound has again a form of
redundancy of two-stage code. Thus, our extension also gives a 
mathematical justification of the MDL principle in supervised
learning. It should be remarked that, however, an additional condition
is required for an exact redundancy interpretation. 
We also derive new risk and regret bounds of lasso with random design as its application
under normality of features. This application is not trivial at all and
requires much more effort than both the above extension itself and the
derivation in fixed design cases.
We will try to derive those bounds in a manner not specific to our setting
but rather applicable to several other settings.  
Interestingly, the redundancy and
regret interpretation for the above bounds are exactly justified without
any additional condition in the
case of lasso.  
The most advantage of our theory is that it requires almost no
assumptions: neither asymptotic assumption ($n<p$ is also allowed),
bounded assumptions, moment conditions nor other technical conditions.
Especially, it is remarkable that our risk evaluation holds for finite
$n$ without necessity of boundedness of features though the employed
loss function (the \renyi divergence) is not bounded. Behavior of the
regret bound will be investigated by numerical simulations.
It may be worth noting that, despite we tried several other
approaches in order to extend BC theory to supervised learning, we can
hardly derive a risk bound of lasso as tight as meaningful by using
them. We believe that our proposal is currently the unique choice that could give a meaningful risk bound. 

This paper is organized as follows. Section \ref{setting}
introduces an MDL estimator in supervised learning.
We briefly review BC theory and its recent progress in Section \ref{bctheory}.
The extension of BC theory to supervised learning will appear in Section
\ref{extension}. We derive new risk and regret bounds of lasso in Section
\ref{lasso}.
All proofs of our results are given in Section \ref{proofs}. Section \ref{simulation} contains numerical
simulations. A conclusion will appear in Section \ref{conclusion}.
\section{MDL Estimator in Supervised Learning}\label{setting}
Suppose that we have $n$ training data
$(x^n,y^n):=\{(x_i,y_i)\in \rs{X}\times \rs{Y}|i=1,2,\cdots, n\}$ generated
from $\bar{p}_*(x^n,y^n)=q_*(x^n)p_*(y^n|x^n)$, where $\rs{X}$ is a domain
of feature vector $x$ and $\rs{Y}$ could be $\Re$ (regression) or a finite set
(classification) according to target problems. 
Here, the sequence $(x_1,y_1),(x_2,y_2),\cdots$ is not necessarily independently and
   identically distributed (i.i.d.) but can be a stochastic process in general.
We write the $j$th component of the $i$th sample as $x_{ij}$.
To define an MDL estimator according to the notion of
two-stage code \cite{Rissanen1978}, we need to describe data itself and
a statistical model used to describe the data too.
Letting $\tilde{L}(x^n,y^n)$ be the codelength of the two-stage code to
describe $(x^n,y^n)$, $\tilde{L}(x^n,y^n)$ can be decomposed as
\[
 \tilde{L}(x^n,y^n)=\tilde{L}(x^n)+\tilde{L}(y^n|x^n)
\]
by the chain rule. Since a goal of supervised learning is to estimate $p_*(y^n|x^n)$, we
need not estimate $q_*(x^n)$. In view of the MDL principle, this implies
that $\tilde{L}(x^n)$ (the description length of
$x^n$) can be ignored. Therefore, we only consider the encoding of
$y^n$ given $x^n$ hereafter. 
This corresponds to a description scheme in which an encoder and a
decoder share the data $x^n$.
To describe $y^n$ given $x^n$, we use a parametric model
$p_{\theta}(y^n|x^n)$ with parameter $\theta\in \Theta$.
The parameter space $\Theta$ is a certain continuous space or a union of
continuous spaces.
Note that, however, the continuous parameter cannot be encoded. Thus, we
need to quantize the parameter space $\Theta$ as $\cTheta(x^n)$.
According to the notion of the two-stage code, we need to describe not
only $y^n$ but also the model used to describe $y^n$ (or equivalently the parameter
$\ctheta\in \cTheta(x^n)$) given $x^n$. Again by the chain rule, such a codelength can be decomposed as
\[
 \tilde{L}(y^n,\ctheta|x^n)=\tilde{L}(y^n|x^n,\ctheta)+\tilde{L}(\ctheta|x^n).
\]
Here, $\tilde{L}(y^n|x^n,\ctheta)$ expresses a codelength to
describe $y^n$ using $p_{\ctheta}(y^n|x^n)$, which is, needless to say, $-\log
p_{\ctheta}(y^n|x^n)$. On the other hand, $\ctpen(\ctheta|x^n)$ expresses a
codelength to describe the model $p_{\ctheta}(y^n|x^n)$ itself. 
Note that
$\ctpen(\tilde{\theta}|x^n)$ must satisfy Kraft's inequality
\begin{equation}
\sum_{\ctheta\in \cTheta(x^n)}\exp(-\ctpen(\ctheta|x^n))\le 1.  \nonumber\label{kraft}
\end{equation}
The MDL estimator is defined by the minimizer of
the above codelength:
\[
 \ddot{\theta}(x^n,y^n):=\arg \min_{\ctheta\in \cTheta(x^n)}\big\{-\log
 p_{\ctheta}(y^n|x^n)+\ctpen(\ctheta|x^n)\big\}.
\]
Let us write the minimum description length attained by the two-stage
 code as
 \[
 \twol (y^n|x^n):=-\log  p_{\ddot{\theta}}(y^n|x^n)+\ctpen(\ddot{\theta}|x^n).
 \]
Because $\twol$ also satisfies Kraft's inequality with respect
 to $y^n$ for each $x^n$, it is interpreted as a codelength of a prefix two-stage code. Therefore,
\[
 \twop (y^n|x^n):=\exp(-\twol(y^n|x^n)) 
\]
 is a conditional
sub-probability distribution corresponding to the two-stage code.

\section{Barron and Cover's Theory}\label{bctheory}
We briefly review Barron and Cover's theory (BC theory) and its recent
progress in view of supervised learning though they discussed basically
unsupervised learning (or supervised learning with fixed design).
In BC theory, the \renyi divergence \cite{renyi61} between $p(y|x)$ and $r(y|x)$ with
order $\lambda\in (0,1)$
\begin{equation}
 d^n_\lambda(p,r)=-\frac{1}{1-\lambda}
\log E_{q_*(x^n)p(y^n|x^n)} \left(\frac{r(y^n|x^n)}{p(y^n|x^n)}\right)^{1-\lambda}\label{renyidiv}
\end{equation}
is used as a loss function. The \renyi divergence converges to
Kullback-Leibler (KL) divergence
\begin{equation}
 \rs{D}^n(p,r):=\int q_*(x^n)p(y^n|x^n)\left(\log
					\frac{p(y^n|x^n)}{r(y^n|x^n)}\right)dx^ndy^n \label{kldiv}
\end{equation}
as $\lambda\rightarrow 1$, \thatis,
\begin{equation}
 \lim_{\lambda\rightarrow 1}d^n_{\lambda}(p,r)=\rs{D}^n(p,r) \label{renyiKL}
\end{equation}
for any $p,r$ . We also note that the \renyi divergence at
$\lambda=0.5$ is equal to Bhattacharyya divergence \cite{bhattacharyya43}
\begin{equation}
 d^n_{0.5}(p,r)=-2\log \int q_*(x^n)\sqrt{p(y^n|x^n)r(y^n|x^n)}dx^ndy^n.\label{bhattacharyya}
\end{equation}
We drop $n$ of each divergence like $d_{\lambda}(p,r)$ if it
is defined with a single random variable, \thatis,
\[
 d_{\lambda}(p,r)=-\frac{1}{1-\lambda}\log E_{q_*(x)p(y|x)}\left(\frac{r(y|x)}{p(y|x)}\right)^{1-\lambda}.
\]
BC theory requires the model description length to satisfy a little bit
stronger Kraft's inequality defined as follows.
\begin{definition}
 Let $\beta$ be a real number in $(0,1)$. We say that a
 function $h(\ctheta)$ satisfies $\beta$-stronger Kraft's inequality
 if
 \[
  \sum_{\ctheta}\exp(-\beta h(\ctheta))\le 1,
 \]
 where the summation is taken over a range of $\ctheta$ in its context.
\end{definition}
The following condition
is indispensable for application of BC theory to supervised learning. 
\begin{condition}[indispensable condition]\label{criticalcondition}
Both the quantized space and the model description length are independent of
$x^n$, \thatis, 
\begin{equation}
\cTheta(x^n)=\cTheta,\quad \ctpen(\ctheta|x^n)=\ctpen(\ctheta).\label{indcondition}
\end{equation}
\end{condition}
Under Condition \ref{criticalcondition}, BC
theory \cite{Barron1991} gives the following two theorems for supervised
learning.
  Though these theorems were shown only for the case of
  Hellinger distance in the original literature \cite{Barron1991}, we
  state these theorems with the \renyi divergence.  
  \begin{theorem}\label{bctheorem1}
   Let $\beta$ be a real number in $(0,1)$.
Assume that $\ctpen$ satisfies $\beta$-stronger Kraft's inequality. 
Under Condition \ref{criticalcondition}, 
\begin{eqnarray}
\lefteqn{E_{\bar{p}_*(x^n,y^n)}
d^n_\lambda(p_*,p_{\ddot{\theta}})}\nonumber\\
 &\leq &
E_{\bar{p}_*(x^n,y^n)}\left[\inf_{\ctheta\in
		       \Theta}\left\{\log\frac{p_*(y^n|x^n)}{p_{\ctheta}(y^n|x^n)}+\ctpen(\ctheta)\right\}\right]\label{origbound}
\\
 &=&
E_{\bar{p}_*(x^n,y^n)}\log\frac{p_*(y^n|x^n)}{\twop(y^n|x^n)}\label{origbound2}
\end{eqnarray}
for any $\lambda\in (0,1-\beta]$.   
 \end{theorem}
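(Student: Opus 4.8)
The plan is to reduce the whole statement to a single exponential-moment inequality and then invoke convexity. The second line \eqnum{origbound2} is immediate: by the definitions of $\ddot{\theta}$, $\twol$ and $\twop$ one has $\inf_{\ctheta}\{\log(p_*(y^n|x^n)/p_{\ctheta}(y^n|x^n))+\ctpen(\ctheta)\}=\log p_*(y^n|x^n)+\twol(y^n|x^n)=\log(p_*(y^n|x^n)/\twop(y^n|x^n))$ pointwise, so only the inequality \eqnum{origbound} needs work. For that, setting $Z(x^n,y^n):=d^n_\lambda(p_*,p_{\ddot{\theta}})-\log\frac{p_*(y^n|x^n)}{\twop(y^n|x^n)}$, it suffices to prove $E_{\bar{p}_*}[\exp\{(1-\lambda)Z\}]\le 1$, because then Jensen's inequality gives $\exp\{(1-\lambda)E_{\bar{p}_*}[Z]\}\le E_{\bar{p}_*}[\exp\{(1-\lambda)Z\}]\le 1$, hence $E_{\bar{p}_*}[Z]\le 0$.

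The core of the argument is to dominate $\exp\{(1-\lambda)Z\}$ by a sum over the net $\cTheta$, which is fixed by Condition \ref{criticalcondition}. For each fixed $\ctheta\in\cTheta$ I would introduce
\[
 A_{\ctheta}:=\left(\frac{p_{\ctheta}(y^n|x^n)}{p_*(y^n|x^n)}\right)^{1-\lambda}e^{-(1-\lambda)\ctpen(\ctheta)}e^{(1-\lambda)d^n_\lambda(p_*,p_{\ctheta})},
\]
in which $\ctpen(\ctheta)$ and $d^n_\lambda(p_*,p_{\ctheta})$ are constants. Taking the expectation and using the very definition of the \renyi divergence, $E_{\bar{p}_*}(p_{\ctheta}/p_*)^{1-\lambda}=e^{-(1-\lambda)d^n_\lambda(p_*,p_{\ctheta})}$, the two divergence factors cancel and leave $E_{\bar{p}_*}[A_{\ctheta}]=e^{-(1-\lambda)\ctpen(\ctheta)}$. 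Summing over the net and using that $\ctpen$ obeys $\beta$-stronger Kraft together with $\lambda\le 1-\beta$ (so that $e^{-(1-\lambda)\ctpen(\ctheta)}=(e^{-\beta\ctpen(\ctheta)})^{(1-\lambda)/\beta}\le e^{-\beta\ctpen(\ctheta)}$) yields $E_{\bar{p}_*}[\sum_{\ctheta}A_{\ctheta}]=\sum_{\ctheta}e^{-(1-\lambda)\ctpen(\ctheta)}\le\sum_{\ctheta}e^{-\beta\ctpen(\ctheta)}\le 1$. Since $\ddot{\theta}\in\cTheta$ and every $A_{\ctheta}\ge 0$, we have $A_{\ddot{\theta}}\le\sum_{\ctheta}A_{\ctheta}$, and because $\twop=p_{\ddot{\theta}}e^{-\ctpen(\ddot{\theta})}$ one checks that $A_{\ddot{\theta}}=(\twop/p_*)^{1-\lambda}e^{(1-\lambda)d^n_\lambda(p_*,p_{\ddot{\theta}})}=\exp\{(1-\lambda)Z\}$. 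This gives $E_{\bar{p}_*}[\exp\{(1-\lambda)Z\}]\le 1$ and closes the argument.

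The step I expect to be the main obstacle is exactly the passage from the data-dependent estimator $\ddot{\theta}$ to the fixed net: since $\ddot{\theta}=\ddot{\theta}(x^n,y^n)$ is coupled to the same data over which we average, one cannot evaluate $E_{\bar{p}_*}[d^n_\lambda(p_*,p_{\ddot{\theta}})]$ termwise, and a direct attempt loses the cancellation that makes $E_{\bar{p}_*}[A_{\ctheta}]$ collapse to $e^{-(1-\lambda)\ctpen(\ctheta)}$. The device that rescues it is bounding the maximum (attained at $\ddot{\theta}$) by the sum over $\cTheta$, which decouples the estimator from the data; this is precisely where Condition \ref{criticalcondition} is indispensable, because the identity for $E_{\bar{p}_*}[A_{\ctheta}]$ requires each net point and its codelength to be independent of $x^n$, so that the expectation (which also integrates over $x^n$) can be taken for each $\ctheta$ separately and pulled outside the sum. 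The constant $1-\lambda$ in the exponential moment and the admissible range $\lambda\in(0,1-\beta]$ are then forced by the requirement $\sum_{\ctheta}e^{-(1-\lambda)\ctpen(\ctheta)}\le 1$.
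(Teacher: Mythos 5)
Your proposal is correct and follows essentially the same route as the paper's argument (which the paper spells out explicitly only in the more general setting of Theorem \ref{mytheorem1}): bound the term at $\ddot{\theta}$ by the sum over the fixed net $\cTheta$, use the exact identity $E_{\bar{p}_*}\bigl[(p_{\ctheta}/p_*)^{1-\lambda}\bigr]=e^{-(1-\lambda)d^n_\lambda(p_*,p_{\ctheta})}$ termwise (which is precisely where Condition \ref{criticalcondition} is needed), invoke Kraft's inequality, and finish with Jensen. The only cosmetic difference is that you take the exponential moment with exponent $1-\lambda$ and discharge the hypothesis $\lambda\le 1-\beta$ through $e^{-(1-\lambda)\ctpen(\ctheta)}\le e^{-\beta\ctpen(\ctheta)}$, whereas the paper uses exponent $\beta$ and instead invokes the monotonicity of the \renyi divergence in $\lambda$; the two pieces of bookkeeping are equivalent.
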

  \begin{theorem}\label{bctheorem2}
   Let $\beta$ be a real number in $(0,1)$. 
   Assume that $\ctpen$ satisfies $\beta$-stronger Kraft's inequality.
Under Condition \ref{criticalcondition},  
\begin{eqnarray*}
\Pr\Bigl(
 \frac{d^n_\lambda(p_*,p_{\ddot{\theta}})}{n}
 -
\frac{1}{n}
\log \frac{p_{*}(y^n|x^n)}{\twop(y^n|x^n)}
\geq \tau \Bigr)
 \leq 
e^{-n\tau \beta }
\end{eqnarray*}
   for any $\lambda\in (0,1-\beta]$.
  \end{theorem}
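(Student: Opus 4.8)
The plan is to bound the tail probability by a Chernoff-type argument that reduces the claim to a single exponential-moment inequality. Write $T:=d^n_\lambda(p_*,p_{\ddot{\theta}})-\log\frac{p_*(y^n|x^n)}{\twop(y^n|x^n)}$ for the random quantity inside the probability; its randomness enters only through the data $(x^n,y^n)$ and hence through the minimizer $\ddot{\theta}$. Applying Markov's inequality to the nonnegative variable $e^{\beta T}$ gives $\Pr(T\ge n\tau)=\Pr(e^{\beta T}\ge e^{\beta n\tau})\le e^{-\beta n\tau}E_{\bar{p}_*}[e^{\beta T}]$, so it suffices to prove $E_{\bar{p}_*}[e^{\beta T}]\le 1$, after which the stated factor $e^{-n\tau\beta}$ is immediate.

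To estimate the exponential moment I would first unfold $\twop$. Since $\ddot{\theta}$ minimizes $-\log p_{\ctheta}(y^n|x^n)+\ctpen(\ctheta)$ over $\ctheta\in\cTheta$ (here Condition \ref{criticalcondition} is used to drop the $x^n$-dependence of $\cTheta$ and $\ctpen$), we have $\twop(y^n|x^n)=p_{\ddot{\theta}}(y^n|x^n)e^{-\ctpen(\ddot{\theta})}$, whence
\[
e^{\beta T}=\exp\!\big(\beta d^n_\lambda(p_*,p_{\ddot{\theta}})\big)\Big(\frac{p_{\ddot{\theta}}(y^n|x^n)}{p_*(y^n|x^n)}\Big)^{\beta}e^{-\beta\ctpen(\ddot{\theta})}.
\]
This is a single summand indexed by $\ddot{\theta}\in\cTheta$; as every such summand is nonnegative, it is bounded by the full sum over $\cTheta$. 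Interchanging sum and expectation (legitimate by nonnegativity) and noting that $d^n_\lambda(p_*,p_{\ctheta})$ and $\ctpen(\ctheta)$ are deterministic for each fixed $\ctheta$, I obtain
\[
E_{\bar{p}_*}[e^{\beta T}]\le\sum_{\ctheta\in\cTheta}e^{-\beta\ctpen(\ctheta)}\exp\!\big(\beta d^n_\lambda(p_*,p_{\ctheta})\big)\,E_{q_*(x^n)p_*(y^n|x^n)}\Big[\Big(\frac{p_{\ctheta}(y^n|x^n)}{p_*(y^n|x^n)}\Big)^{\beta}\Big].
\]

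The heart of the argument is to show each term is at most $e^{-\beta\ctpen(\ctheta)}$. Setting $Z:=p_{\ctheta}(y^n|x^n)/p_*(y^n|x^n)$, the definition of the \renyi divergence gives $\exp(\beta d^n_\lambda(p_*,p_{\ctheta}))=\big(E_{q_*p_*}[Z^{1-\lambda}]\big)^{-\beta/(1-\lambda)}$. Because $\lambda\in(0,1-\beta]$ forces $\beta/(1-\lambda)\le 1$, the map $t\mapsto t^{\beta/(1-\lambda)}$ is concave on $[0,\infty)$, so Jensen's inequality yields $E_{q_*p_*}[Z^{\beta}]=E_{q_*p_*}\big[(Z^{1-\lambda})^{\beta/(1-\lambda)}\big]\le\big(E_{q_*p_*}[Z^{1-\lambda}]\big)^{\beta/(1-\lambda)}$. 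Multiplying, the two \renyi factors cancel and each term collapses to $e^{-\beta\ctpen(\ctheta)}$; summing and invoking $\beta$-stronger Kraft's inequality gives $E_{\bar{p}_*}[e^{\beta T}]\le\sum_{\ctheta\in\cTheta}e^{-\beta\ctpen(\ctheta)}\le 1$, completing the proof.

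The step I expect to be delicate is the passage from the single minimizing term to the sum over $\cTheta$ followed by the interchange with the expectation: this is exactly where Condition \ref{criticalcondition} is indispensable. If $\cTheta=\cTheta(x^n)$ and $\ctpen=\ctpen(\ctheta|x^n)$ depended on $x^n$, the index set of the sum would itself be random, and $d^n_\lambda(p_*,p_{\ctheta})$ — a global average over $x^n$ — could no longer be pulled outside and matched term-by-term against $E_{q_*p_*}[Z^{\beta}]$. The only other point needing care is the direction of Jensen's inequality, which hinges on the concavity guaranteed by $\beta\le 1-\lambda$; this is precisely the range restriction $\lambda\in(0,1-\beta]$ in the hypothesis.
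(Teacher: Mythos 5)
Your proposal is correct and follows essentially the same route as the paper's own argument (displayed in the proofs of Theorems \ref{mytheorem1} and \ref{mytheorem2}, of which this quantized statement is the special case without the typical-set conditioning): Markov's inequality applied to $e^{\beta T}$, bounding the summand at the random index $\ddot{\theta}$ by the sum over $\cTheta$, the identity $E[(p_{\ctheta}/p_*)^{\beta}]=\exp(-\beta d^n_{1-\beta}(p_*,p_{\ctheta}))$, and the $\beta$-stronger Kraft inequality. The only cosmetic difference is that you verify $\exp(\beta d^n_{\lambda}(p_*,p_{\ctheta}))E[Z^{\beta}]\le 1$ directly by Jensen's inequality, whereas the paper invokes the monotonicity of the R\'enyi divergence in $\lambda$ --- your Jensen step is precisely the standard proof of that monotonicity, so the two are the same fact.
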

  Since the right side of (\ref{origbound2}) is just the redundancy of
  the prefix two-stage code, Theorem \ref{bctheorem1} implies that we
  obtain the smallest upper bound of the risk by compressing the data
  most with the two-stage code. That is, Theorem \ref{bctheorem1} is
  a mathematical justification of the MDL principle. 
  We remark that, by interchanging the infimum and the expectation of
  (\ref{origbound}), the right side of (\ref{origbound}) becomes a quantity
  called ``index of resolvability'' \cite{Barron1991}, which is an upper
  bound of redundancy. 
  It is remarkable that BC theory requires no assumption except
  Condition \ref{criticalcondition} and
  $\beta$-stronger Kraft's inequality. However, Condition \ref{criticalcondition} is a
  somewhat severe restriction.
Both the quantization and the model description length can depend on
  $x^n$ in the definitions. In view of the MDL principle, this is
  favorable because the total description length can be minimized
  according to $x^n$ flexibly. If we use the model description length that is uniform over
  $\rs{X}^n$ in contrast, the total codelength must be longer in
  general. Hence, data-dependent model description length is more
  desirable.
Actually, this observation suggests that the bound derived in \cite{Yamanishi1992} may not be sufficiently tight.
In addition, the restriction by Condition \ref{criticalcondition} excludes a
practically important case `lasso with column normalization' (explained
  below) from the scope of application.  
However, it is essentially difficult to remove this restriction as
noted in Section \ref{introduction}. Another concern is quantization. The quantization for the encoding is natural in view of
the MDL principle.
Our target, however, is an application to usual estimators or machine
  learning algorithms themselves including lasso. A trivial example of such an application is a penalized maximum
likelihood estimator (PMLE)  
\begin{eqnarray*}
  \hat{\theta}(x^n,y^n)\!\!\!&:=&\!\!\!\arg \min_{\theta\in \Theta}\big\{-\log
   p_{\theta}(y^n|x^n)+ \copen(\theta|x^n)\big\},
\end{eqnarray*}
where $\copen:\Theta\times \rs{X}^n\rightarrow [0,\infty)$ is a certain
penalty.
Similarly to the quantized case, let us define
\begin{eqnarray*}
\twopc(y^n|x^n):=p_{\hat{\theta}}(y^n|x^n)\cdot
  \exp(-\copen(\hat{\theta}|x^n)), 
\end{eqnarray*}
that is, 
\[
  -\log \twopc(y^n|x^n)= \min_{\theta\in
  \Theta}\left\{-\log p_{\theta}(y^n|x^n)+L(\theta|x^n)\right\}. 
\]
Note that, however, $p_2(y^n|x^n)$ is not necessarily a
sub-probability distribution in contrast to the quantized case, which
will be discussed in detail in Section \ref{extension}. 
PMLE is a wide class of estimators including many
useful methods like Ridge regression \cite{hastieetal01}, lasso, Dantzig
Selector \cite{candestao07} and any Maximum-A-Posteriori estimators of Bayes estimation. If we can accept $\ddot{\theta}$ as an
approximation of $\hat{\theta}$ (by taking $\ctpen=\copen$), we have a risk
bound by direct application of BC theory. 
However, the quantization is unnatural in view of
machine learning application. Besides, we cannot use any data-dependent
$\copen$. Barron et al. proposed an important
notion `{\it risk validity}' to remove the quantization
\cite{barronetal08b,chatterjeebarron14,Chatterjee2014}.  
 \begin{definition}[risk validity]\label{riskvalidity}
Let $\beta$ be a real number in $(0,1)$ and $\lambda$ be a real number
  in $(0,1-\beta]$. For fixed $x^n$, we say
  that a penalty function $\copen(\theta|x^n)$ is risk valid if there exist a quantized space
  $\cTheta(x^n) \subset \Theta$ and a model description length $\ctpen(\ctheta|x^n)$ satisfying $\beta$-stronger Kraft's
  inequality such that $\cTheta(x^n)$ and $\ctpen(\ctheta|x^n)$ satisfy 
 \begin{eqnarray}
&&\hskip-7mm\forall y^n\in \rs{Y}^n,\,  \max_{\theta\in \Theta}
 \Bigl\{\!d^n_{\lambda}
 (p_*,p_{\theta}|x^n)\!-\!\log\frac{p_*(y^n|x^n)}{p_{\theta}(y^n|x^n)}\!-\!
 \copen(\theta|x^n)\Bigr\}\nonumber\\
  &&  \hskip-7mm
\le \max_{\ctheta \in \cTheta(x^n)}\!
\Bigl\{\!
d^n_{\lambda}(p_*,p_{\ctheta}|x^n)
\!-\!
\log \frac{p_*(y^n|x^n)}{p_{\ctheta}(y^n|x^n)}
\!-\! \ctpen(\ctheta|x^n)\!
\Bigr\}, \label{rv}
 \end{eqnarray}
  where
\[
 d^n_{\lambda}(p,r|x^n):= -\frac{1}{1-\lambda}
\log E_{p(y^n|x^n)} \Bigl(\frac{r(y^n|x^n)}{p(y^n|x^n)}\Bigr)^{1-\lambda}.
\]
 \end{definition}
 Note that their original definition in \cite{chatterjeebarron14} was
 presented only for the case where $\lambda=1-\beta$. Here,
 $d(p,r|x^n)$ is the \renyi divergence for fixed design ($x^n$ is
 fixed). Hence, $d^n_{\lambda}(p,r|x^n)$ does not depend on $q_*(x^n)$ in
 contrast to the \renyi divergence for random design
 $d^n_{\lambda}(p,r)$ defined by (\ref{renyidiv}). 
  Barron et al. proved that $\hat{\theta}$ has bounds similar to Theorems
\ref{bctheorem1} and \ref{bctheorem2} for any risk valid
penalty in the fixed design case. Their way is excellent because it does not require any
additional condition other than the risk validity. However, the risk
 evaluation only for a particular $x^n$ like
 $E_{p_*(y^n|x^n)}[d^n_{\lambda}(p_*,p_{\hat{\theta}}|x^n)]$ is unsatisfactory for supervised
 learning. In 
order to evaluate the so-called `generalization error' of supervised
learning, we need to evaluate the risk with random design, \thatis,
 $E_{\bar{p}_*(x^n,y^n)}[d^n_{\lambda}(p_*,p_{\hat{\theta}})]$. However, it is
essentially difficult to apply their idea to random design cases as it
is. Let us explain this by using lasso as an example.
The readers unfamiliar to lasso can refer to the head of
 Section \ref{lasso} for its definition. 
By extending the definition of risk validity to random design
straightforwardly, we obtain the following definition.
 \begin{definition}[risk validity in random design]\label{riskvalidityinrd}
Let $\beta$ be a real number in $(0,1)$ and $\lambda$ be a real number
  in $(0,1-\beta]$. We say that a penalty function
  $\copen(\theta|x^n)$ is risk valid if there exist a quantized space $\cTheta \subset \Theta$ and
  a model description length $\ctpen(\ctheta)$ satisfying $\beta$-stronger Kraft's inequality
   such that  
 \begin{eqnarray}
  &&\hskip-7mm\forall x^n\in \rs{X}^n,\ y^n\in \rs{Y}^n,\nonumber\\
  &&\max_{\theta\in \Theta}
 \Bigl\{d^n_{\lambda}
 (p_*,p_{\theta})\!-\!\log\frac{p_*(y^n|x^n)}{p_{\theta}(y^n|x^n)}\!-\!
 \copen(\theta|x^n)\Bigr\}\nonumber\\
  &&  \hskip-7mm
\le \max_{\ctheta \in \cTheta}
\Bigl\{
d^n_{\lambda}(p_*,p_{\ctheta})
-
\log \frac{p_*(y^n|x^n)}{p_{\ctheta}(y^n|x^n)}
- \ctpen(\ctheta)
\Bigr\}. \label{rvrd}
 \end{eqnarray}
 \end{definition}
In contrast to the fixed design case, (\ref{rv}) must hold not only for a fixed
$x^n\in \rs{X}^n$ but also for all $x^n\in \rs{X}^{n}$.
In addition, $\cTheta$ and $\ctpen(\ctheta)$ must be independent of
$x^n$ due to Condition \ref{criticalcondition}. 
The
form of \renyi divergence $d^n_{\lambda}(p_*,p_{\theta})$ also differs from
$d^n_{\lambda}(p_*,p_{\theta}|x^n)$ of the fixed design case in general.
Let us rewrite (\ref{rvrd}) equivalently as
\begin{eqnarray}
\lefteqn{\forall x^n \in \rs{X}^n,\ \forall y^n\in \rs{Y}^n,\ \forall
\theta\in \Theta,} \nonumber\\  
&&\hskip-7mm\min_{\ctheta \in \cTheta}
\Bigl\{
d^n_{\lambda}(p_*,p_{\theta})-
d^n_{\lambda}(p_*,p_{\ctheta})
+\log
\frac{p_{\theta}(y^n|x^n)}{p_{\ctheta}(y^n|x^n)}+\ctpen(\ctheta)\Bigr\}\nonumber \\
&& \hskip-7mm\le \copen(\theta|x^n).\label{rv2}
\end{eqnarray}
For short, we write the inside part of the
minimum of the left side of (\ref{rv2}) as
$H(\theta,\tilde{\theta},x^n,y^n)$. We need to evaluate $\min_{\tilde{\theta}}\{H(\theta,\tilde{\theta},x^n,y^n)\}$ in order to
derive risk valid penalties.  However,
it seems to be considerably
difficult. To our knowledge, the technique used by Chatterjee and Barron
\cite{chatterjeebarron14} is the best way to evaluate it, so that we
also employ it in this paper. A key premise of their idea is that taking
$\tilde{\theta}$ close to $\theta$ is not a bad choice to evaluate $\min_{\ctheta}H(\theta,\ctheta,x^n,y^n)$.  Regardless of whether it is true or not, this premise seems to be natural
and meaningful in the following sense. If we quantize the parameter
space finely enough, the quantized estimator $\ddot{\theta}$ is expected to behave almost
similarly to $\hat{\theta}$ with the same penalty and is expected to have a similar risk
bound.  If we take $\tilde{\theta}=\theta$, then
$H(\theta,\tilde{\theta},x^n,y^n)$ is equal to $\ctpen(\theta)$, which
implies that $\ctpen(\theta)$ is a risk valid penalty and has a risk
bound similar to the quantized case. Note that, however, we cannot match
$\tilde{\theta}$ to $\theta$ exactly because $\tilde{\theta}$ must be on
the fixed quantized space $\cTheta$. So, Chatterjee and Barron randomized
$\tilde{\theta}$ on the grid points on $\cTheta$ around $\theta$ and evaluate the
expectation with respect to it. This is clearly justified because
$\min_{\tilde{\theta}}\{H(\theta,\tilde{\theta},x^n,y^n)\}\le
E_{\tilde{\theta}}[H(\theta,\tilde{\theta},x^n,y^n)]$.
By using a carefully tuned randomization, they succeeded in removing the dependency of
$E_{\tilde{\theta}}[H(\theta,\tilde{\theta},x^n,y^n)]$ on $y^n$. Let us write the resultant expectation
as $H'(\theta,x^n):=E_{\tilde{\theta}}[H(\theta,\tilde{\theta},x^n,y^n)]$
for convenience. Any upper bound $\copen(\theta|x^n)$ of $H'(\theta,x^n)$ is a risk
valid penalty. By this fact, risk valid penalties should basically depend on
$x^n$ in general. If not ($\copen(\theta|x^n)=\copen(\theta)$),
$\copen(\theta)$ must bound $\max_{x^n}H'(\theta,x^n)$, which makes
$L(\theta)$ much larger. This is again unfavorable in view of the MDL
principle. In particular, $H'(\theta,x^n)$ includes an unbounded term in
linear regression cases with regard to $x^n$, which originates from the third term of the left side
of (\ref{rv2}). This can be seen by checking Section III of \cite{chatterjeebarron14}. Though their setting is fixed design,
this fact is also true for the random design. Hence, as long
as we use their technique, derived risk valid penalties must depend on
$x^n$ in linear regression cases. However, the $\ell_1$ norm used in the
usual lasso does not depend on $x^n$.  Hence, the risk
validity seems to be useless for lasso.  However, the following 
weighted $\ell_1$ norm
\begin{eqnarray*}
 \|\theta\|_{w,1}&:=&\sum_{j=1}^pw_j|\theta_j|,\\
 \mbox{where}&& w:=(w_1,\cdots, w_p)^T,\quad w_j:=\sqrt{\frac{1}{n}\sum_{i=1}^nx_{ij}^2}
\end{eqnarray*}
plays an important role here. 
The lasso with this weighted $\ell_1$ norm
is equivalent to an ordinary lasso with column normalization such that
each column of the design matrix has the same norm. The column normalization is theoretically
and practically important. Hence, we try to find a risk valid penalty
of the form $L_1(\theta|x^n)=\mu_1\|\theta\|_{w,1}+\mu_2$, where $\mu_1$ and $\mu_2$ are real coefficients. Indeed, there seems to be no other useful
penalty dependent on $x^n$ for the usual lasso. In contrast to fixed design cases,
however, there are severe difficulties to derive a meaningful risk bound
with this penalty. We explain this intuitively.
The main difficulty is caused by Condition \ref{criticalcondition}. As described
above, our strategy is to take $\tilde{\theta}$ close to
$\theta$. Suppose now that it is ideally almost realizable for any choice
of $x^n,y^n,\theta$. This implies that
$H(\theta,\tilde{\theta},x^n,y^n)$ is almost equal to
$\tilde{L}(\theta)$. On the other hand, for each fixed $\theta$, the
weighted $\ell_1$ norm of $\theta$ can be arbitrarily small by making
$x^n$ small accordingly. Therefore, the penalty
$\mu_1\|\theta\|_{w,1}+\mu_2$ is almost equal to $\mu_2$ in this case. This implies that $\mu_2$ must bound
$\max_{\theta}\ctpen(\theta)$, which is infinity in general.
If $\ctpen$ depended on $x^n$, we could resolve this problem.
However, $\ctpen$ must be independent of $x^n$. 
This issue does not seem to be specific to lasso.
Another major issue is the \renyi divergence
$d^n_{\lambda}(p_*,p_{\theta})$.
In the fixed design case, the
\renyi divergence $d^n_{\lambda}(p_*,p_{\theta}|x^n)$ is a simple convex
function in terms of $\theta$,
which makes its analysis easy. In contrast, the \renyi divergence
$d^n_{\lambda}(p_*,p_{\theta})$ in case of random design is not convex
and more complicated than that of fixed design cases, which makes it 
difficult to analyze.
We will describe why the non-convexity of loss function makes the
analysis difficult in Section \ref{someremarks}. 
The difficulties that we face when we use the techniques of
\cite{chatterjeebarron14} in the random design case are not limited to them. 
We do not explain them here because it requires the readers to understand
their techniques in detail. However, we only remark that these difficulties seem
to make their techniques useless for supervised learning with random design.
We propose a remedy to solve these issues in a lump in the next section.
\section{Main Results}
In this section, we propose a way to extend BC theory to supervised
learning and derive a new risk bound of lasso.
\subsection{Extension of BC Theory to Supervised
  Learning}\label{extension}
There are several possible approaches to extend BC theory to supervised
learning. A major concern is how tight a resultant risk bound is.
Below, we propose a way that gives a tight risk upper bound for
at least lasso. A key idea is to modify the risk validity
condition by introducing a so-called typical set of $x^n$.
We postulate that a probability distribution of stochastic process $x_1,x_2,\cdots,$ is a member of
a certain class $\mathcal{P}_x$.
Furthermore, we define $\mathcal{P}^n_x$ by the set of marginal
distribution of $x_1,x_2,\cdots, x^n$ of all elements of
$\mathcal{P}_x$.
We assume that we can define a typical set $\typicalset$ for each $q_*\in \mathcal{P}^n_x$, \thatis, $\mbox{Pr}(x^n\in \typicalset) \rightarrow 1$ as $n\rightarrow
\infty$. This is possible if $q_*$ is stationary and ergodic for
example. See \cite{coverthomas06} for detail. For short, $\mbox{Pr}(x^n\in \typicalset)$ is written as
$\inprob $ hereafter. 
We modify the risk validity by using the typical set. 
 \begin{definition}[$\epsilon$-risk validity]\label{epsilonriskvalid}
Let $\beta,\epsilon$ be real numbers in $(0,1)$ and $\lambda$ be a real number
  in $(0,1-\beta]$. We say that $\copen(\theta|x^n)$ is $\epsilon$-risk valid for $(\lambda,\beta,\mathcal{P}^n_x,
 \typicalset)$ if for any $q_*\in
 \mathcal{P}^n_x$, there exist a quantized subset $\cTheta(q_*) \subset \Theta$ and a model
 description length $\ctpen(\ctheta|q_*)$ satisfying $\beta$-stronger
  Kraft's inequality such that 
\begin{align*} 
\forall & x^n \in \typicalset,\  \forall y^n\in \rs{Y}^n, \\
&\max_{\theta \in \Theta}
\Bigl\{
d^n_{\lambda}(p_*,p_{\theta})
-
\log \frac{p_*(y^n|x^n)}{p_{\theta}(y^n|x^n)}
- \copen(\theta|x^n)
 \Bigr\}\\
 \le  &
\max_{\ctheta \in \cTheta(q_*)}
\Bigl\{
d^n_{\lambda}(p_*,p_{\ctheta})
-
\log \frac{p_*(y^n|x^n)}{p_{\ctheta}(y^n|x^n)}
- \ctpen(\ctheta|q_*)
\Bigr\}.
\end{align*}
 \end{definition}
Note that both $\cTheta$ and $\ctpen$ can depend on the unknown
distribution $q_*(x^n)$. This is not
problematic because the final penalty $\copen$ does not depend on
the unknown $q_*(x^n)$. A difference from \eqnum{rv2} is the restriction of
the range of $x^n$ onto the typical set.
From here to the next section, we will see how this small change solves
the problems described in the previous section. First, we show what can
be proved for $\epsilon$-risk valid penalties. 
\begin{theorem}[risk bound]\label{mytheorem1}
Define $\condexp$ as a conditional expectation with regard to
 $\bar{p}_*(x^n,y^n)$ given that $x^n\in \typicalset$.  
Let $\beta,\epsilon$ be arbitrary real numbers in $(0,1)$. 
For any $\lambda\in (0,1-\beta]$, if $\copen(\theta|x^n)$ is
 $\epsilon$-risk valid for $(\lambda,\beta,\mathcal{P}^n_x,\typicalset)$, 
\begin{eqnarray}
\condexp d^n_\lambda(p_*,p_{\hat{\theta}})
\le
\condexp
\log \frac{p_*(y^n|x^n)}{\twopc(y^n|x^n)}
+ \frac{1}{\beta} \log \frac{1}{\inprob }.\label{finriskbound}
\end{eqnarray}
\end{theorem}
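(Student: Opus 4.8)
The plan is to reduce the continuous estimator $\hat{\theta}$ to the quantized construction guaranteed by $\epsilon$-risk validity, run the core exponential-moment argument of BC theory on that quantized object, and then pay for the restriction to the typical set with the additive term $\tfrac{1}{\beta}\log(1/\inprob)$. First I would instantiate the defining inequality of $\epsilon$-risk validity at the particular point $\theta=\hat{\theta}(x^n,y^n)$. Since $\log\frac{p_*(y^n|x^n)}{\twopc(y^n|x^n)}=\log\frac{p_*(y^n|x^n)}{p_{\hat{\theta}}(y^n|x^n)}+\copen(\hat{\theta}|x^n)$ by the definition of $\twopc$, the left-hand side of the risk-validity inequality is exactly $d^n_\lambda(p_*,p_{\hat{\theta}})-\log\frac{p_*(y^n|x^n)}{\twopc(y^n|x^n)}$. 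Abbreviating the right-hand side by
\[
 g(x^n,y^n):=\max_{\ctheta\in\cTheta(q_*)}\Bigl\{d^n_\lambda(p_*,p_{\ctheta})-\log\frac{p_*(y^n|x^n)}{p_{\ctheta}(y^n|x^n)}-\ctpen(\ctheta|q_*)\Bigr\},
\]
this shows that on $\typicalset$ we have the pointwise bound $d^n_\lambda(p_*,p_{\hat{\theta}})-\log\frac{p_*}{\twopc}\le g$.

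The central step is the BC-type exponential-moment inequality $E_{\bar{p}_*}[\exp(\beta g)]\le 1$, taken over the full space. I would bound $\exp(\beta g)=\max_\ctheta\exp(\beta(\cdots))\le\sum_\ctheta\exp(\beta(\cdots))$ by a union bound over the quantized grid and take expectation under $\bar{p}_*(x^n,y^n)=q_*(x^n)p_*(y^n|x^n)$. Each summand factors into $\exp(\beta d^n_\lambda(p_*,p_{\ctheta})-\beta\ctpen(\ctheta|q_*))\,E_{\bar{p}_*}[(p_{\ctheta}/p_*)^\beta]$. Because $\beta\le 1-\lambda$, the map $t\mapsto t^{\beta/(1-\lambda)}$ is concave, so Jensen together with the definition of the \renyi divergence yields $E_{\bar{p}_*}[(p_{\ctheta}/p_*)^\beta]\le\exp(-\beta d^n_\lambda(p_*,p_{\ctheta}))$. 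The two \renyi factors cancel, and $\beta$-stronger Kraft's inequality applied to $\ctpen(\cdot|q_*)$ collapses the sum to at most $1$.

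Finally I would transfer this to the conditional expectation. Restricting to $\typicalset$ and using $\exp(\beta g)\ge 0$ gives $\condexp[\exp(\beta g)]=\inprob^{-1}E_{\bar{p}_*}[\exp(\beta g)\mathbf{1}_{\typicalset}]\le\inprob^{-1}$, and a second application of Jensen on the conditional law produces $\condexp[g]\le\tfrac{1}{\beta}\log(1/\inprob)$. Combining with the pointwise bound from the first step, $\condexp[d^n_\lambda(p_*,p_{\hat{\theta}})]=\condexp[\log\frac{p_*}{\twopc}]+\condexp[d^n_\lambda(p_*,p_{\hat{\theta}})-\log\frac{p_*}{\twopc}]\le\condexp[\log\frac{p_*}{\twopc}]+\condexp[g]$, which is the claim. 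I expect the main obstacle to be conceptual rather than computational: isolating $g$ so that the risk-validity hypothesis hands over \emph{exactly} the quantized object on which the original BC kernel runs unchanged, and recognizing that the sole price of conditioning on the typical set is the harmless factor $\inprob^{-1}$ inside $E_{\bar{p}_*}[\exp(\beta g)]$, which Jensen converts into the single additive penalty $\tfrac{1}{\beta}\log(1/\inprob)$. The step most prone to slip is the $\beta$-versus-$(1-\lambda)$ bookkeeping in the concavity estimate, since the hypothesis supplies only $\lambda\in(0,1-\beta]$ rather than equality.
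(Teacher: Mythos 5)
Your proposal is correct and follows essentially the same route as the paper's proof: a union bound over the quantized grid, the exponential-moment identity $E_{\bar{p}_*}[(p_{\ctheta}/p_*)^{\beta}]=\exp(-\beta d^n_{1-\beta}(p_*,p_{\ctheta}))$ combined with monotonicity of the R\'enyi divergence in $\lambda$ (your concavity/Jensen step is exactly the standard proof of that monotonicity), Kraft's inequality, the factor $\inprob^{-1}$ from restricting to the typical set, and a final Jensen step. The only cosmetic difference is that you prove the unconditional bound $E_{\bar{p}_*}[\exp(\beta g)]\le 1$ first and then pass to the conditional expectation, whereas the paper carries the conditional expectation through each term; the content is identical.
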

\begin{theorem}[regret bound]\label{mytheorem2}
 Let $\beta,\epsilon$ be arbitrary real numbers in $(0,1)$.
 For any $\lambda\in (0,1-\beta]$, if $\copen(\theta|x^n)$ is
 $\epsilon$-risk valid for
 $(\lambda,\beta,\mathcal{P}^n_x,\typicalset)$,
  \begin{align} 
&\Pr
\Bigl(
\frac{d^n_\lambda(p_*,p_{\hat{\theta}})}{n}
-
\frac{1}{n}
\log \frac{p_*(y^n|x^n)}{\twopc(y^n|x^n)}
 \ge \tau
\Bigr)
\nonumber\\
\le & \,\,\exp(- n\tau \beta )+1-\inprob .\label{regretbound}
  \end{align}
\end{theorem}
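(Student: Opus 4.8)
The plan is to adapt the Chernoff-type argument behind the original probabilistic bound (Theorem~\ref{bctheorem2}), inserting the $\epsilon$-risk validity step so as to descend from the continuous penalized estimator $\hat{\theta}$ to a fixed quantized grid, and paying for the restriction to the typical set with the additive term $1-\inprob$. First I would rewrite the event inside the probability. Since $-\log\twopc(y^n|x^n)=-\log p_{\hat{\theta}}(y^n|x^n)+\copen(\hat{\theta}|x^n)$, the event $\{d^n_\lambda(p_*,p_{\hat{\theta}})/n-n^{-1}\log(p_*/\twopc)\ge\tau\}$ coincides with $E:=\{d^n_\lambda(p_*,p_{\hat{\theta}})-\log(p_*(y^n|x^n)/p_{\hat{\theta}}(y^n|x^n))-\copen(\hat{\theta}|x^n)\ge n\tau\}$. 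Only $\tau\ge 0$ needs attention, since for $\tau<0$ the right side of \eqnum{regretbound} exceeds $1$ and the bound is trivial. I then split on the typical set, $\Pr(E)=\Pr(E,\,x^n\in\typicalset)+\Pr(E,\,x^n\notin\typicalset)\le\Pr(E,\,x^n\in\typicalset)+(1-\inprob)$, so that it remains to prove $\Pr(E,\,x^n\in\typicalset)\le e^{-n\tau\beta}$.

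For the typical-set term I would invoke Definition~\ref{epsilonriskvalid} with $q_*$ taken to be the true marginal of $x^n$, which furnishes a quantized set $\cTheta(q_*)$ and a model description length $\ctpen(\cdot|q_*)$ obeying $\beta$-stronger Kraft's inequality, uniformly over all $x^n\in\typicalset$. On $E\cap\{x^n\in\typicalset\}$ the bracket inside the left-hand maximum of that definition, evaluated at $\theta=\hat{\theta}$, is exactly the left side of $E$ and is thus at least $n\tau$; hence the left-hand maximum, and therefore (by $\epsilon$-risk validity) the right-hand maximum, is at least $n\tau$, i.e.\ there is some $\ctheta\in\cTheta(q_*)$ with $d^n_\lambda(p_*,p_{\ctheta})-\log(p_*/p_{\ctheta})-\ctpen(\ctheta|q_*)\ge n\tau$. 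Dropping the typical-set indicator and applying the union bound over the fixed, realization-independent grid $\cTheta(q_*)$, the task reduces to bounding, for each fixed $\ctheta$, the quantity $\Pr(d^n_\lambda(p_*,p_{\ctheta})-\log(p_*/p_{\ctheta})-\ctpen(\ctheta|q_*)\ge n\tau)$.

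For this last piece I would apply Markov's inequality to the nonnegative variable $(p_{\ctheta}(y^n|x^n)/p_*(y^n|x^n))^{1-\lambda}$ under $\bar{p}_*$. Because the definition \eqnum{renyidiv} gives $E_{\bar{p}_*}(p_{\ctheta}/p_*)^{1-\lambda}=\exp(-(1-\lambda)d^n_\lambda(p_*,p_{\ctheta}))$, the \renyi divergence terms cancel exactly and the per-grid-point probability is at most $\exp(-(1-\lambda)n\tau)\exp(-(1-\lambda)\ctpen(\ctheta|q_*))$. Summing over $\ctheta$ and using $1-\lambda\ge\beta$ together with $\ctpen\ge 0$ gives $\sum_{\ctheta}e^{-(1-\lambda)\ctpen(\ctheta|q_*)}\le\sum_{\ctheta}e^{-\beta\ctpen(\ctheta|q_*)}\le 1$ by $\beta$-stronger Kraft, while $e^{-(1-\lambda)n\tau}\le e^{-\beta n\tau}$ for $\tau\ge 0$. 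Combining yields $\Pr(E,\,x^n\in\typicalset)\le e^{-n\tau\beta}$ and hence \eqnum{regretbound}.

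I expect the main obstacle to be the reduction via $\epsilon$-risk validity, not the Markov/Kraft computation, which is the standard BC manipulation. The delicate point is that $\cTheta(q_*)$ and $\ctpen(\cdot|q_*)$ must be held fixed across all $x^n\in\typicalset$, so that the union bound and the subsequent Markov step operate over a grid that does not depend on the realized data; it is precisely the confinement of $x^n$ to the typical set that makes the existence of such a fixed grid tenable, at the cost of the $1-\inprob$ term. Checking that this reduction is valid and loses nothing beyond that additive term is where I would concentrate the care.
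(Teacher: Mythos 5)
Your proof is correct, and it reaches the bound by a mildly different route than the paper's. The architecture is shared: split off the atypical set at cost $1-\inprob$, invoke $\epsilon$-risk validity at the true $q_*$ to pass from $\hat{\theta}$ to the fixed grid $\cTheta(q_*)$, then combine a union bound, a Markov/Chernoff step, and $\beta$-stronger Kraft's inequality. The execution of the middle step differs. The paper recycles the exponential-moment inequality \eqnum{coroused} from the proof of Theorem \ref{mytheorem1}, which is built with exponent $\beta$ and therefore needs both the inequality $\condexp[A]\le E_{\bar{p}_*}[A]/\inprob$ and the monotonicity of $d^n_{\lambda}$ in $\lambda$ (to pass from $d^n_{1-\beta}$ to $d^n_{\lambda}$); Markov's inequality is then applied conditionally on $\typicalset$, and the resulting $1/\inprob$ factor is cancelled by the weight $\inprob$ in the final decomposition. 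You instead bound the joint probability $\Pr(E,\,x^n\in\typicalset)$ directly and apply Markov per grid point with exponent $1-\lambda$, so the \renyi term cancels exactly and neither the monotonicity argument nor the $1/\inprob$ bookkeeping appears; the price is the extra (correct) observations that $\tau<0$ is trivial and that $\ctpen(\ctheta|q_*)\ge 0$ --- which indeed follows from $\beta$-stronger Kraft's inequality, since each summand $\exp(-\beta\ctpen(\ctheta|q_*))$ must be at most $1$. The one step worth tightening is the passage from ``the maximum over $\cTheta(q_*)$ is at least $n\tau$'' to ``some grid point attains at least $n\tau$'', which presumes the supremum is attained (otherwise a routine $\epsilon'$-slack argument is needed); the paper sidesteps this by bounding $\exp\bigl(\beta\max_{\ctheta}(\cdot)\bigr)\le\sum_{\ctheta}\exp\bigl(\beta(\cdot)\bigr)$, which requires no attainment. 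Neither difference affects correctness, and your variant is arguably the cleaner self-contained proof, while the paper's has the advantage of reusing \eqnum{coroused} as the common engine for both Theorems \ref{mytheorem1} and \ref{mytheorem2}.
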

A proof of Theorem \ref{mytheorem1} is described in Section \ref{mytheorem1proof},
while a proof of Theorem \ref{mytheorem2} is described in Section
\ref{mytheorem2proof}.
Note that both bounds become tightest when $\lambda=1-\beta$ because
the \renyi divergence $d^n_{\lambda}(p,r)$ is monotonically increasing in
terms of $\lambda$ (see \cite{grunwald07} for example).
We call the quantity $-\log (1/p_2(y^n|x^n))-(-\log
(1/p_*(y^n|x^n)))$ in Theorem \ref{mytheorem2} `regret' of the two-stage
code $p_2$ on the given data
$(x^n,y^n)$ in this paper, though the ordinary regret is defined as the
codelength difference from $\log(1/p_{\hat{\theta}_{mle}}(y^n|x^n))$,
where $\hat{\theta}_{mle}$ denotes the maximum likelihood estimator.  
Compared to the usual BC
theory, there is an additional term $(1/\beta)\log (1/\inprob)$ in the
risk bound (\ref{finriskbound}). Due to the property of the typical set, this term decreases
to zero as $n\rightarrow \infty$.
Therefore, the first term is the main term, which has a form of
redundancy of two-stage code like the quantized case. Hence, this theorem gives a
justification of the MDL principle in supervised learning.
Note that, however, 
$-\log p_2(y^n|x^n)$ needs to satisfy Kraft's inequality in order to
interpret the main term as a conditional redundancy exactly.
A sufficient conditions for this was introduced by \cite{Chatterjee2014} and is called
`codelength validity'.
\begin{definition}[codelength validity]\label{codelengthvalid}
We say that $\copen(\theta|x^n)$ is codelength valid if
 there exist a quantized subset $\cTheta(x^n) \subset \Theta$ and a
 model description length $\ctpen(\ctheta|x^n)$ satisfying Kraft's inequality such that 
\begin{align} 
\forall y^n\in \rs{Y}^n,\ &
\max_{\theta \in \Theta}
\Bigl\{
-
\log \frac{p_*(y^n|x^n)}{p_{\theta}(y^n|x^n)}
- \copen(\theta|x^n)
\Bigr\}\nonumber\\
 \le
&\max_{\ctheta \in \cTheta(x^n)}
\Bigl\{
-
\log \frac{p_*(y^n|x^n)}{p_{\ctheta}(y^n|x^n)}
- \ctpen(\ctheta|x^n)
\Bigr\} \label{cvcond}
\end{align}
for each $x^n$. 
\end{definition}
We note that both the quantization and the model description length on
it depend on $x^n$ in contrast to the $\epsilon$-risk validity.
This is because the fixed design setting suffices to justify the
redundancy interpretation.
Let us see that $-\log p_2(y|x)$ can be exactly interpreted as a codelength if
$\copen(\theta|x^n)$ is codelength valid.
First, we assume that $\rs{Y}$, the range of $y$, is discrete. 
For each $x^n$, we have
 \begin{eqnarray*}
  \lefteqn{\sum_{y^n\in \rs{Y}^n}\exp\left(-(-\log \twopc(y^n|x^n))\right)}\\
   &=&\sum_{y^n}\exp\left(\max_{\theta\in \Theta}\left\{\log p_{\theta}(y^n|x^n)-\copen(\theta|x^n)\right\}\right)\\
   &\le&\sum_{y^n}\exp\left(\max_{\ctheta\in \cTheta(x^n)}\left\{\log p_{\ctheta}(y^n|x^n)-\ctpen(\ctheta|x^n)\right\}\right)\\
   &\le&\sum_{y^n}\sum_{\ctheta\in \cTheta(x^n)}\exp\left(\log p_{\ctheta}(y^n|x^n)-\ctpen(\ctheta|x^n)\right)\\
   &=&\sum_{\ctheta\in
    \cTheta(x^n)}\exp\left(-\ctpen(\ctheta|x^n)\right)\sum_{y^n}p_{\ctheta}(y^n|x^n)\le 1.
 \end{eqnarray*}
Hence, $-\log p_2(y^n|x^n)$ can be exactly interpreted as
  a codelength of a prefix code. Next, we consider the case where $\rs{Y}$ is a
  continuous space. The above inequality trivially holds by replacing the
  sum with respect to $y^n$ with an integral. Thus, $p_2(y^n|x^n)$ is
  guaranteed to be a sub-probability density function.
  Needless to say, $-\log p_2(y^n|x^n)$ cannot be interpreted as a
  codelength  as itself in continuous cases. As is well
  known, however, a difference $(-\log p_2(y^n|x^n))-(-\log
  p_*(y^n|x^n))$ can be exactly interpreted as a codelength difference
  by way of quantization. See Section III of \cite{Barron1991} for details.
This indicates that both the redundancy interpretation of the fist term
  of (\ref{finriskbound}) and the regret interpretation of the
  (negative) second term in the left side of the inequality in the first
  line of (\ref{regretbound}) are justified by the codelength validity. 
Note that, however, the $\epsilon$-risk validity does not imply the
  codelength validity and vice versa in general.

We discuss about the conditional expectation in the risk bound
(\ref{finriskbound}). This conditional expectation seems to be hard to
be replaced with the usual (unconditional) expectation. The main
difficulty arises from the unboundedness of the loss function. Indeed, we
can immediately show a similar risk bound with unconditional expectation for
bounded loss functions. As an example, let
us consider a class of divergence, called $\alpha$-divergence
\cite{Cichocki2010}
\begin{eqnarray}
 &&\hskip-4mm\rs{D}^n_{\alpha}(p,r):=\nonumber\\
 &&\hskip-4mm\frac{4}{1-\alpha^2}\int\bigg(1-\left(\frac{r(y^n|x^n)}{p(y^n|x^n)}\right)^{\frac{1+\alpha}{2}}\bigg)q_*(x^n)p(y^n|x^n)dx^ndy^n.\nonumber\\\label{alphadiv}
\end{eqnarray}
The $\alpha$-divergence approaches KL divergence as $\alpha\rightarrow
\pm 1$ \cite{amarinagaoka00}. More exactly,
\begin{equation}
 \lim_{\alpha\rightarrow -1} \rs{D}^n_{\alpha}(p,r)=\rs{D}^n(p,r),\quad
 \lim_{\alpha\rightarrow 1} \rs{D}^n_{\alpha}(p,r)=\rs{D}^n(r,p). \label{alphaKL}
\end{equation}
We also note that the $\alpha$-divergence with $\alpha=0$ is four times the
 squared Hellinger distance
\begin{eqnarray}
 \lefteqn{d^{2,n}_H(p,r)=}\nonumber \\
&&\hskip-1.3cm  \int\!\!\!
 \left(\sqrt{p(y^n|x^n)}-\sqrt{r(y^n|x^n)}\right)^2\!\!q_*(x^n)p(y^n|x^n)dx^ndy^n,
 \label{hellinger}
\end{eqnarray}
 which has been studied and used in statistics for a long time.
 We focus here on the following two properties of $\alpha$-divergence:
\begin{itemize}
 \item[(i)] The $\alpha$-divergence is always bounded:
\begin{equation}
\rs{D}^n_{\alpha}(p,r)\in [0,4/(1-\alpha^2)] \label{boundedalpha}
\end{equation}
	    for any $p,r$ and $\alpha\in (-1,1)$.
 \item[(ii)] The $\alpha$-divergence is bounded by the \renyi divergence as
\begin{equation}
d^n_{(1-\alpha)/2}(p,r)\ge 
\frac{1-\alpha}{2}\rs{D}^n_{\alpha}(p,r)  \label{alpharenyi}
\end{equation}
for any $p,r$ and $\alpha\in (-1,1)$. See \cite{takeuchi14} for its
	     proof.
\end{itemize}
As a corollary of Theorem \ref{mytheorem1}, we obtain the
following risk bound.
 \begin{corollary}\label{coro1}
 Let $\beta,\epsilon$ be arbitrary real numbers in $(0,1)$.
Define a function $\lambda(t):=(1-t)/2$. 
 For any $\alpha\in [2\beta-1,1)$, if $\copen(\theta|x^n)$ is
 $\epsilon$-risk valid for $(\lambda(\alpha),\beta,\mathcal{P}^n_x,\typicalset)$ and
 $p_2(y^n|x^n)$ is a sub-probability distribution,
\begin{eqnarray*}
  E_{\bar{p}_*}[\rs{D}^n_{\alpha}(p_*,p_{\hat{\theta}})]&\le&
   \frac{1}{\lambda(\alpha)}
   E_{\bar{p}_*}\left[
     \log \frac{p_*(y^n|x^n)}{\twopc(y^n|x^n)}
    \right]\\
 &&+ \frac{\inprob}{\lambda(\alpha)\beta}\log \frac{1}{\inprob }
  +\frac{(1-\inprob)}{\lambda(\alpha)(\lambda(\alpha)+\alpha)},
\end{eqnarray*}
 In particular, taking $\beta =(\alpha+1)/2$ yields the tightest bound
 \begin{eqnarray}
  E_{\bar{p}_*}[\rs{D}^n_{\alpha}(p_*,p_{\hat{\theta}})]&\le&
   \frac{1}{\lambda(\alpha)}
   E_{\bar{p}_*}\left[
     \log \frac{p_*(y^n|x^n)}{\twopc(y^n|x^n)}
    \right]\nonumber \\
 &&+ \frac{\inprob}{\lambda(\alpha)(\lambda(\alpha)+\alpha)}\log \frac{1}{\inprob }\nonumber\\
&& +\frac{(1-\inprob)}{\lambda(\alpha)(\lambda(\alpha)+\alpha)}.\label{alphabound}
\end{eqnarray}
 \end{corollary}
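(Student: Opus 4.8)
The plan is to lift the \emph{conditional} risk bound of Theorem \ref{mytheorem1} to the \emph{unconditional} one by splitting the expectation according to whether $x^n$ falls in the typical set and then exploiting the two listed properties of the $\alpha$-divergence. First I would write
\begin{align*}
E_{\bar{p}_*}[\rs{D}^n_{\alpha}(p_*,p_{\hat{\theta}})]
&= \inprob\,\condexp[\rs{D}^n_{\alpha}(p_*,p_{\hat{\theta}})]\\
&\quad + (1-\inprob)\,E[\rs{D}^n_{\alpha}(p_*,p_{\hat{\theta}}) \mid x^n\notin \typicalset],
\end{align*}
and bound the two terms separately.

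On the typical-set term I would apply property (ii) pointwise at $\theta=\hat{\theta}$, giving $\lambda(\alpha)\rs{D}^n_{\alpha}(p_*,p_{\hat{\theta}})\le d^n_{\lambda(\alpha)}(p_*,p_{\hat{\theta}})$ and thereby reducing the $\alpha$-divergence to the \renyi divergence. The constraint $\alpha\in[2\beta-1,1)$ is exactly what forces $\lambda(\alpha)=(1-\alpha)/2$ into $(0,1-\beta]$, so Theorem \ref{mytheorem1} applies and bounds $\condexp d^n_{\lambda(\alpha)}(p_*,p_{\hat{\theta}})$ by $\condexp\log(p_*/\twopc)+(1/\beta)\log(1/\inprob)$. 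On the complement term I would use the boundedness property (i), $\rs{D}^n_{\alpha}\le 4/(1-\alpha^2)$; the elementary identity $4/(1-\alpha^2)=1/(\lambda(\alpha)(\lambda(\alpha)+\alpha))$ then rewrites the complement contribution $(1-\inprob)\cdot 4/(1-\alpha^2)$ as precisely the last term of the claimed bound.

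The only genuine subtlety is that Theorem \ref{mytheorem1} yields the conditional regret $\inprob\,\condexp[\log(p_*/\twopc)]$, whereas the corollary states the full unconditional regret $E_{\bar{p}_*}[\log(p_*/\twopc)]$. Here the sub-probability hypothesis on $\twopc$ is essential: for each fixed $x^n$, $\int\twopc(y^n|x^n)\,dy^n\le 1$ forces $E_{p_*(y^n|x^n)}[\log(p_*/\twopc)]\ge 0$, since this integral equals a proper KL divergence minus the logarithm of a normalizer no larger than one. Hence the complement part of $E_{\bar{p}_*}[\log(p_*/\twopc)]$ is non-negative, so $\inprob\,\condexp[\log(p_*/\twopc)]\le E_{\bar{p}_*}[\log(p_*/\twopc)]$ and discarding it only loosens the bound. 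Assembling the three estimates gives the first displayed inequality. The ``in particular'' statement then follows by taking $\beta$ as large as the constraint $\alpha\ge 2\beta-1$ permits, namely $\beta=(\alpha+1)/2=\lambda(\alpha)+\alpha$, which minimizes the coefficient $1/(\lambda(\alpha)\beta)$ of $\log(1/\inprob)$ and makes it coincide with $1/(\lambda(\alpha)(\lambda(\alpha)+\alpha))$.

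I expect the conditional-to-unconditional conversion to be the main obstacle. Because the \renyi loss is unbounded, Theorem \ref{mytheorem1} can only control the typical-set average; the whole reason for passing to the $\alpha$-divergence is that its boundedness (i) tames the atypical event while the sub-probability property prevents the regret term from acquiring a negative contribution on the complement. Once these two properties are in place, the remainder is routine substitution together with the algebraic identity above.
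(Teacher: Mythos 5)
Your proposal is correct and follows essentially the same route as the paper's proof: split the expectation over the typical set and its complement, use the boundedness of the $\alpha$-divergence on the complement and the \renyi-divergence comparison plus Theorem \ref{mytheorem1} on the typical set, and invoke the sub-probability property of $\twopc$ to pass from the conditional to the unconditional regret term. The identity $4/(1-\alpha^2)=1/(\lambda(\alpha)(\lambda(\alpha)+\alpha))$ and the choice $\beta=(\alpha+1)/2$ as the largest admissible value are exactly as in the paper.
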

 Its proof will be described in Section \ref{coro1proof}.
Though it is not so obvious when the condition ``$p_2(y^n|x^n)$ is a
 sub-probability distribution'' is satisfied, we remark that the codelength validity of
 $\copen(\theta|x^n)$ is its simple sufficient condition. The second and
the third terms of the right side vanish as $n\rightarrow \infty$ due
to the property of the typical set. The boundedness of loss function is indispensable for the proof. On the other hand,
it seems to be impossible to bound the risk for unbounded loss
functions. Our remedy for this issue is the risk evaluation based on the
conditional expectation on the typical set. Because $x^n$ lies out of
$\typicalset$ with small probability, the conditional expectation is
likely to capture the expectation of almost all cases.  
In spite of this fact, if one wants to remove the unnatural conditional
expectation, Theorem \ref{mytheorem2} offers a more satisfactory
bound. Note that the right side of \eqnum{regretbound} also approaches
to zero as $n\rightarrow \infty$.

We remark the relationship of our result with KL
divergence $\rs{D}^n(p,r)$. Because of (\ref{renyiKL}) or
(\ref{alphaKL}), it seems to be possible to obtain a risk bound with KL
divergence. However, it is impossible because taking $\lambda\rightarrow
1$ in (\ref{finriskbound}) or $\alpha\rightarrow \pm 1$ in
(\ref{alphabound}) makes the bounds diverge to the infinity.
That is, we cannot derive a risk bound for the risk with KL divergence by BC
theory, though we can do it for the \renyi divergence and the
$\alpha$-divergence.
It sounds somewhat strange because KL divergence seems to be
related the most to the notion of the MDL principle because it has a
clear information theoretical interpretation.
This issue originates from the original BC theory and has been casted as
an open problem for a long time.

Finally, we remark that the effectiveness of our proposal in real
situations depends on whether we can show the risk validity of the
target penalty and derive a sufficiently small bound for $\log(1/\inprob)$ and $1-\inprob$. Actually, much effort is required to realize them for lasso. 
\subsection{Risk Bound of Lasso in Random Design}\label{lasso}
In this section, we apply the approach in the previous section to lasso
and derive new risk and regret bounds. In a setting of lasso, training data
$\{(x_i,y_i)\in \Re^p\times \Re|i=1,2,\cdots, n\}$ obey a usual regression
model $y_i=x_i^T\theta^*+\epsilon_i$ for $i=1,2,\cdots, n$, where
$\theta^*$ is a true parameter and $\epsilon_i$ is a Gaussian
noise having zero mean and a known variance $\sigma^2$. By introducing
$Y:=(y_1,y_2,\cdots, y^n)^T$, $\rs{E}:=(\epsilon_1,\epsilon_2,\cdots,
\epsilon_n)^T$ and an $n\times p$ matrix $X:=[x_1\ x_2\ \cdots x_n]^T$, we 
have a vector/matrix expression of the regression model
$Y=X\theta^*+\rs{E}$. The parameter space $\Theta$ is
$\Re^p$.
The dimension $p$ of parameter
$\theta$ can be greater than $n$.
 The lasso estimator is defined by
\begin{equation}
  \hat{\theta}(x^n,y^n):=\arg \min_{\theta\in
   \Theta}\left\{\frac{1}{2n\sigma^2}\|Y-X\theta\|_2^2+\mu_1\|\theta\|_{w,1}\right\}, \label{lassoestimator}
\end{equation}
where $\mu_1$ is a positive real number (regularization coefficient).
Note that the weighted $\ell_1$ norm is used in (\ref{lassoestimator}), 
though the original lasso was defined with the usual $\ell_1$ norm in
\cite{tibshirani96a}. As explained in Section \ref{bctheory},
$\hat{\theta}$ corresponds to the usual lasso with `column
normalization'. When $x^n$ is Gaussian with zero mean, we can derive a risk valid weighted
$\ell_1$ penalty by choosing an appropriate typical set. 
    \begin{lemma}\label{erv4lasso}
     For any $\epsilon\in (0,1)$, define
      \begin{eqnarray}
       \mathcal{P}^n_x\!\!\!&:=&\!\!\!\{q(x^n)=\Pi_{i=1}^nN(x_i|\mib{0},\Sigma)| \mbox{
	non-singular }\Sigma\},\nonumber \\
       \typicalset\!\!\!&:=&\!\!\!\Bigl\{x^n\,\Bigl|\,\forall j, 1-\epsilon\le \frac{(1/n)\sum_{i=1}^nx_{ij}^2}{\Sigma_{jj}}\le 1+\epsilon\Bigr\},\label{tset}
      \end{eqnarray}
where $N(x|\mu,\Sigma)$ is a Gaussian distribution with a mean vector $\mu$ and a
  covariance matrix $\Sigma$. Here, $\Sigma_{jj}$ denotes the
    $j$th diagonal element of $\Sigma$
    and $x_{ij}$ denotes the $j$th
     element of $x_i$.
     Assume a linear regression setting:
     \begin{eqnarray*}
      p_*(y^n|x^n)&=&\Pi_{i=1}^nN(y_i|x_i^T\theta^*,\sigma^2),     \\
     p_{\theta}(y^n|x^n)&=&\Pi_{i=1}^nN(y_i|x_i^T\theta,\sigma^2).     
     \end{eqnarray*}
Let $\beta$ be a real number in $(0,1)$ and $\lambda$ be a real number
  in $(0,1-\beta]$.
    The weighted $\ell_1$ penalty $\copen_1(\theta|x^n)=\mu_1\|\theta\|_{w,1}+\mu_2$ is $\epsilon$-risk
  valid for $(\lambda,\beta,\mathcal{P}^n_x,\typicalset)$ if
  \begin{eqnarray}
\mu_1\ge 
   \sqrt{
 \frac{n\log 4p}{\beta\sigma^2(1-\epsilon)}\cdot
\frac{\lambda+8\sqrt{1-\epsilon^2}}{4}
},\quad \mu_2\ge \frac{\log 2}{\beta}. \label{rvcondition}
  \end{eqnarray}
   \end{lemma}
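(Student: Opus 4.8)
The plan is to establish $\epsilon$-risk validity in the sense of Definition \ref{epsilonriskvalid} by the randomized-quantization device of Chatterjee and Barron, carried out in the random-design geometry. First I would evaluate the random-design \renyi divergence in closed form: integrating over the Gaussian noise and then over $x_i\sim N(\mib{0},\Sigma)$ gives
\[
 d^n_\lambda(p_*,p_\theta)=\frac{n}{2(1-\lambda)}\log\!\Bigl(1+\frac{(1-\lambda)\lambda}{\sigma^2}(\theta-\theta^*)^T\Sigma(\theta-\theta^*)\Bigr)=:g\bigl(q(\theta)\bigr),
\]
where $q(\theta):=(\theta-\theta^*)^T\Sigma(\theta-\theta^*)$ and $g(t):=\frac{n}{2(1-\lambda)}\log(1+ct)$ with $c:=(1-\lambda)\lambda/\sigma^2$, whereas the fixed-design divergence is the quadratic $\frac{\lambda}{2\sigma^2}\|X(\theta-\theta^*)\|_2^2$. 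I would then invoke the equivalent form \eqnum{rv2} of risk validity, but with $x^n$ restricted to $\typicalset$: it suffices to produce, for each $\Sigma$ (equivalently each $q_*$), a grid $\cTheta(q_*)\subset\Re^p$ and a codelength $\ctpen(\ctheta|q_*)$ obeying $\beta$-stronger Kraft such that $\min_{\ctheta}H(\theta,\ctheta,x^n,y^n)\le \copen_1(\theta|x^n)$ for every $x^n\in\typicalset$, every $y^n$ and every $\theta$.

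Next I would fix $\Sigma$ and take a product grid on $\Re^p$ with coordinate spacings to be optimized, equipped with a prefix code that names the support, the signs, and two-sided-geometric integer magnitudes of the active coordinates; this makes $\ctpen(\ctheta|q_*)$ satisfy $\beta$-stronger Kraft and yields an expected length that is linear in the grid indices. Note that $\cTheta(q_*)$ and $\ctpen$ depend on $q_*$ only through $\Sigma$ and never on $x^n$, as Condition \ref{criticalcondition} demands. I would then randomize $\ctheta$ about $\theta$ coordinatewise and independently with $E_\ctheta[\ctheta]=\theta$ (the mean-preserving two-point rule between the nearest grid points, so that $E[u_j^2]$ is at most a quarter of the squared spacing, with $u:=\theta-\ctheta$), and bound $\min_\ctheta H\le E_\ctheta H$. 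Expanding the log-likelihood ratio as
\[
 \log\frac{p_\theta(y^n|x^n)}{p_\ctheta(y^n|x^n)}=\frac{1}{\sigma^2}\sum_{i=1}^n r_i\,x_i^Tu+\frac{1}{2\sigma^2}\|Xu\|_2^2,\qquad r_i:=y_i-x_i^T\theta,
\]
the mean-zero property $E_\ctheta[u]=\mib{0}$ annihilates the term linear in the residuals $r_i$, so $E_\ctheta H$ becomes free of $y^n$ and equals $[\,g(q(\theta))-E_\ctheta g(q(\ctheta))\,]+\frac{1}{2\sigma^2}E\|Xu\|_2^2+E_\ctheta\ctpen(\ctheta|q_*)$.

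The hard part is the bracketed difference of divergences. In the fixed-design theory it telescopes exactly, because the divergence is quadratic, leaving only a harmless $-\frac{\lambda}{2\sigma^2}E\|Xu\|_2^2$; here $g$ is strictly concave, so no exact cancellation occurs and the crude bound grows quadratically in $\theta$, swamping the linear penalty. I would instead expand $g$ to second order about $q(\theta)$: the first-order term is $-g'(q(\theta))\,E[u^T\Sigma u]\le 0$ by concavity and may be discarded, while the remainder is governed by $\tfrac12 E[\,|g''(\zeta)|\,\Delta^2\,]$ with $\Delta:=q(\ctheta)-q(\theta)=-2(\theta-\theta^*)^T\Sigma u+u^T\Sigma u$. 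The dangerous piece of $\Delta^2\le 8\bigl((\theta-\theta^*)^T\Sigma u\bigr)^2+2(u^T\Sigma u)^2$ is the cross term, which I would tame by the $\Sigma$-Cauchy--Schwarz inequality $\bigl((\theta-\theta^*)^T\Sigma u\bigr)^2\le q(\theta)\,(u^T\Sigma u)$ together with the decay of $|g''|$, the relevant uniform estimate being $\sup_{t\ge0}t\,|g''(t)|=\frac{n\lambda}{8\sigma^2}$. This converts the correction into a multiple of $E[u^T\Sigma u]=\sum_j\Sigma_{jj}E[u_j^2]$ rather than of $q(\theta)$, and is precisely where the constant $8\sqrt{1-\epsilon^2}$ sitting beside $\lambda$ in \eqnum{rvcondition} originates: the factor $8$ is inherited from the elementary split of $\Delta^2$, while $\sqrt{1-\epsilon^2}=\sqrt{(1-\epsilon)(1+\epsilon)}$ emerges when the population covariances are matched to the empirical weights on the typical set. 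I expect this non-convexity step to be the main obstacle, since it is what forces the delicate Cauchy--Schwarz-plus-$\sup_t t|g''(t)|$ argument in place of the clean quadratic cancellation of the fixed-design case, and it is also the step most sensitive to how tightly $\zeta$ stays near $q(\theta)$.

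Finally I would collect the surviving terms, each of the form $\sum_j(\text{coeff})\,E[u_j^2]$, and use the defining property \eqnum{tset} of the typical set, namely $(1-\epsilon)\Sigma_{jj}\le w_j^2\le(1+\epsilon)\Sigma_{jj}$ with $w_j^2=\frac1n\sum_i x_{ij}^2$, to rewrite every population $\Sigma_{jj}$ through the empirical weight $w_j^2$ appearing in $\|\theta\|_{w,1}$; the variance term contributes $\frac{n}{2\sigma^2}\sum_j w_j^2 E[u_j^2]$ exactly. For a coordinate at its nearest nonzero grid value the per-coordinate budget then reads $\mu_1 w_j s_j\ge \frac{1}{\beta}\log(4p)+A\,w_j^2 s_j^2$, where $\frac{1}{\beta}\log(4p)$ is the support-sign-stop cost of one active coordinate under $\beta$-stronger Kraft, $s_j$ is the spacing, and $A$ aggregates the variance and non-convexity coefficients after the typical-set conversion. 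Maximizing the slack $\mu_1 w_j s_j-A w_j^2 s_j^2$ over $s_j$ yields $\mu_1^2/(4A)$, so the constraint is met as soon as $\mu_1^2\ge 4A\log(4p)/\beta$, which reproduces the first inequality in \eqnum{rvcondition} with $A$ carrying the factor $(\lambda+8\sqrt{1-\epsilon^2})/\bigl(4(1-\epsilon)\bigr)$; the residual constant cost of encoding the empty support gives $\mu_2\ge(\log2)/\beta$. Since $\cTheta$ and $\ctpen$ were built from $\Sigma$ alone, the resulting $\copen_1(\theta|x^n)=\mu_1\|\theta\|_{w,1}+\mu_2$ is $x^n$-free in its parametric form and the lemma follows.
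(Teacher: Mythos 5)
Your overall architecture matches the paper's: the Chatterjee--Barron randomized quantization on a grid scaled by $\sqrt{\Sigma_{jj}}$, unbiasedness killing the $y^n$-dependent linear term, the typical-set conversion $\sqrt{1-\epsilon}\,w_j^*\le w_j\le\sqrt{1+\epsilon}\,w_j^*$, and optimization over the spacing; you also correctly obtain the closed form $d^n_\lambda=g(q(\theta))$ and identify its non-convexity as the crux. The gap is in your treatment of the loss-variation part. You expand the scalar $g$ about $q(\theta)$, discard the first-order term, and control the remainder via Cauchy--Schwarz and $\sup_{t\ge 0}t\,|g''(t)|=\frac{n\lambda}{8\sigma^2}$. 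Two things go wrong. First, the substitution $|g''(\zeta)|\,q(\theta)\le\sup_t t\,|g''(t)|$ requires the mean-value point $\zeta$ to be at least of order $q(\theta)$; since $|g''|$ is decreasing and $\zeta$ lies between $q(\theta)$ and $q(\ctheta)$, which can be much smaller (e.g.\ when $\ctheta$ lands near $\theta^*$), this step is not justified as written --- you flag the sensitivity yourself but do not resolve it. Second, even granting it, the cross term gives $\tfrac12\cdot 8\cdot\frac{n\lambda}{8\sigma^2}E[u^T\Sigma u]=\frac{n\lambda}{2\sigma^2}E[u^T\Sigma u]$ (with $u=\theta-\ctheta$), which is eight times the $\frac{n\lambda}{16\sigma^2}E[u^T\Sigma u]$ needed, and you still carry an unabsorbed $E[(u^T\Sigma u)^2]$ remainder; your route therefore yields at best a condition of the form $\mu_1^2\gtrsim\frac{n\log 4p}{\beta\sigma^2(1-\epsilon)}\cdot\frac{8\lambda+8\sqrt{1-\epsilon^2}}{4}$, not \eqnum{rvcondition}. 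Relatedly, your attribution of the constants is off: the $8\sqrt{1-\epsilon^2}$ does not come from the split of $\Delta^2$; it is the likelihood-variance term $\frac{n}{2\sigma^2}\sum_jw_j^2E[u_j^2]$ measured against the loss-variation coefficient $\lambda/16$, while the loss variation contributes the bare $\lambda$.

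The paper avoids both problems by Taylor-expanding in $\theta$ itself and proving the uniform matrix bound $-\partial^2d_{\lambda}(p_*,p_\theta)/\partial\theta\partial\theta^T\preceq\frac{\lambda}{8\sigma^2}\Sigma$ (Lemma \ref{hessianlemma}). There the negative-semidefinite part of the Hessian coming from $g'$, namely $-2g'(q)\Sigma$, is retained and partially cancels the rank-one part coming from $g''$ before any maximization: one maximizes $f(t)=c(t-c)/(c+t)^2$, whose supremum is $1/8$, rather than $4t\,|g''(t)|$ alone, whose supremum is four times larger. That cancellation is exactly what you throw away when you discard the first-order term, and because the resulting Hessian bound is uniform in the expansion point there is no $\zeta$ issue. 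To repair your argument you would need to keep the $g'$ contribution and carry out the joint optimization (at which point you have essentially re-derived Lemma \ref{hessianlemma}), and you should also replace the quarter-squared-spacing moment bound by $E[u_j^2]\le\delta|\theta_j|/w_j^*$, which is what makes every surviving term proportional to $\|\theta\|_{w^*,1}$ (hence dominated by a weighted $\ell_1$ penalty) rather than to $p$ times a squared spacing.
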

   We describe its proof in Section \ref{erv4lassoproof}.
   The derivation
is much more complicated and requires more techniques, compared to the
fixed design case in \cite{chatterjeebarron14}. This is because the
\renyi divergence is a usual mean square error (MSE) in the fixed design
case, while it is not in the random design case in general. In addition,
it is important for the risk bound derivation to choose an appropriate
typical set in a sense that we can show that $\inprob$ approaches to one
sufficiently fast and we can also show the $\epsilon$-risk validity of the target
penalty with the chosen typical set. In case of lasso with normal
design, the typical set $\typicalset$ defined in (\ref{tset}) satisfies
such properties.

Let us compare the coefficient of the risk valid weighted $\ell_1$ penalty with the
fixed design case in \cite{chatterjeebarron14}.
They showed that the weighted $\ell_1$ norm satisfying
\begin{equation}
 \mu_1\ge \sqrt{\frac{2n\log 4p}{\sigma^2}},\quad \mu_2\ge \frac{\log
  2}{\beta} \label{riskvalidfixed}
\end{equation}
is risk valid in the fixed design case. The condition for $\mu_2$ is the
same, while the condition for $\mu_1$ in (\ref{rvcondition}) is more
strict than that of the fixed design case.
We compare them by taking $\beta=1-\lambda$ (the tightest choice) and
$\epsilon=0$ in (\ref{rvcondition}) because $\epsilon$ can be
negligibly small for sufficiently large $n$. The minimum $\mu_1$ for the risk validity in
the random design case is  
\[
 \sqrt{\frac{\lambda+8}{8(1-\lambda)}}
\]
times that for the fixed design case.
Hence, the smallest value of regularization coefficient $\mu_1$ for
which the risk bound holds in the random design is always larger than that of
the fixed design case for any $\lambda\in (0,1)$ but its extent is not
so large unless $\lambda$ is extremely close to $1$ (See Fig. \ref{compfixed}).
\begin{figure}
\centering
 \includegraphics[width=2.7in]{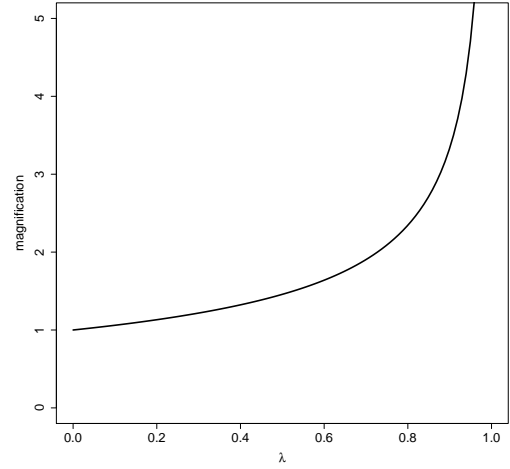}
 \caption{Plot of $\sqrt{(\lambda+8)/8(1-\lambda)}$ against $\lambda$.}
\label{compfixed}
\end{figure}

Next, we show that $\inprob$ exponentially approaches to one
as $n$ increases. 
 \begin{lemma}[Exponential Bound of Typical Set]\label{expbound}
  Suppose that $x_i\sim N(x_i|0,\Sigma)$ independently.
For any $\epsilon\in (0,1)$,
\begin{eqnarray}
\inprob&\ge&\left(1-\!2\exp\left(-\frac{n}{2}(\epsilon-\log(1+\epsilon))\right)\right)^p\label{expb}\\
&\ge&
 1-2p\exp\left(-\frac{n}{2}(\epsilon-\log(1+\epsilon))\right)\nonumber\\
 &\ge& 1-2p\exp\left(\!-\frac{n\epsilon^2}{7}\right). \nonumber
\end{eqnarray}
 \end{lemma}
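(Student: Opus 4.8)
The plan is to decouple the $p$ coordinate constraints that define $\typicalset$, bound each one by a chi-squared tail estimate, and then recombine. First I would note that, because the samples $x_1,\dots,x_n$ are independent and $x_{ij}\sim N(0,\Sigma_{jj})$ for each fixed $j$, the normalized statistic $Z_j:=\sum_{i=1}^n x_{ij}^2/\Sigma_{jj}$ is a sum of $n$ i.i.d.\ squared standard normals, hence $Z_j\sim\chi^2_n$. Writing $A_j:=\{x^n: n(1-\epsilon)\le Z_j\le n(1+\epsilon)\}$, the typical set is exactly $\typicalset=\bigcap_{j=1}^p A_j$, so everything reduces to controlling $\Pr(A_j)$ and then their intersection.

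For the single-coordinate bound I would use the Chernoff method with the $\chi^2_n$ moment generating function $(1-2t)^{-n/2}$. Optimizing the upper tail gives $\Pr(Z_j\ge n(1+\epsilon))\le \exp(-\tfrac{n}{2}(\epsilon-\log(1+\epsilon)))$, attained at $t=\epsilon/(2(1+\epsilon))$, while optimizing the lower tail gives $\Pr(Z_j\le n(1-\epsilon))\le \exp(\tfrac{n}{2}(\epsilon+\log(1-\epsilon)))$. The one technical point is that these exponents differ, so I would show the lower-tail bound is dominated by the upper-tail bound; this amounts to the elementary inequality $\log\frac{1+\epsilon}{1-\epsilon}\ge 2\epsilon$ (equivalently $\mathrm{artanh}\,\epsilon\ge\epsilon$), which is immediate from the odd power series on $(0,1)$. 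Combining, $\Pr(A_j)\ge 1-2\exp(-\tfrac{n}{2}(\epsilon-\log(1+\epsilon)))$ for every $j$.

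It remains to pass from the single coordinates to the intersection. The robust route, which already yields the middle bound $\inprob\ge 1-2p\exp(-\tfrac{n}{2}(\epsilon-\log(1+\epsilon)))$, is simply the union bound applied to the failure events $A_j^c$; it requires no independence and suffices for the applications, since it makes both $\log(1/\inprob)$ and $1-\inprob$ decay exponentially. The sharper product form in the first line, $\inprob\ge\prod_{j}\Pr(A_j)\ge(1-2\exp(-\tfrac{n}{2}(\epsilon-\log(1+\epsilon))))^{p}$, after which Bernoulli's inequality $(1-a)^p\ge 1-pa$ recovers the middle bound, is where the real work lies and is what I expect to be the main obstacle. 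The events $A_j$ are \emph{not} independent, because the coordinates of each $x_i$ are coupled through the off-diagonal entries of $\Sigma$; moreover each $A_j$ is an annular (hence non-convex) region in the $j$-th coordinate block, so the Gaussian correlation inequality for symmetric convex sets does not apply verbatim. To certify $\Pr(\bigcap_j A_j)\ge\prod_j\Pr(A_j)$ I would therefore need a positive-association property of the quadratic statistics $Z_j$, which is at least consistent with $\mathrm{Cov}(x_{ij}^2,x_{ik}^2)=2\Sigma_{jk}^2\ge 0$ (Wick's formula), but whose rigorous justification for annular events is the delicate step.

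Finally, the third inequality is purely analytic: I would deduce it from $\epsilon-\log(1+\epsilon)\ge\tfrac{2}{7}\epsilon^2$ on $(0,1)$. Setting $h(\epsilon):=\epsilon-\log(1+\epsilon)-\tfrac{2}{7}\epsilon^2$, one has $h(0)=0$, a unique interior stationary point of $h$ at $\epsilon=3/4$ (since $h'(\epsilon)=\epsilon(\tfrac{1}{1+\epsilon}-\tfrac{4}{7})$), and $h$ increases then decreases with endpoint value $h(1)=1-\log 2-\tfrac{2}{7}>0$, so $h\ge 0$ throughout. Substituting this estimate into the exponent of the middle bound gives $\inprob\ge 1-2p\exp(-n\epsilon^2/7)$, completing the chain.
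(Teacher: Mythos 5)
Your per-coordinate analysis is correct and is, in substance, the paper's own: the paper writes $w_j^2=(1/n)\sum_i x_{ij}^2$ as a Gamma variable and invokes a Sanov-type inequality for one-parameter exponential families (Lemma \ref{exp2bound}), whose exponents $\frac{n}{2}(\epsilon-\log(1+\epsilon))$ and $\frac{n}{2}(-\epsilon-\log(1-\epsilon))$ are exactly your $\chi^2_n$ Chernoff exponents; it likewise discards the smaller lower-tail term via $\epsilon-\log(1+\epsilon)\le-\epsilon-\log(1-\epsilon)$, which is your $\log\frac{1+\epsilon}{1-\epsilon}\ge 2\epsilon$. Your verification of $\epsilon-\log(1+\epsilon)\ge\tfrac{2}{7}\epsilon^2$ through the function $h$ is also sound; the paper instead observes that the largest $a$ with $a\epsilon^2\le\tfrac12(\epsilon-\log(1+\epsilon))$ on $[0,1]$ is $(1-\log 2)/2>1/7$, attained at $\epsilon=1$ --- the same inequality with the same extremal point.

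The step you flag as the main obstacle, $\Pr\bigl(\bigcap_j A_j\bigr)\ge\prod_j\Pr(A_j)$, is indeed the one place your argument is incomplete --- but you should be aware that the paper does not close it either. Its proof asserts the identity $\inprob=\prod_{j=1}^p\bigl(1-\Pr(\mib{x}_j\notin\typicalset(j))\bigr)$, which treats the columns $\mib{x}_1,\dots,\mib{x}_p$ as mutually independent; that is valid when $\Sigma$ is diagonal but is not justified for the general non-singular $\Sigma$ admitted in the statement, and, as you correctly observe, neither the Gaussian correlation inequality (the $A_j$ are annular, not convex) nor the nonnegativity of $\mathrm{Cov}(x_{ij}^2,x_{ik}^2)$ yields the required positive association for free. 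So the product-form bound \eqnum{expb} should be regarded as established only for diagonal $\Sigma$, whereas your union-bound route proves the second and third inequalities for arbitrary $\Sigma$ --- and those are what actually drive the downstream conclusions: Theorem \ref{theorem4lasso} survives verbatim up to replacing $\bigl(1-2e^{-\frac{n}{2}(\epsilon-\log(1+\epsilon))}\bigr)^p$ by $1-2p\,e^{-\frac{n}{2}(\epsilon-\log(1+\epsilon))}$ in \eqnum{riskboundlasso} and \eqnum{lowerboundP}, and the final $1-O(p\exp(-n\kappa))$ statement is unchanged. In short: your proof is correct and essentially the paper's for everything except the first inequality, and there you have identified a real gap in the paper rather than missed an idea it contains.
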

 See Section \ref{expboundproof} for its proof.
 In the lasso case, it is often
postulated that $p$ is much greater than $n$. Due to Lemma \ref{expbound},
$1-\inprob$ is $O(p\cdot \exp(-n\epsilon^2/7))$, which also implies that the second
term in \eqnum{finriskbound} can be negligibly small even if $n\ll p$. 
In this sense, the exponential bound is important for lasso.
Combining Lemmas \ref{erv4lasso} and \ref{expbound} with Theorems \ref{mytheorem1}
 and \ref{mytheorem2}, we obtain the following theorem.
\begin{theorem}\label{theorem4lasso}
 For any $\epsilon\in (0,1)$, define
\begin{eqnarray}
 \mathcal{P}^n_x\!\!\!&:=&\!\!\!\{q(x^n)=\Pi_{i=1}^nN(x_i|\mib{0},\Sigma)| \mbox{
 non-singular }\Sigma\},\nonumber \\
 \typicalset\!\!\!&:=&\!\!\!\Bigl\{x^n\,\Bigl|\,\forall j, 1-\epsilon\le \frac{(1/n)\sum_{i=1}^nx_{ij}^2}{\Sigma_{jj}}\le 1+\epsilon\Bigr\}.\nonumber
\end{eqnarray}
     Assume a linear regression setting:
     \begin{eqnarray*}
      p_*(y^n|x^n)&=&\Pi_{i=1}^nN(y_i|x_i^T\theta^*,\sigma^2),     \\
     p_{\theta}(y^n|x^n)&=&\Pi_{i=1}^nN(y_i|x_i^T\theta,\sigma^2).     
     \end{eqnarray*}
 Let $\beta$ be a real number in $(0,1)$. 
 For any $\lambda\in (0,1-\beta]$, if 
  \begin{eqnarray}
\mu_1\ge 
   \sqrt{
 \frac{\log 4p}{n\beta\sigma^2(1-\epsilon)}\cdot
\frac{\lambda+8\sqrt{1-\epsilon^2}}{4}
},\quad \mu_2\ge \frac{\log 2}{n\beta}, \nonumber
  \end{eqnarray}
the lasso estimator $\hat{\theta}(x^n,y^n)$ in (\ref{lassoestimator}) has a risk bound 
\begin{eqnarray}
\lefteqn{\condexp [d_{\lambda}(p_*,p_{\hat{\theta}(x^n,y^n)})]\le }\nonumber\\
&& \hskip-7mm\condexp \Biggl[\inf_{\theta\in
  \Theta}\bigg\{\frac{\left(\|Y-X\theta\|_2^2-\|Y-X\theta^*\|_2^2\right)}{2n\sigma^2}
  +\mu_1\|\theta\|_{w,1}+\mu_2\bigg\}\Biggr]\nonumber\\
 &&\hskip-7mm-\frac{p\log\left(1\!-\!2\exp\left(-\frac{n}{2}(\epsilon-\log(1+\epsilon))\right)\right)}{n\beta},\label{riskboundlasso}
\end{eqnarray}
 and a regret bound
 \begin{eqnarray}
 \lefteqn{d_{\lambda}(p_*,p_{\hat{\theta}(x^n,y^n)})\le} \nonumber\\
 &&\hskip-7mm
\inf_{\theta\in
  \Theta}\bigg\{\frac{\left(\|Y-X\theta\|_2^2-\|Y-X\theta^*\|_2^2\right)}{2n\sigma^2}
  +\mu_1\|\theta\|_{w,1}+\mu_2\bigg\}+\tau\nonumber\\ \label{regretboundlasso}
 \end{eqnarray}
 with probability at least
 \begin{equation}
 \left(1-\!2\exp\left(-\frac{n}{2}(\epsilon-\log(1+\epsilon))\right)\right)^p-\exp(-\tau
 n\beta), \label{lowerboundP}
 \end{equation}
 which is bounded below by
 \[
  1-O\left(p\cdot \exp\left(-n\kappa\right)\right)
 \]
 with $\kappa:=\min\{\epsilon^2/7,\tau\beta\}$. 
\end{theorem}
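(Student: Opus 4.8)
The plan is to obtain Theorem \ref{theorem4lasso} by assembling the two preparatory lemmas of this section together with the general bounds of Theorems \ref{mytheorem1} and \ref{mytheorem2}; essentially no new analysis is required, and the only real task is to track normalizations. First I would identify the lasso estimator (\ref{lassoestimator}) with the PMLE $\hat{\theta}$ for a suitably rescaled penalty. Since $-\log p_{\theta}(y^n|x^n)=\frac{n}{2}\log(2\pi\sigma^2)+\frac{1}{2\sigma^2}\|Y-X\theta\|_2^2$, minimizing $-\log p_{\theta}+\copen$ over $\theta$ with $\copen(\theta|x^n)=n\mu_1\|\theta\|_{w,1}+n\mu_2$ is, up to an additive constant that does not affect the argmin, the same as minimizing $\frac{1}{2n\sigma^2}\|Y-X\theta\|_2^2+\mu_1\|\theta\|_{w,1}$. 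Thus the PMLE with this penalty is exactly the lasso fit, and the constant $n\mu_2$ is what the theory charges as model description length.

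Next I would verify the hypotheses of Lemma \ref{erv4lasso} for this rescaled penalty. The conditions on $\mu_1,\mu_2$ stated in the theorem are precisely the conditions (\ref{rvcondition}) of Lemma \ref{erv4lasso} with $\mu_1,\mu_2$ replaced by $n\mu_1,n\mu_2$: dividing (\ref{rvcondition}) by $n$ (hence by $n^2$ inside the square root) turns the bound $\sqrt{n\log 4p/(\beta\sigma^2(1-\epsilon))\cdot(\lambda+8\sqrt{1-\epsilon^2})/4}$ into $\sqrt{\log 4p/(n\beta\sigma^2(1-\epsilon))\cdot(\lambda+8\sqrt{1-\epsilon^2})/4}$, and turns $\log 2/\beta$ into $\log 2/(n\beta)$. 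Consequently $\copen(\theta|x^n)=n\mu_1\|\theta\|_{w,1}+n\mu_2$ is $\epsilon$-risk valid for $(\lambda,\beta,\mathcal{P}^n_x,\typicalset)$ under the theorem's hypotheses.

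The third ingredient is the explicit redundancy. Because $p_*$ and each $p_{\theta}$ factorize over $i$ and the design is i.i.d.\ $N(\mib{0},\Sigma)$, the random-design \renyi divergence is additive, so $d^n_\lambda(p_*,p_{\hat{\theta}})=n\,d_\lambda(p_*,p_{\hat{\theta}})$, which is exactly what lets me pass from the $n$-sample statements of Theorems \ref{mytheorem1} and \ref{mytheorem2} to the per-sample statements of the theorem. From the Gaussian likelihoods, $\log\frac{p_*(y^n|x^n)}{p_\theta(y^n|x^n)}=\frac{\|Y-X\theta\|_2^2-\|Y-X\theta^*\|_2^2}{2\sigma^2}$, so by the definition of $\twopc$,
\[
\log\frac{p_*(y^n|x^n)}{\twopc(y^n|x^n)}
= n\inf_{\theta\in\Theta}\Bigl\{\tfrac{\|Y-X\theta\|_2^2-\|Y-X\theta^*\|_2^2}{2n\sigma^2}+\mu_1\|\theta\|_{w,1}+\mu_2\Bigr\}.
\]
Plugging this into Theorem \ref{mytheorem1}, dividing both sides by $n$, and bounding $\frac{1}{n\beta}\log(1/\inprob)$ with the first inequality of Lemma \ref{expbound}, $\inprob\ge(1-2\exp(-\tfrac{n}{2}(\epsilon-\log(1+\epsilon))))^p$, yields exactly (\ref{riskboundlasso}). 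Likewise Theorem \ref{mytheorem2}, after the same substitution and division by $n$, says that $d_\lambda(p_*,p_{\hat{\theta}})-\inf_\theta\{\cdots\}\ge\tau$ has probability at most $\exp(-n\tau\beta)+1-\inprob$; complementing and inserting the same exponential lower bound for $\inprob$ gives (\ref{regretboundlasso}) with success probability (\ref{lowerboundP}). The final simplification $1-O(p\exp(-n\kappa))$ with $\kappa=\min\{\epsilon^2/7,\tau\beta\}$ then follows from the last inequality of Lemma \ref{expbound} together with $\exp(-n\tau\beta)\le\exp(-n\kappa)$.

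I do not expect a genuine obstacle, since all the analytic difficulty is already absorbed into Lemmas \ref{erv4lasso} and \ref{expbound}. The one place to be careful—and the one that could silently corrupt the constants—is the normalization bookkeeping: the factor $n$ relating the penalty scale of Lemma \ref{erv4lasso} to the per-sample penalty of the theorem must be tracked consistently through the identification of $\hat{\theta}$ with the lasso estimator, the translation of the $\mu_1,\mu_2$ conditions, and the additivity $d^n_\lambda=n\,d_\lambda$ used to divide the general bounds by $n$.
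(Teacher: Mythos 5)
Your proposal is correct and follows essentially the same route as the paper, which likewise obtains Theorem \ref{theorem4lasso} by combining Lemmas \ref{erv4lasso} and \ref{expbound} with Theorems \ref{mytheorem1} and \ref{mytheorem2}, using the i.i.d.\ additivity $d^n_\lambda=n\,d_\lambda$ and the identification of the lasso estimator with the PMLE under the penalty $n\mu_1\|\theta\|_{w,1}+n\mu_2$. Your normalization bookkeeping (the factor $n$ in the $\mu_1,\mu_2$ conditions and in the redundancy term) is exactly right.
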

Since $x^n$ and $y^n$ are i.i.d. now,
$d^n_{\lambda}(p,r)=nd_{\lambda}(p,r)$. Hence, we presented the risk
bound as a single-sample version in (\ref{riskboundlasso}) by dividing
the both sides by $n$. 
Finally, we remark that the following interesting fact holds for the lasso case.
  \begin{lemma}\label{cv4lasso}
     Assume a linear regression setting:
     \begin{eqnarray*}
      p_*(y^n|x^n)&=&\Pi_{i=1}^nN(y_i|x_i^T\theta^*,\sigma^2),     \\
     p_{\theta}(y^n|x^n)&=&\Pi_{i=1}^nN(y_i|x_i^T\theta,\sigma^2).     
     \end{eqnarray*}
If $\mu_1$ and $\mu_2$ satisfy (\ref{rvcondition}), then the weighted
 $\ell_1$ norm $L(\theta|x^n)=\mu_1\|\theta\|_{w,1}+\mu_2$ is codelength valid. 
\end{lemma}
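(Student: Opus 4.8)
The plan is to verify Definition \ref{codelengthvalid} directly for the weighted $\ell_1$ penalty, reusing the quantization-and-randomization machinery already built for Lemma \ref{erv4lasso}. First I would record the three ways in which codelength validity is \emph{weaker} than the $\epsilon$-risk validity established in Lemma \ref{erv4lasso}: it drops the two \renyi divergence terms $d^n_\lambda(p_*,p_\theta)$ and $d^n_\lambda(p_*,p_{\ctheta})$; it permits the quantized set $\cTheta(x^n)$ and the description length $\ctpen(\ctheta|x^n)$ to depend on $x^n$ (a genuine relaxation, since Condition \ref{criticalcondition} is no longer in force); and it asks only for ordinary Kraft rather than $\beta$-stronger Kraft. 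The last point is immediate: any $\ctpen$ obeying $\sum_{\ctheta}\exp(-\beta\ctpen)\le 1$ with $\beta\in(0,1)$ and $\ctpen\ge 0$ also obeys $\sum_{\ctheta}\exp(-\ctpen)\le 1$, so the construction from Lemma \ref{erv4lasso} supplies an admissible $\ctpen$ for free. Rewriting (\ref{cvcond}) in the form analogous to (\ref{rv2}), the remaining task is to show, for every fixed $x^n$, every $y^n$ and every $\theta$, that $\min_{\ctheta}\{\log(p_\theta(y^n|x^n)/p_{\ctheta}(y^n|x^n))+\ctpen(\ctheta|x^n)\}\le \mu_1\|\theta\|_{w,1}+\mu_2$.

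Next I would run the Chatterjee--Barron randomization, but with the grid now tuned to the \emph{actual} $x^n$ rather than to $\Sigma$ (legitimate because $\cTheta(x^n)$ may depend on $x^n$ here). Bounding the minimum by the expectation over the randomized $\ctheta$, it suffices to control $E_{\ctheta}[\log(p_\theta/p_{\ctheta})]+E_{\ctheta}[\ctpen]$. The decisive simplification is that, with the \renyi terms gone, the surviving log-likelihood ratio has a clean expectation: in the Gaussian model $\log(p_\theta/p_{\ctheta})=-\tfrac{1}{2\sigma^2}(\|Y-X\theta\|_2^2-\|Y-X\ctheta\|_2^2)$, and an unbiased coordinatewise randomization ($E_{\ctheta}[\ctheta]=\theta$) kills the cross term, leaving the $y^n$-free quantity $E_{\ctheta}[\log(p_\theta/p_{\ctheta})]=\tfrac{1}{2\sigma^2}E_{\ctheta}\|X(\ctheta-\theta)\|_2^2=\tfrac{n}{2\sigma^2}\sum_{j}w_j^2\,\mathrm{Var}(\ctheta_j)$. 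Thus the whole codelength-validity requirement collapses to a purely fixed-design trade-off between the quantization variance $\sum_j w_j^2\mathrm{Var}(\ctheta_j)$ and the grid-dependent penalty $E_{\ctheta}[\ctpen]$, with no dependence on $y^n$ or on the unknown $\theta^*$.

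I would then optimize the grid spacing coordinatewise (spacing $\propto 1/w_j$, so that each $w_j^2\mathrm{Var}(\ctheta_j)$ is balanced against the per-coordinate description cost), which produces exactly a penalty of weighted-$\ell_1$ shape $\mu_1\|\theta\|_{w,1}+\mu_2$; the $\mu_2$ term absorbs the fixed overhead and is controlled by $\mu_2\ge(\log2)/\beta$ as in (\ref{rvcondition}). The main obstacle, and the step I would spend the most care on, is verifying that the \emph{same} coefficients $(\mu_1,\mu_2)$ that make Lemma \ref{erv4lasso} go through are still large enough here. This is not automatic: since $E_{\ctheta}[d^n_\lambda(p_*,p_{\ctheta})]\ge d^n_\lambda(p_*,p_\theta)$ for a grid centered at $\theta$, the \renyi terms contributed \emph{negatively} to the risk-validity quantity, so the codelength quantity is the larger of the two and $\epsilon$-risk validity does not formally imply it (consistent with the remark following Definition \ref{codelengthvalid}). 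The argument I would use is that the slack built into (\ref{rvcondition}) --- the factor $(\lambda+8\sqrt{1-\epsilon^2})/4$ together with the freedom to match the grid to $w_j$ directly instead of to $\sqrt{\Sigma_{jj}}$ on the typical set --- exceeds the extra variance term $\tfrac{n}{2\sigma^2}\sum_j w_j^2\mathrm{Var}(\ctheta_j)$ that codelength validity must now pay in full. Concretely, I would compare the effective coefficient of the variance term required here against the one implicit in (\ref{rvcondition}) and check the former is dominated for all $\lambda\in(0,1-\beta]$; this term-by-term domination, rather than any new probabilistic estimate, is where the real work lies.
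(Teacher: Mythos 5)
Your proposal is correct in substance and the constants do work out, but it takes a genuinely different and more laborious route than the paper. You redo the Chatterjee--Barron computation from scratch for the codelength part alone, with the grid spacing tied to the empirical weights $w_j$, and then compare the resulting required $\mu_1$ (which comes out as $\sqrt{2n\log 4p/(\beta\sigma^2)}$ after optimizing $\delta$) against (\ref{rvcondition}); this comparison does succeed, since $\lambda+8\sqrt{1-\epsilon^2}\ge 8(1-\epsilon)$ gives $(\lambda+8\sqrt{1-\epsilon^2})/(4(1-\epsilon))\ge 2$. The paper avoids any new computation by two observations: first, the bound (\ref{elvp}) on the loss-variation part (i) in the proof of Lemma \ref{erv4lasso} is a \emph{non-negative} function of $\theta$, so the $L_1$ certified by (\ref{rvcondition}) already dominates the bound on the codelength part (ii) alone for every $v^n\in\typicalset$; second, given an arbitrary $x^n$ one may choose $q_*$ with $\Sigma_{jj}=(1/n)\sum_i x_{ij}^2$, which forces $x^n\in\typicalset$ and makes $w^*=w$, turning the $q_*$-indexed grid and description length into legitimate $x^n$-dependent ones. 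Your approach buys a self-contained argument that does not lean on the internal structure of the earlier proof; the paper's buys brevity and makes the final ``domination'' step trivial rather than delicate.

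One side-claim in your write-up is wrong and leads you to overestimate where the difficulty lies: you assert $E_{\ctheta}[d^n_\lambda(p_*,p_{\ctheta})]\ge d^n_\lambda(p_*,p_\theta)$ for a grid centered at $\theta$, so that the \renyi terms contribute negatively. That inequality would follow from convexity of $\theta\mapsto d^n_\lambda(p_*,p_\theta)$, but the paper shows via (\ref{andrewsuggestion}) that this map is \emph{not} convex in the random-design case; the expected loss variation can be positive, which is exactly why (\ref{elvp}) is a non-negative bound and why the paper must pay for it inside $\mu_1$. This error does not break your proof --- your direct verification never uses the claim --- but the ``real work'' you anticipate in the last step is in fact already embedded in the proof of Lemma \ref{erv4lasso}, precisely because the loss-variation budget there is non-negative.
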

That is, the weighted $\ell_1$ penalties derived in Lemma
\ref{erv4lasso} are not only $\epsilon$-risk valid but also codelength
valid. Its proof will be described in Section \ref{proofcv4lasso}.
By this fact, the redundancy and regret interpretation of the main terms in (\ref{riskboundlasso}) and (\ref{regretboundlasso}) are justified. 
It also indicates that we can obtain the unconditional risk bound with
respect to $\alpha$-divergence for those weighted $\ell_1$ penalties by Corollary
\ref{coro1} without any additional condition.
\section{Proofs of Theorems, Lemmas and Corollary}\label{proofs}
We give all proofs to the theorems, the lemmas and the
corollary in the previous section. 
\subsection{Proof of Theorem \ref{mytheorem1}}\label{mytheorem1proof}
Here, we prove our main theorem. The proof proceeds along with
the same line as \cite{chatterjeebarron14} though some modifications
are necessary.
\begin{proof}
 Define
 \[
  F_{\lambda}^{\theta}(x^n,y^n):=d^n_{\lambda}(p_*,p_{\theta})-\log \frac{p_*(y^n|x^n)}{p_{\theta}(y^n|x^n)}.
 \]
By the $\epsilon$-risk validity, we obtain
 \begin{eqnarray}
\lefteqn{\condexp \Big[\exp\Big(\beta\max_{\theta\in
 \Theta}\Big\{F_{\lambda}^{\theta}(x^n,y^n)-\copen(\theta|x^n)\Big\}\Big)\Big]}\nonumber\\
&\hspace{-9mm}\le &\hspace{-5mm}\condexp \Big[\exp\Big(\beta\max_{\ctheta \in
 \cTheta}\Big\{F_{\lambda}^{\ctheta}(x^n,y^n)-\ctpen(\ctheta |q_*)\Big\}\Big)\Big]\nonumber\\
&\hspace{-9mm}\le &\hspace{-5mm}\sum_{\ctheta \in \cTheta(q_*)}\condexp \Big[\exp\Big(\beta\Big(F_{\lambda}^{\ctheta}(x^n,y^n)-\ctpen(\ctheta |q_*)\Big)\Big)\Big]\nonumber\\
&\hspace{-9mm}= &\hspace{-5mm}\sum_{\ctheta \in
 \cTheta(q_*)}\exp(-\beta\ctpen\big(\ctheta |q_*)\big)\condexp \Big[\exp\Big(\beta
 F_{\lambda}^{\ctheta}(x^n,y^n)\Big)\Big].\label{intermediate4}
  \end{eqnarray}
  The following fact is an extension of the key technique of BC theory:
  \begin{eqnarray*}
\lefteqn{\condexp \Bigg[\exp\Big(\beta F_{\lambda}^{\ctheta}(x^n,y^n)\Big)\Bigg]}\\
 &=&\exp\left(\beta d^n_{\lambda}(p_*,p_{\ctheta})\right)
 \condexp \left[\left(\frac{p_{\ctheta }(y^n|x^n)}{p_*(y^n|x^n)}\right)^{\beta}\right]\\
 &\le&\frac{1}{\inprob}\exp\left(\beta d^n_{\lambda}(p_*,p_{\ctheta})\right)
 E_{\bar{p}_*} \left[\left(\frac{p_{\ctheta }(y^n|x^n)}{p_*(y^n|x^n)}\right)^{\beta}\right]\\
 &=&\frac{1}{\inprob}\exp\left(\beta
			    d^n_{\lambda}(p_*,p_{\ctheta})\right)\exp\left(-\beta
			    d_{1-\beta}^n(p_*,p_{\ctheta})\right)
 \\
 &\le &\frac{1}{\inprob}\exp\left(\beta
			       d^n_{\lambda}(p_*,p_{\ctheta})\right)\exp\left(-\beta
			       d^n_{\lambda}(p_*,p_{\ctheta})\right)=\frac{1}{\inprob}.
  \end{eqnarray*}
 The first inequality holds because $E_{\bar{p}_*(x^n,y^n)}\left[A\right]\ge
  \inprob \condexp \left[A\right]$ for any non-negative random variable
 $A$. The second inequality holds because of the monotonically
 increasing property of $d^n_{\lambda}(p_*,p_{\theta})$ in terms of $\lambda$. Thus, the right
 side of (\ref{intermediate4}) is bounded as
\begin{eqnarray*}
&&  \sum_{\ctheta \in
 \cTheta(q_*)}\exp(-\beta\ctpen\big(\ctheta |q_*)\big)\condexp \Big[\exp\Big(\beta
 F_{\lambda}^{\ctheta}(x^n,y^n)\Big)\Big]\\
 & &\le
\frac{1}{\inprob} \sum_{\ctheta \in
 \cTheta(q_*)}\exp(-\beta\ctpen\big(\ctheta |q_*)\big)\le \frac{1}{\inprob}.
\end{eqnarray*}
Hence, we have an important inequality
\begin{eqnarray}
\hskip-5mm\frac{1}{\inprob}
\!\!&\!\!\ge\!\!& \!\!\condexp \left[\exp\left(\beta\max_{\theta\in
 \Theta}\left\{F_{\lambda}^{\theta}(x^n,y^n)-\copen(\theta|x^n)\right\}\right)\right].\label{coroused}
\end{eqnarray}
Applying Jensen's inequality to (\ref{coroused}), we have
 \begin{eqnarray}
\hskip-3mm\frac{1}{\inprob}
\!\!\!\!\!
&\ge& \!\!\!\!\exp\left(\condexp \left[\beta\max_{\theta\in
 \Theta}\left\{F_{\lambda}^{\theta}(x^n,y^n)-\copen(\theta|x^n)\right\}\right]\right)\nonumber \\
&\ge& \!\!\!\!\exp\left(\condexp \left[\beta\left(F_{\lambda}^{\hat{\theta}}(x^n,y^n)-\copen(\hat{\theta}|x^n)\right)\right]\right).\nonumber
 \end{eqnarray}
Thus, we have
\begin{eqnarray*}
 -\frac{\log \inprob}{\beta } \ge  \condexp \left[d^n_{\lambda}(p_*,p_{\hat{\theta}})-\log\frac{p_*(y^n|x^n)}{p_{\hat{\theta}}(y^n|x^n)}-\copen(\hat{\theta}|x^n)\right].
\end{eqnarray*}
Rearranging the terms of this inequality, we have the statement. 
\end{proof}

\subsection{Proof of Theorem \ref{mytheorem2}}\label{mytheorem2proof}
It is not necessary to start from scratch. We reuse the proof of Theorem \ref{mytheorem1}.
 \begin{proof}
We can start from \eqnumn{coroused}. 
  For convenience, we define
  \begin{eqnarray*}
   \lefteqn{
    \xi(x^n,y^n)}\\
   &=&\frac{1}{n}\max_{\theta\in
  \Theta}\left\{F_{\lambda}^{\theta}(x^n,y^n)-\copen(\theta|x^n)\right\}\\
 &=&\max_{\theta\in \cTheta}\left\{\frac{d^n_{\lambda}(p_*,p_{\theta})}{n}-\frac{1}{n}\log \frac{p_*(y^n|x^n)}{p_{\theta}(y^n|x^n)}-\frac{\copen(\theta|x^n)}{n}\right\}.
  \end{eqnarray*}
By Markov's inequality and (\ref{coroused}), 
\begin{eqnarray*}
 \lefteqn{\Pr\left(\xi(x^n,y^n)\ge \tau|x^n\in \typicalset\right)}\\
 &=&
\Pr\left(\exp\left(n\beta\xi(x^n,y^n)\right)\ge \exp(n\beta\tau)|x^n\in
   \typicalset\right) \\
 &\le& \frac{\exp(-n\tau\beta)}{\inprob}.
\end{eqnarray*}
Hence, we obtain 
\begin{eqnarray}
\lefteqn{\Pr\left(\xi(x^n,y^n)\ge \tau\right)}\nonumber\\
&=&\!\!\!\!
\inprob \Pr\left(\xi(x^n,y^n)\ge \tau |x^n\in \typicalset\right)\nonumber\\
&&+(1-\inprob )\Pr\left(\xi(x^n,y^n)\ge \tau |x^n\notin \typicalset\right)\nonumber\\
&\le &\!\!\!\!
\inprob \Pr\left(\xi(x^n,y^n)\ge \tau |x^n\in \typicalset\right)+(1-\inprob )\nonumber\\
&\le &\!\!\!\!
\exp(-n\tau\beta)+(1-\inprob ).\nonumber
\end{eqnarray}
The proof completes by noticing that $(1/n)\left(F^{\hat{\theta}}_{\lambda}(x^n,y^n)-L(\hat{\theta}|x^n)\right)\le \xi(x^n,y^n)$
  for any $x^n$ and $y^n$.
 \end{proof}

 \subsection{Proof of Corollary \ref{coro1}}\label{coro1proof}
The proof is obtained immediately from Theorem \ref{mytheorem1}.
   \begin{proof}
Let again $\condexp $ denote a conditional expectation with
    regard to $\bar{p}_*(x^n,y^n)$ given that $x^n\in
   \typicalset$. Let further $I_A(x^n)$ be an indicator function of a set
    $A\subset \rs{X}^n$. 
    The unconditional risk is bounded as
  \begin{eqnarray*}
   \lefteqn{E_{\bar{p}_*}[\rs{D}^n_{\alpha}(p_*,p_{\hat{\theta}})]}\\
   &&\hskip-7mm
    =E_{\bar{p}_*}[I_{\typicalset}(x^n)\rs{D}^n_{\alpha}(p_*,p_{\hat{\theta}})]
     \!+\!E_{\bar{p}_*}[(1-I_{\typicalset}(x^n))\rs{D}^n_{\alpha}(p_*,p_{\hat{\theta}})]
     \\
   &&\hskip-7mm\le \inprob \condexp [\rs{D}^n_{\alpha}(p_*,p_{\hat{\theta}})]
    +(1-\inprob)\cdot \frac{4}{1-\alpha^2}\\
   &&\hskip-7mm\le \frac{\inprob}{\lambda(\alpha)}\condexp [d^n_{\lambda(\alpha)}(p_*,p_{\hat{\theta}})]
    +\frac{(1-\inprob)}{\lambda(\alpha)(\lambda(\alpha)+\alpha)}\\
   &&\hskip-7mm\le \frac{\inprob}{\lambda(\alpha)}\left(\condexp 
\log \frac{p_*(y^n|x^n)}{\twopc(y^n|x^n)}
+ \frac{1}{\beta} \log \frac{1}{\inprob }
					       \right)\\
   &&+\frac{(1-\inprob)}{\lambda(\alpha)(\lambda(\alpha)+\alpha)}\\
   &&\hskip-7mm= \frac{1}{\lambda(\alpha)}E_{\bar{p}_*}\left[
I_{\typicalset}(x^n)
\log \frac{p_*(y^n|x^n)}{\twopc(y^n|x^n)}\right]
   + \frac{\inprob}{\lambda(\alpha)\beta}\log \frac{1}{\inprob }\\
&&
    +\frac{(1-\inprob)}{\lambda(\alpha)(\lambda(\alpha)+\alpha)}\\
   &&\hskip-7mm\le \frac{1}{\lambda(\alpha)}E_{\bar{p}_*}\left[
					  \log \frac{p_*(y^n|x^n)}{\twopc(y^n|x^n)}\right]
   + \frac{\inprob}{\lambda(\alpha)\beta}\log \frac{1}{\inprob }\\
   &&+\frac{(1-\inprob)}{\lambda(\alpha)(\lambda(\alpha)+\alpha)}.
  \end{eqnarray*}
    The first and second inequalities follow from the two
    properties of $\alpha$-divergence in (\ref{boundedalpha}) and (\ref{alpharenyi}) respectively. The third inequality 
    follows from Theorem \ref{mytheorem1} because $\lambda(\alpha)\in
    (0,1-\beta)$ by the assumption. 
The last inequality holds because of the following reason. By the
    decomposition of expectation, we have
    \begin{eqnarray*}
\lefteqn{E_{\bar{p}_*(x^n,y^n)}\Big[
I_{\typicalset}(x^n)\log \frac{p_*(y^n|x^n)}{\twopc(y^n|x^n)}\Big]}\\
&&\hskip-7mm=
E_{q_*(x^n)}\bigg[
I_{\typicalset}(x^n)E_{p_*(y^n|x^n)}\bigg[
\log \frac{p_*(y^n|x^n)}{\twopc (y^n|x^n)}\bigg]\bigg].
    \end{eqnarray*}
    Since $\twopc(y^n|x^n)$ is a sub-probability distribution by the
    assumption, the conditional expectation part is non-negative. Therefore, removing the indicator
    function $I_{\typicalset}(x^n)$ cannot decrease this quantity.
    The
    final part of the statement follows from the fact that taking
    $\lambda=1-\beta$ makes the bound in (\ref{finriskbound}) tightest because of the monotonically increasing property of \renyi
    divergence with regard to $\lambda$.
   \end{proof}
Again, we remark that the sub-probability condition of $p_2(y^n|x^n)$ can
be replaced with a sufficient condition ``$L(\theta|x^n)$ is codelength
valid.''  
In addition, the sub-probability condition can be relaxed to 
\[
\sup_{x^n\in \rs{X}^n}\int p_2(y^n|x^n)dy^n < \infty,
\]
under which the bound increases by 
$(1-\inprob)\log \sup_{x^n\in \rs{X}^n} \int p_2(y^n|x^n)dy^n$.
\subsection{\renyi Divergence and Its Derivatives}\label{bqlemmasec}
In this section and the next section, we prove a series of lemmas, which
will be used to derive risk valid penalties for lasso. 
First, we show that the \renyi divergence can be understood by
defining $\bar{p}^{\lambda}_{\theta}(x,y)$ in Lemma \ref{renyiview}. Then,  
their explicit forms in the lasso setting are calculated in Lemma
\ref{bqlemma}. 
  \begin{lemma}\label{renyiview}
   Define a probability distribution $\bar{p}^{\lambda}_{\theta}(x,y)$ by
  \begin{eqnarray*}
 \bar{p}^{\lambda}_{\theta}(x,y):= \frac{q_*(x)p_*(y|x)^{\lambda}p_{\theta}(y|x)^{1-\lambda}}{Z^{\lambda}_{\theta}},
  \end{eqnarray*}
where $Z^{\lambda}_{\theta}$ is a normalization constant. 
Then, the \renyi divergence and its first and
second derivatives are written as
\begin{eqnarray}
 d_{\lambda}(p_*,p_{\theta})&=&\frac{-1}{1-\lambda}\log
  Z^{\lambda}_{\theta},\nonumber\\
  \frac{\partial d_{\lambda}(p_*,p_{\theta})}{\partial
  \theta}&=&-E_{\bar{p}^{\lambda}_{\theta}}\left[s_{\theta}(y|x)\right],\label{firstderivative}\\
 \frac{\partial^2d_{\lambda}(p_*,p_{\theta})}{\partial \theta\partial
  \theta^T}\!\!&=&\!\!-E_{\bar{p}^{\lambda}_{\theta}}\left[
G_{\theta}(x,y)
\right]\nonumber\\
&&-(1-\lambda){\rm Var}_{\bar{p}^{\lambda}_{\theta}}\left(s_{\theta}(y|x)\right),\label{hessiangeneral}
\end{eqnarray}
where ${\rm Var}_p(A)$ denotes a covariance matrix of $A$ with respect to $p$ and 
\begin{eqnarray*}
s_{\theta}(y|x)&:=&\frac{\partial \log
     p_{\theta}(y|x)}{\partial \theta}, \\ 
G_{\theta}(x,y)&:=&\frac{\partial^2\log
 p_{\theta}(y|x)}{\partial \theta\partial \theta^T}.
\end{eqnarray*}
\end{lemma}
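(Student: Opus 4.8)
The plan is to reduce all three identities to differentiation under the integral sign of the normalization constant $Z^{\lambda}_{\theta}=\int q_*(x)p_*(y|x)^{\lambda}p_{\theta}(y|x)^{1-\lambda}\,dx\,dy$, exploiting the fact that the score $s_{\theta}$ appears automatically whenever the power $p_{\theta}^{1-\lambda}$ is differentiated. First I would establish the first identity by rewriting $Z^{\lambda}_{\theta}=E_{q_*(x)p_*(y|x)}[(p_{\theta}(y|x)/p_*(y|x))^{1-\lambda}]$, which is exactly the argument of the logarithm in the single-sample \renyi divergence; comparing with its definition gives $d_{\lambda}(p_*,p_{\theta})=-(1-\lambda)^{-1}\log Z^{\lambda}_{\theta}$ at once.

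Next I would differentiate this scalar identity once. The key elementary computation is $\partial_{\theta}p_{\theta}^{1-\lambda}=(1-\lambda)p_{\theta}^{1-\lambda}s_{\theta}$, so that $\partial_{\theta}Z^{\lambda}_{\theta}=(1-\lambda)\int q_*p_*^{\lambda}p_{\theta}^{1-\lambda}s_{\theta}\,dx\,dy$. Dividing by $Z^{\lambda}_{\theta}$ turns the integrand into an expectation against $\bar{p}^{\lambda}_{\theta}$, and the outer prefactor $-(1-\lambda)^{-1}$ cancels the $(1-\lambda)$, leaving $\partial_{\theta}d_{\lambda}=-E_{\bar{p}^{\lambda}_{\theta}}[s_{\theta}]$, which is \eqnum{firstderivative}.

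For the Hessian I would differentiate $\partial_{\theta}d_{\lambda}=-(Z^{\lambda}_{\theta})^{-1}\int f_{\theta}s_{\theta}\,dx\,dy$ once more with respect to $\theta^{T}$, writing $f_{\theta}:=q_*p_*^{\lambda}p_{\theta}^{1-\lambda}$. The quotient rule produces three contributions: differentiating $f_{\theta}$ inside the integral gives $(1-\lambda)E_{\bar{p}^{\lambda}_{\theta}}[s_{\theta}s_{\theta}^{T}]$; differentiating $s_{\theta}$ gives $E_{\bar{p}^{\lambda}_{\theta}}[G_{\theta}]$ (since $\partial_{\theta^{T}}s_{\theta}=G_{\theta}$); and differentiating the $(Z^{\lambda}_{\theta})^{-1}$ factor, i.e.\ the re-normalization, contributes $-(1-\lambda)E_{\bar{p}^{\lambda}_{\theta}}[s_{\theta}]E_{\bar{p}^{\lambda}_{\theta}}[s_{\theta}]^{T}$. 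Collecting the two second-moment terms into a covariance yields $-E_{\bar{p}^{\lambda}_{\theta}}[G_{\theta}]-(1-\lambda){\rm Var}_{\bar{p}^{\lambda}_{\theta}}(s_{\theta})$, which is \eqnum{hessiangeneral}.

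The only genuine obstacle is justifying the interchange of differentiation and integration; the rest is elementary matrix calculus, where the sole point demanding care is the bookkeeping of column-versus-row derivatives so that the outer products $s_{\theta}s_{\theta}^{T}$ and $E_{\bar{p}^{\lambda}_{\theta}}[s_{\theta}]E_{\bar{p}^{\lambda}_{\theta}}[s_{\theta}]^{T}$ come out in the correct order. In the Gaussian linear-regression setting targeted afterward in Lemma \ref{bqlemma} the densities are smooth and the relevant moments are finite, so a standard dominated-convergence argument legitimizes differentiating under the integral sign; I would either invoke this directly or record it as the needed regularity hypothesis.
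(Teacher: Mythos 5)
Your proposal is correct and follows essentially the same route as the paper: both identify $Z^{\lambda}_{\theta}$ with the expectation inside the logarithm of the single-sample R\'enyi divergence and then differentiate under the integral sign twice, recognizing the covariance at the end. The only cosmetic difference is that the paper organizes the second differentiation through the log-derivative of the tilted density $\bar{p}^{\lambda}_{\theta}$, whereas you apply the quotient rule to $(Z^{\lambda}_{\theta})^{-1}\int f_{\theta}s_{\theta}$ directly; the resulting three terms are identical.
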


 \begin{proof}
The normalizing constant is rewritten as
\begin{eqnarray*}
 Z^{\lambda}_{\theta}
  &=&\int q_*(x)p_*(y|x)\left(\frac{p_{\theta}(y|x)}{p_*(y|x)}\right)^{1-\lambda}dxdy\\
&=&E_{\bar{p}_*(x,y)}\left[ \left(\frac{p_{\theta}(y|x)}{p_*(y|x)}\right)^{1-\lambda}\right].
\end{eqnarray*}
Thus, the
 \renyi divergence is written as
 \[
 d_{\lambda}(p_*,p_{\theta})=-\frac{1}{1-\lambda}\log Z^{\lambda}_{\theta}.
 \]
  Next, we calculate the partial derivative of $\log
  Z^{\lambda}_{\theta}$ as
  \begin{eqnarray*}
\lefteqn{\frac{\partial \log Z^{\lambda}_{\theta}}{\partial \theta}}\\
    &=&
    \frac{1}{Z^{\lambda}_{\theta}}\frac{\partial
    Z^{\lambda}_{\theta}}{\partial \theta}\\
 &=&    \frac{1}{Z^{\lambda}_{\theta}}
  E_{\bar{p}_*}\left[\left(\frac{p_{\theta}(y|x)}{p_*(y|x)}\right)^{1-\lambda}\frac{\partial
				 }{\partial
				 \theta}\log
				 \left(\frac{p_{\theta}(y|x)}{p_*(y|x)}\right)^{1-\lambda}\right]\\
   &=&    \frac{1-\lambda}{Z^{\lambda}_{\theta}}
  E_{\bar{p}_*}\left[\left(\frac{p_{\theta}(y|x)}{p_*(y|x)}\right)^{1-\lambda}\frac{\partial \log p_{\theta}(y|x)}{\partial
				 \theta}
				 \right]\\
   &=&    \frac{1-\lambda}{Z^{\lambda}_{\theta}}
  \int q_*(x)p_*(y|x)^{\lambda}p_{\theta}(y|x)^{1-\lambda}s_{\theta}(y|x)dxdy\\
   &=&    (1-\lambda)E_{\bar{p}^{\lambda}_{\theta}}[s_{\theta}(y|x)]. 
  \end{eqnarray*}
Therefore, the first derivative is 
\begin{eqnarray*}
 \frac{\partial d_{\lambda}(p_*,p_{\theta})}{\partial
  \theta}\!\!\!\!&=&\!\!\!\!-\frac{1}{1-\lambda}\frac{\partial \log Z^{\lambda}_{\theta}}{\partial \theta}=-E_{\bar{p}^{\lambda}_{\theta}}\left[s_{\theta}(y|x)\right].
\end{eqnarray*}
  Furthermore, we have
  \begin{eqnarray*}
   \frac{\partial \log \bar{p}^{\lambda}_{\theta}(x,y)}{\partial
    \theta}
    &=&
    \frac{\partial 
    }{\partial
    \theta}\log \left(\frac{q_*(x)p_*(y|x)^{\lambda}p_{\theta}(y|x)^{1-\lambda}}{Z^{\lambda}_{\theta}}\right)\\
    &=&
(1-\lambda)    \frac{\partial \log
    p_{\theta}(y|x)}{\partial
    \theta}-\frac{\partial \log Z^{\lambda}_{\theta}}{\partial
    \theta}\\
    &=&
(1-\lambda) s_{\theta}(y|x)-(1-\lambda)E_{\bar{p}^{\lambda}_{\theta}}[s_{\theta}(y|x)]\\
    &=&
(1-\lambda) \left(s_{\theta}(y|x)-E_{\bar{p}^{\lambda}_{\theta}}[s_{\theta}(y|x)]\right).
  \end{eqnarray*}
Hence, 
\begin{eqnarray*}
 \lefteqn{\frac{\partial^2d_{\lambda}(p_*,p_{\theta})}{\partial \theta\partial
  \theta^T}}\\
 &\hskip-4mm=&\hskip-3mm-
\int s_{\theta}(y|x)\bar{p}^{\lambda}_{\theta}(x,y)\left(\frac{\partial \log
				     \bar{p}^{\lambda}_{\theta}(x,y)}{\partial
				     \theta}\right)^T
\\
&&+\bar{p}^{\lambda}_{\theta}(x,y)\frac{\partial	s_{\theta}(y|x)}{\partial \theta^T}dxdy\\
 &\hskip-4mm=&\hskip-3mm-E_{\bar{p}^{\lambda}_{\theta}}\bigg[(1-\lambda)s_{\theta}(y|x)\left(s_{\theta}(y|x)-E_{\bar{p}^{\lambda}_{\theta}}[s_{\theta}(y|x)]\right)^T\\
&&  +\frac{\partial
				   ^2\log p_{\theta}(y|x)}{\partial \theta\partial\theta^T}\bigg]\\
&\hskip-4mm=&\hskip-3mm-E_{\bar{p}^{\lambda}_{\theta}}\left[\frac{\partial
				   ^2\log p_{\theta}(y|x)}{\partial
				   \theta\partial\theta^T}\right]-(1-\lambda)\\
&\hskip-4mm&\hskip-7mm\cdot
E_{\bar{p}^{\lambda}_{\theta}}\bigg[\!\!
\left(s_{\theta}(y|x)\!-\!E_{\bar{p}^{\lambda}_{\theta}}\left[s_{\theta}(y|x)\right]\right)\left(s_{\theta}(y|x)\!-\!E_{\bar{p}^{\lambda}_{\theta}}[s_{\theta}(y|x)]\right)^T\!\bigg]\\
&\hskip-4mm=&\hskip-3mm-E_{\bar{p}^{\lambda}_{\theta}}\left[\frac{\partial
				   ^2\log p_{\theta}(y|x)}{\partial
				   \theta\partial\theta^T}\right]\!\!-\!(1-\lambda)\mbox{Var}_{\bar{p}^{\lambda}_{\theta}}\left(s_{\theta}(y|x)\right).
\end{eqnarray*}
 \end{proof}

 \begin{lemma}\label{bqlemma}
  Let
\begin{eqnarray*}
&& \theta(\lambda):=\lambda\theta^*+(1-\lambda)\theta,\quad
  \bar{\theta}:=\theta-\theta^*,\quad  \bar{\theta}':=\Sigma^{1/2}\bar{\theta},
  \nonumber\\
&& c:=\frac{\sigma^2}{\lambda(1-\lambda)}.
\end{eqnarray*}
  If we assume that $p_*(y|x)=N(y|x^T\theta^*,\sigma^2)$ (\thatis,
  linear regression setting),
\begin{eqnarray}
 p^{\lambda}_{\theta}(y|x)\!\!\!\!&=&\!\!\!\!N(y|x^T\theta(\lambda),\sigma^2), \nonumber\\
 q^{\lambda}_{\theta}(y|x)\!\!\!\!&=&\!\!\!\!\frac{q_*(x)\exp\left(-\frac{1}{2c}(x^T\dtheta)^2\right)}{Z^{\lambda}_{\theta}}, \nonumber\\
 \frac{\partial d_{\lambda}(p_*,p_{\theta})}{\partial \theta}\!\!\!\!&=&\!\!\!\!
\frac{\lambda}{\sigma^2}E_{q^{\lambda}_{\theta}}[xx^T]\bar{\theta},\nonumber\\
 \frac{\partial^2d_{\lambda}(p_*,p_{\theta})}{\partial \theta\partial
  \theta^T}
  \!\!\!\!&=&\!\!\!\!
\frac{\lambda}{\sigma^2}E_{q^{\lambda}_{\theta}}[xx^T]-\frac{\lambda}{\sigma^2c}{\rm
Var}_{q^{\lambda}_{\theta}}\!\left(xx^T\bar{\theta}\right).\label{genhessianlasso}
\end{eqnarray}
If we additionally assume that $q_*(x)=N(x|\mib{0},\Sigma)$ with a
  non-singular covariance matrix $\Sigma$, 
\begin{eqnarray}
 q^{\lambda}_{\theta}(x)&=&N(x|\mib{0},\Sigma^{\lambda}_{\theta}),\nonumber\\
 \frac{\partial d_{\lambda}(p_*,p_{\theta})}{\partial \theta}\!\!\!\!&=&\!\!\!\!
\frac{\lambda}{\sigma^2}\left(\frac{c}{c+\|\bar{\theta}'\|^2_2}\right)\Sigma^{1/2}\bar{\theta}',\label{normalfirstderiv}\\
 \frac{\partial^2d_{\lambda}(p_*,p_{\theta})}{\partial \theta\partial
  \theta^T}
  \!\!\!\!&=&\!\!\!\!
   \frac{\lambda}{\sigma^2}\left(\frac{c}{c+\|\dtheta'\|_2^2}\right)\Sigma\nonumber\\
   &&
   -\frac{2\lambda}{\sigma^2}
\left(\frac{c}{(c+\|\dtheta'\|_2^2)^2}\right)\Sigma^{1/2}\dtheta'\left(\dtheta'\right)^T\Sigma^{1/2},\nonumber\\\label{normalhessianlasso}
\end{eqnarray}
  where
\[
  \esigma:=\Sigma-\frac{\Sigma^{1/2}\bar{\theta}'(\bar{\theta}')^T\Sigma^{1/2}}{c+\|\bar{\theta}'\|^2_2}. 
\]
\end{lemma}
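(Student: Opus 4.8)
The plan is to specialize the general derivative formulas of Lemma \ref{renyiview} to the Gaussian likelihoods, proceeding in the two stages in which the statement is organized.

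First stage (linear regression, general $q_*$). I would begin by completing the square in $y$ inside $p_*(y|x)^{\lambda}p_{\theta}(y|x)^{1-\lambda}$. Writing $a=x^T\theta^*$ and $b=x^T\theta$, the elementary identity $\lambda(y-a)^2+(1-\lambda)(y-b)^2=(y-x^T\theta(\lambda))^2+\lambda(1-\lambda)(x^T\bar{\theta})^2$ shows that the $y$-dependent factor normalizes to $N(y|x^T\theta(\lambda),\sigma^2)$, so $p^{\lambda}_{\theta}(y|x)=N(y|x^T\theta(\lambda),\sigma^2)$, while the residual $x$-only factor $\exp(-\tfrac{\lambda(1-\lambda)}{2\sigma^2}(x^T\bar{\theta})^2)=\exp(-\tfrac{1}{2c}(x^T\bar{\theta})^2)$, with $c=\sigma^2/(\lambda(1-\lambda))$, yields the stated $q^{\lambda}_{\theta}(x)$. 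For the derivatives I would insert the Gaussian score $s_{\theta}(y|x)=\sigma^{-2}(y-x^T\theta)x$ and $G_{\theta}=-\sigma^{-2}xx^T$ into (\ref{firstderivative})--(\ref{hessiangeneral}). Since under $\bar{p}^{\lambda}_{\theta}$ one has $E[y|x]=x^T\theta(\lambda)$ and $\theta(\lambda)-\theta=-\lambda\bar{\theta}$, the first derivative collapses to $(\lambda/\sigma^2)E_{q^{\lambda}_{\theta}}[xx^T]\bar{\theta}$. For the Hessian I would compute $\mathrm{Var}_{\bar{p}^{\lambda}_{\theta}}(s_{\theta})$ by conditioning on $x$: the conditional variance of $y$ contributes a $\sigma^{-2}E[xx^T]$ piece, and after subtracting $E[s_{\theta}]E[s_{\theta}]^T$ the remaining fourth-order piece is exactly $(\lambda^2/\sigma^4)\mathrm{Var}_{q^{\lambda}_{\theta}}(xx^T\bar{\theta})$; combining with $-E[G_{\theta}]=\sigma^{-2}E[xx^T]$ and using $\lambda/(\sigma^2 c)=\lambda^2(1-\lambda)/\sigma^4$ reproduces (\ref{genhessianlasso}).

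Second stage (normal design). With $q_*(x)=N(x|\mib{0},\Sigma)$ the marginal $q^{\lambda}_{\theta}(x)$ has exponent $-\tfrac12 x^T(\Sigma^{-1}+c^{-1}\bar{\theta}\bar{\theta}^T)x$, so it is $N(x|\mib{0},\esigma)$ with $(\esigma)^{-1}=\Sigma^{-1}+c^{-1}\bar{\theta}\bar{\theta}^T$; the Sherman--Morrison formula together with $\bar{\theta}'=\Sigma^{1/2}\bar{\theta}$, $\bar{\theta}^T\Sigma\bar{\theta}=\|\bar{\theta}'\|_2^2$ and $\Sigma\bar{\theta}=\Sigma^{1/2}\bar{\theta}'$ then gives the displayed $\esigma$. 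The first derivative is immediate from $E_{q^{\lambda}_{\theta}}[xx^T]=\esigma$ and the identity $\esigma\bar{\theta}=\tfrac{c}{c+\|\bar{\theta}'\|_2^2}\Sigma\bar{\theta}$, which recovers (\ref{normalfirstderiv}).

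The main obstacle is the Hessian (\ref{normalhessianlasso}), which needs the Gaussian fourth moment $\mathrm{Var}_{q^{\lambda}_{\theta}}(xx^T\bar{\theta})$. I would evaluate it by Isserlis' (Wick's) theorem: for $x\sim N(\mib{0},S)$ and $v=\bar{\theta}$ one obtains $E[(v^Tx)^2xx^T]=(v^TSv)S+2(Sv)(Sv)^T$, hence $\mathrm{Var}(xx^T\bar{\theta})=(v^TSv)S+(Sv)(Sv)^T$ with $S=\esigma$. Writing $\gamma:=c/(c+\|\bar{\theta}'\|_2^2)$, I would substitute $S\bar{\theta}=\gamma\,\Sigma^{1/2}\bar{\theta}'$ and $\bar{\theta}^TS\bar{\theta}=\gamma\|\bar{\theta}'\|_2^2$ into $H=(\lambda/\sigma^2)S-(\lambda/(\sigma^2c))\mathrm{Var}(xx^T\bar{\theta})$. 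The coefficient of $S$ simplifies through $1-\gamma\|\bar{\theta}'\|_2^2/c=\gamma$, and expanding $S=\Sigma-P/(c+\|\bar{\theta}'\|_2^2)$ with $P:=\Sigma^{1/2}\bar{\theta}'(\bar{\theta}')^T\Sigma^{1/2}$ and collecting the two rank-one contributions to $P$ produces the factor $2c/(c+\|\bar{\theta}'\|_2^2)^2$, which is exactly (\ref{normalhessianlasso}). The only genuinely delicate bookkeeping is this final collapse of the $P$-terms; everything else is routine Gaussian algebra.
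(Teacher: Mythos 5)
Your proposal is correct and follows essentially the same route as the paper's proof: completing the square to identify $p^{\lambda}_{\theta}$ and $q^{\lambda}_{\theta}$, specializing the derivative formulas of Lemma \ref{renyiview} with the Gaussian score, applying the Sherman--Morrison identity to obtain $\esigma$, and evaluating $\mathrm{Var}_{q^{\lambda}_{\theta}}(xx^T\bar{\theta})$ via the Gaussian fourth-moment (Isserlis) formula before the final rank-one bookkeeping. The only cosmetic difference is that you compute $\mathrm{Var}_{\bar{p}^{\lambda}_{\theta}}(s_{\theta})$ by conditioning on $x$, whereas the paper decomposes $s_{\theta}$ into two uncorrelated pieces; these are the same calculation.
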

\begin{proof}
By completing squares, we can rewrite
 $\bar{p}^{\lambda}_{\theta}(x,y)$ as
 \begin{eqnarray*}
  \lefteqn{\bar{p}^{\lambda}_{\theta}(x,y)}\\
  &&\hskip-7mm=\frac{q_*(x)}{\left(2\pi
				\sigma^2\right)^{\frac{1}{2}}Z^{\lambda}_{\theta}}\exp\left(-\frac{\left(\lambda(y-x^T\theta^*)^2+(1-\lambda)(y-x^T\theta)^2\right)}{2\sigma^2}\right)\\
  &&\hskip-7mm=
   \frac{q_*(x)}{\left(2\pi\sigma^2\right)^{\frac{n}{2}}Z^{\lambda}_{\theta}}\\
   &&\hskip-4mm\cdot \exp\left(-\frac{\left(
(y-x^T\theta(\lambda))^2+\lambda(1-\lambda)(x^T(\theta^*-\theta))^2
\right)}{2\sigma^2}\right)  \\
  &&\hskip-7mm=
   \frac{q_*(x)}{Z^{\lambda}_{\theta}}
   \exp\left(-\frac{
\lambda(1-\lambda)(x^T\dtheta)^2
}{2\sigma^2}\right)N(y|x^T\theta(\lambda),\sigma^2).
 \end{eqnarray*}
Hence, $p^{\lambda}_{\theta}(y|x)$ is
 $N(y|x^T\theta(\lambda),\sigma^2)$. Integrating $y$ out, we also have 
\begin{eqnarray*}
 q^{\lambda}_{\theta}(x)&=&
  \frac{q_*(x)\exp\left(-\frac{1}{2c}(x^T\bar{\theta})^2\right)}{Z^{\lambda}_{\theta}}.
\end{eqnarray*}
 When $q_*(x)=N(\mib{0},\Sigma)$, 
\begin{eqnarray}
 q^{\lambda}_{\theta}(x)
&=&\frac{\exp\left(-\frac{1}{2}x^T\Sigma^{-1}x-\frac{1}{2c}x^T\bar{\theta}\bar{\theta}^Tx\right)}{(2\pi)^{p/2}|\Sigma|^{1/2}Z^{\lambda}_{\theta}}\nonumber\\
&=&\frac{\exp\left(-\frac{1}{2}x^T\left(\Sigma^{-1}+\frac{1}{c}\bar{\theta}\bar{\theta}^T\right)x\right)}{(2\pi
 )^{p/2}|\Sigma|^{1/2}Z^{\lambda}_{\theta}}.\label{qcalc}
\end{eqnarray}
Since $\Sigma$ is strictly positive definite by the assumption,
 $\Sigma^{-1}+(1/c)\bar{\theta}\bar{\theta}^T$ is non-singular. 
Hence, by the inverse formula (Lemma \ref{inverseformula} in Appendix),
\begin{eqnarray}
\esigma&=&\left(\Sigma^{-1}+\frac{1}{c}\dtheta\dtheta^T\right)^{-1}=\Sigma-\frac{\Sigma\bar{\theta}\bar{\theta}^T\Sigma}{c+\bar{\theta}^T\Sigma\bar{\theta}}\nonumber \\
&=&\Sigma
 -\frac{\Sigma^{1/2}\bar{\theta}'(\bar{\theta}')^{T}\Sigma^{1/2}}{c+\|\bar{\theta}'\|^2_2}. \label{qsigma}
\end{eqnarray}
 Therefore, $q^{\lambda}_{\theta}(x)=N(x|\mib{0},\esigma)$.
The score function and Hessian of $\log p_{\theta}(y|x)$ are 
\begin{eqnarray}
 s_{\theta}(y|x)&=&\frac{1}{\sigma^2}x(y-x^T\theta),\nonumber\\
 \frac{\partial^2\log
  p_{\theta}(y|x)}{\partial \theta\partial \theta^T}&=&-\frac{1}{\sigma^2}xx^T.\label{secondderiv}
\end{eqnarray}
Using (\ref{firstderivative}), the first derivative is obtained as
\begin{eqnarray*}
\frac{\partial d_{\lambda}(p_*,p_{\theta})}{\partial
 \theta}&=&-E_{\bar{p}^{\lambda}_{\theta}}[s_{\theta}(y|x)]\\
 &=&-E_{q^{\lambda}_{\theta}}\left[E_{p^{\lambda}_{\theta}}[s_{\theta}(y|x)]\right]\\
&=&-E_{q^{\lambda}_{\theta}}\left[E_{p^{\lambda}_{\theta}}\left[\frac{1}{\sigma^2}x(y-x^T\theta)\right]\right]\\
 &=&-E_{q^{\lambda}_{\theta}}\left[\frac{1}{\sigma^2}xx^T(\theta(\lambda)-\theta)\right]\\
  &=&\frac{\lambda}{\sigma^2}E_{q^{\lambda}_{\theta}}\left[xx^T\right]\dtheta
\end{eqnarray*}
 because $\theta(\lambda)-\theta=-\lambda\bar{\theta}$.
 When $q_*(y|x)=N(\mib{0},\Sigma)$, 
 \[
  \frac{\partial d_{\lambda}(p_*,p_{\theta})}{\partial \theta}=\frac{\lambda}{\sigma^2}\esigma\bar{\theta}.
 \]
 From (\ref{qsigma}), we have
\begin{eqnarray}
 \esigma\dtheta&=&\Sigma\dtheta-\frac{\Sigma^{1/2}\bar{\theta}'(\bar{\theta}')^T\Sigma^{1/2}\dtheta}{c+\|\bar{\theta}'\|^2_2}\nonumber
  \\ 
 &=&\Sigma^{\frac{1}{2}}\dtheta'-\left(\frac{\|\dtheta'\|^2_2}{c+\|\bar{\theta}'\|^2_2}\right)\Sigma^{1/2}\bar{\theta}'\nonumber \\
 &=&\left(\frac{c}{c+\|\bar{\theta}'\|^2_2}\right)\Sigma^{1/2}\bar{\theta}', \label{Sthetabar}
\end{eqnarray}
 which gives (\ref{normalfirstderiv}). Though (\ref{normalhessianlasso})
 can be obtained by differentiating (\ref{normalfirstderiv}), we derive it
 by way of (\ref{hessiangeneral}) here.
To calculate the covariance matrix of $s_{\theta}$ in terms of
 $\bar{p}^{\lambda}_{\theta}$, we decompose $s_{\theta}$ as
\begin{eqnarray*}
 s_{\theta}(y|x)&=&\frac{1}{\sigma^2}x(y-x^T\theta(\lambda)+x^T\theta(\lambda)-x^T\theta)\\
 &=&\frac{1}{\sigma^2}x(y-x^T\theta(\lambda))-\frac{\lambda}{\sigma^2}xx^T\dtheta.
\end{eqnarray*}
 Note that the covariance of $(1/\sigma^2)x(y-x^T\theta(\lambda))$ and
 $-(\lambda/\sigma^2)xx^T\dtheta$ vanishes since
 \begin{eqnarray*}
  \lefteqn{E_{\bar{p}^{\lambda}_{\theta}}[x(y-x^T\theta(\lambda))(xx^T\dtheta)^T]}\\
  &=&E_{q^{\lambda}_{\theta}}\left[xx^T(x^T\dtheta)E_{p^{\lambda}_{\theta}}\left[(y-x^T\theta(\lambda))\right]\right]
   =0.
 \end{eqnarray*}
Therefore, we have
\begin{eqnarray*}
 \lefteqn{\mbox{Var}_{\bar{p}^{\lambda}_{\theta}}\left(s_{\theta}\right)}\\
 &=&
  \mbox{Var}_{\bar{p}^{\lambda}_{\theta}}\left(\frac{1}{\sigma^2}x(y-x^T\theta(\lambda))\right)
  +\mbox{Var}_{\bar{p}^{\lambda}_{\theta}}\left(\frac{\lambda}{\sigma^2}xx^T\dtheta\right)\\
 &=&
  \frac{1}{\sigma^4}E_{\bar{p}^{\lambda}_{\theta}}\left[(y-x^T\theta(\lambda))^2xx^T\right]
  +\frac{\lambda^2}{\sigma^4}\mbox{Var}_{q^{\lambda}_{\theta}}\left(xx^T\dtheta\right)\\
 &=&
  \frac{1}{\sigma^2}E_{q^{\lambda}_{\theta}}\left[xx^T\right]
  +\frac{\lambda^2}{\sigma^4}\mbox{Var}_{q^{\lambda}_{\theta}}\left(xx^T\dtheta\right)
\end{eqnarray*}
By (\ref{hessiangeneral}) combined with (\ref{secondderiv}), the Hessian of \renyi divergence is calculated as
\begin{eqnarray*}
 \lefteqn{\frac{\partial^2d_{\lambda}(p_*,p_{\theta})}{\partial \theta\partial
  \theta^T}}\\
 &=&\!\!\!\!\frac{1}{\sigma^2}E_{\bar{p}^{\lambda}_{\theta}}[xx^T]-(1-\lambda)\left(\frac{1}{\sigma^2}E_{q^{\lambda}_{\theta}}[xx^T]+\frac{\lambda^2}{\sigma^4}\mbox{Var}_{q^{\lambda}_{\theta}}\left(xx^T\bar{\theta}\right)\right)\\
&=&\!\!\!\!\frac{\lambda}{\sigma^2}E_{q^{\lambda}_{\theta}}[xx^T]-\frac{\lambda^2(1-\lambda)}{\sigma^4}\mbox{Var}_{q^{\lambda}_{\theta}}\left(xx^T\bar{\theta}\right) \\
&=&\!\!\!\!\frac{\lambda}{\sigma^2}E_{q^{\lambda}_{\theta}}[xx^T]-\frac{\lambda}{\sigma^2c}\mbox{Var}_{q^{\lambda}_{\theta}}\left(xx^T\bar{\theta}\right). 
\end{eqnarray*}
 When $q_*(x)=N(\mib{0},\Sigma)$,
 $\mbox{Var}_{q^{\lambda}_{\theta}}\left(xx^T\bar{\theta}\right)$ is
 calculated as follows.
Note that 
\begin{eqnarray*}
 \mbox{Var}_{q^{\lambda}_{\theta}}(xx^T\dtheta)
=E_{q^{\lambda}_{\theta}}\left[(xx^T\dtheta)(xx^T\dtheta)^T\right]
-(\esigma\dtheta)(\esigma\dtheta)^T.
\end{eqnarray*}
The $(j_1,j_2)$ element of
$E_{q^{\lambda}_{\theta}}\left[xx^T\dtheta\dtheta^Txx^T\right]$
is calculated as
\begin{eqnarray*}
 E_{q^{\lambda}_{\theta}}\left[\left(xx^T\dtheta\dtheta^Txx^T\right)_{j_1j_2}\right]
\!=\!\!\!\!\sum_{j_3,j_4=1}^p\!\!\!\dtheta_{j_3}\dtheta_{j_4}E_{q^{\lambda}_{\theta}}\left[x_{j_1}x_{j_2}x_{j_3}x_{j_4}\right],
\end{eqnarray*}
where $x_j$ denotes the $j$th element of $x$ only here. 
Thus, we need all the fourth-moments of
$q_{\theta}^{\lambda}(x)$. We rewrite $\esigma$ as $S$ to reduce
 notation complexity hereafter. By the formula of moments of Gaussian distribution, we have
\[
 E_{q^{\lambda}_{\theta}}\left[x_{j_1}x_{j_2}x_{j_3}x_{j_4}\right]=S_{j_1j_2}S_{j_3j_4}+S_{j_1j_3}S_{j_2j_4}+S_{j_2j_3}S_{j_1j_4}.
\] 
Therefore, the above quantity is calculated as
\begin{eqnarray*}
\lefteqn{E_{q^{\lambda}_{\theta}}\left[\left(xx^T\dtheta\dtheta^Txx^T\right)_{j_1j_2}\right]}\\
&=&\!\!\!\!\sum_{j_3,j_4=1}^p\dtheta_{j_3}\dtheta_{j_4}(S_{j_1j_2}S_{j_3j_4}+S_{j_1j_3}S_{j_2j_4}+S_{j_2j_3}S_{j_1j_4})\\
&=&\dtheta^TS\dtheta S_{j_1j_2}+2(S\dtheta)_{j_1}(S\dtheta)_{j_2}.
\end{eqnarray*}
Summarizing these as a matrix form, we have
\begin{eqnarray*}
 E_{q^{\lambda}_{\theta}}\left[xx^T\dtheta\dtheta^Txx^T\right]
&=&(\dtheta^TS\dtheta)S+2S\dtheta(S\dtheta)^T.
\end{eqnarray*}
As a result,
$\mbox{Var}_{q^{\lambda}_{\theta}}(xx^T\dtheta)$
is obtained as
\begin{eqnarray}
 \mbox{Var}_{q^{\lambda}_{\theta}}(xx^T\dtheta)
&=&(\dtheta^TS\dtheta)S+2S\dtheta\dtheta^TS
-S\dtheta\dtheta^TS\nonumber\\
&=&S\dtheta\dtheta^TS+(\dtheta^TS\dtheta)S. \label{varst}
\end{eqnarray}
 Using (\ref{Sthetabar}), the first and second terms of (\ref{varst}) are calculated as
\begin{eqnarray*}
 S\dtheta\dtheta^TS
  &=&\left(\frac{c^2}{(c+\|\dtheta'\|_2^2)^2}\right)\Sigma^{1/2}\dtheta'\left(\dtheta'\right)^T\Sigma^{1/2},\\
  \dtheta^TS\dtheta
   &=&\left(\frac{c}{c+\|\bar{\theta}'\|^2_2}\right)(\dtheta)^T\Sigma^{1/2}\bar{\theta}'\\
   &=&\frac{c\|\dtheta'\|_2^2}{c+\|\dtheta'\|_2^2}.
 \end{eqnarray*}
 Combining these,
 \begin{eqnarray*}
\lefteqn{\frac{\partial^2d_{\lambda}(p_*,p_{\theta})}{\partial \theta\partial
 \theta^T}}\\
  &=&
\frac{\lambda}{\sigma^2}S-\frac{\lambda}{\sigma^2c}
   \bigg(
\left(\frac{c^2}{(c+\|\dtheta'\|_2^2)^2}\right)\Sigma^{1/2}\dtheta'\left(\dtheta'\right)^T\Sigma^{1/2}\\
  &&+
   \left(\frac{c\|\dtheta'\|_2^2}{c+\|\dtheta'\|_2^2}\right)S
   \bigg)\\
  &=&
   \frac{\lambda}{\sigma^2}\left(\frac{c}{c+\|\dtheta'\|_2^2}\right)S\\
   &&
   -\frac{\lambda}{\sigma^2}
\left(\frac{c}{(c+\|\dtheta'\|_2^2)^2}\right)\Sigma^{1/2}\dtheta'\left(\dtheta'\right)^T\Sigma^{1/2}
\\
  &=&
   \frac{\lambda}{\sigma^2}\left(\frac{c}{c+\|\dtheta'\|_2^2}\right)\left(\Sigma-\frac{\Sigma^{1/2}\dtheta'(\dtheta')^T\Sigma^{1/2}}{c+\|\dtheta'\|^2_2}\right)\\
   &&
   -\frac{\lambda}{\sigma^2}
\left(\frac{c}{(c+\|\dtheta'\|_2^2)^2}\right)\Sigma^{1/2}\dtheta'\left(\dtheta'\right)^T\Sigma^{1/2}\\
  &=&
   \frac{\lambda}{\sigma^2}\left(\frac{c}{c+\|\dtheta'\|_2^2}\right)\Sigma\\
   &&
   -\frac{2\lambda}{\sigma^2}
\left(\frac{c}{(c+\|\dtheta'\|_2^2)^2}\right)\Sigma^{1/2}\dtheta'\left(\dtheta'\right)^T\Sigma^{1/2}.
 \end{eqnarray*}
\end{proof}
\subsection{Upper Bound of Negative Hessian}\label{hessianlemmasec}
Using Lemma \ref{bqlemma} in Section \ref{bqlemmasec}, we show that the negative Hessian of the
\renyi divergence is bounded from above. 
  \begin{lemma}\label{hessianlemma}
Assume that $q_*(x)=N(x|\mib{0},\Sigma)$ and
   $p_*(y|x)=N(y|x^T\theta^*,\sigma^2)$, where $\Sigma$ is non-singular.
   For any $\theta,\theta^*$,
\begin{eqnarray}
 -\frac{\partial^2d_{\lambda}(p_*,p_{\theta})}{\partial \theta\partial
  \theta^T}
  \preceq
\frac{\lambda}{8\sigma^2}\Sigma, \label{hessianinequality}
\end{eqnarray}
  where $A\preceq B$ implies that $B-A$ is positive semi-definite. 
 \end{lemma}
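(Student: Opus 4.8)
The plan is to start from the closed-form Hessian \eqnum{normalhessianlasso} provided by Lemma \ref{bqlemma} and reduce the asserted matrix inequality to an elementary scalar optimization. Writing $a:=\|\dtheta'\|_2^2\ge 0$ and negating \eqnum{normalhessianlasso} yields
\begin{eqnarray*}
-\frac{\partial^2d_{\lambda}(p_*,p_{\theta})}{\partial \theta\partial \theta^T}
=\frac{\lambda}{\sigma^2}\Bigl[-\frac{c}{c+a}\Sigma+\frac{2c}{(c+a)^2}\Sigma^{1/2}\dtheta'(\dtheta')^T\Sigma^{1/2}\Bigr].
\end{eqnarray*}
Since the positive factor $\lambda/\sigma^2$ multiplies both sides of \eqnum{hessianinequality}, it suffices to prove
\begin{eqnarray*}
-\frac{c}{c+a}\Sigma+\frac{2c}{(c+a)^2}\Sigma^{1/2}\dtheta'(\dtheta')^T\Sigma^{1/2}\preceq \frac{1}{8}\Sigma.
\end{eqnarray*}

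Because $\Sigma$ is non-singular, I would next conjugate this inequality by $\Sigma^{-1/2}$ on both sides. Conjugation by an invertible symmetric matrix preserves the order $\preceq$, and it clears every factor of $\Sigma$, collapsing the problem to the dimension-free statement
\begin{eqnarray*}
-\frac{c}{c+a}I+\frac{2c}{(c+a)^2}\dtheta'(\dtheta')^T\preceq \frac{1}{8}I.
\end{eqnarray*}
The left-hand side is a scalar multiple of the identity plus a single rank-one term, so its spectrum is transparent: on the orthogonal complement of $\dtheta'$ the eigenvalue is $-c/(c+a)\le 0\le 1/8$, which already satisfies the bound, while along $\dtheta'$ the eigenvalue is $-\frac{c}{c+a}+\frac{2ca}{(c+a)^2}$, using $(\dtheta')^T\dtheta'=a$.

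It therefore remains only to bound the eigenvalue along $\dtheta'$. Simplifying, that eigenvalue equals $c(a-c)/(c+a)^2$, and substituting $t:=a/c\ge 0$ turns it into $g(t):=(t-1)/(1+t)^2$. The maximization of $g$ over $t\ge 0$ is elementary: $g'(t)=(3-t)/(1+t)^3$ vanishes only at $t=3$, where $g(3)=1/8$ is the global maximum (with $g\to 0^{+}$ as $t\to\infty$). Hence the eigenvalue along $\dtheta'$ never exceeds $1/8$, and both directions obey the required bound, establishing \eqnum{hessianinequality}.

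I do not expect a genuine obstacle in this argument; the one step requiring care is the whitening conjugation by $\Sigma^{-1/2}$, which is licensed precisely by the non-singularity hypothesis on $\Sigma$, together with the observation that the rank-one structure makes the spectral analysis exact rather than merely an estimate. It is worth noting that the constant $1/8$ is not arbitrary but equals $\max_{t\ge 0}g(t)$, so the inequality \eqnum{hessianinequality} is in fact tight and is attained when $\|\dtheta'\|_2^2=3c$.
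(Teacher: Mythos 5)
Your proof is correct and follows essentially the same route as the paper: both start from the closed-form Hessian \eqnumn{normalhessianlasso} of Lemma \ref{bqlemma} and reduce the bound to maximizing the scalar function $c(t-c)/(c+t)^2$ over $t\ge 0$, whose maximum $1/8$ at $t=3c$ gives the constant. The only cosmetic difference is that you whiten by $\Sigma^{-1/2}$ and read off the exact rank-one spectrum, whereas the paper bounds the rank-one term via Cauchy--Schwarz ($\Sigma^{1/2}\dtheta'(\dtheta')^T\Sigma^{1/2}\preceq\|\dtheta'\|_2^2\,\Sigma$); since that bound is an equality along $\dtheta'$, the two arguments coincide, and your tightness remark is a small bonus not stated in the paper.
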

 \begin{proof}
  By Lemma \ref{bqlemma}, we have
  \begin{eqnarray*}
 -\frac{\partial^2d_{\lambda}(p_*,p_{\theta})}{\partial \theta\partial
  \theta^T}
  \!\!\!\!&=&\!\!\!\!
   \frac{2\lambda}{\sigma^2}
\left(\frac{c}{(c+\|\dtheta'\|_2^2)^2}\right)\Sigma^{1/2}\dtheta'\left(\dtheta'\right)^T\Sigma^{1/2}\\
   &&
   -\frac{\lambda}{\sigma^2}\left(\frac{c}{c+\|\dtheta'\|_2^2}\right)\Sigma.
  \end{eqnarray*}
 For any nonzero vector $v\in \Re^p$,
 \begin{eqnarray*}
  v^T\Sigma^{1/2}\dtheta'\left(\dtheta'\right)^T\Sigma^{1/2}v
   &=& \left(v^T\Sigma^{1/2}\dtheta'\right)^2\\
   &\le& \|\Sigma^{1/2}v\|_2^2\cdot \|\dtheta'\|^2_2=v^T(\|\dtheta'\|^2_2\,\Sigma) v
 \end{eqnarray*}
 by Cauchy-Schwartz inequality. Hence, we have
 \[
  \Sigma^{1/2}\dtheta'\left(\dtheta'\right)^T\Sigma^{1/2}\preceq \|\dtheta'\|^2_2\,\Sigma.
 \]
 Thus,
 \begin{eqnarray*}
\lefteqn{-\frac{\partial^2d_{\lambda}(p_*,p_{\theta})}{\partial \theta\partial
 \theta^T}}\\
 &\preceq&
   \frac{2\lambda}{\sigma^2}
\left(\frac{c\|\dtheta'\|_2^2}{(c+\|\dtheta'\|_2^2)^2}\right)\Sigma
-\frac{\lambda}{\sigma^2}\left(\frac{c}{c+\|\dtheta'\|_2^2}\right)\Sigma\\
  &=&
   \frac{\lambda}{\sigma^2}
\left(\frac{c(\|\dtheta'\|_2^2-c)}{(c+\|\dtheta'\|_2^2)^2}\right)\Sigma.
 \end{eqnarray*}
 Define
 \[
 f(t):=\frac{c(t-c)}{(c+t)^2}
 \]
 for $t\ge 0$. Checking the properties of $f(t)$, we have
 \begin{eqnarray*}
  f(0)&=&-1,\\
  f(c)&=&0,\\
  f(\infty)&=&0,\\
   \frac{df(t)}{dt}&=&\frac{c(3c-t)}{(t+c)^3}.
 \end{eqnarray*}
 Therefore, $\max_{t\in [0,\infty)}f(t)=f(3c)=1/8$. As a result, we
 obtain
 \[
-\frac{\partial^2d_{\lambda}(p_*,p_{\theta})}{\partial \theta\partial
 \theta^T}\preceq
   \frac{\lambda}{8\sigma^2}\Sigma.
 \]
 \end{proof}

\subsection{Proof of Lemma \ref{erv4lasso}}\label{erv4lassoproof}
We are now ready to derive $\epsilon$-risk valid weighted $\ell_1$
penalties. 
 \begin{proof}
Similarly to the rewriting from (\ref{rv}) to (\ref{rv2}), we can
  rewrite the condition for $\epsilon$-risk validity as 
\begin{eqnarray}
\lefteqn{\forall x^n \in \typicalset, \forall y^n\in \rs{Y}^n, \forall
\theta\in \Theta,} \nonumber\\  
&&\hskip-7mm\min_{\ctheta \in \cTheta(q_*)}
 \Bigl\{
 \underbrace{
d^n_{\lambda}(p_*,p_{\theta})-
d^n_{\lambda}(p_*,p_{\ctheta})
}_{\mbox{loss variation part}}
+
 \underbrace{
\log
\frac{p_{\theta}(y^n|x^n)}{p_{\ctheta}(y^n|x^n)}+\ctpen(\ctheta|q_*)
}_{\mbox{codelength validity part}}
\Bigr\}
\nonumber \\
&& \hskip-7mm\le \copen(\theta|x^n).\label{erv2}
\end{eqnarray}
  We again write the inside part of the minimum in (\ref{erv2}) as
  $H(\theta,\ctheta,x^n,y^n)$.
As described in Section \ref{bctheory}, the direct minimization of
 $H(\theta,\ctheta,x^n,y^n)$ seems to be difficult.
Instead of evaluating the minimum explicitly, we borrow a nice randomization
technique introduced in \cite{chatterjeebarron14} with some
 modifications.
Their key idea is to evaluate not $\min_{\ctheta
 }H(\theta,\ctheta,x^n,y^n)$ directly but its expectation $E_{\ctheta
 }[H(\theta,\ctheta,x^n,y^n)]$ with respect 
to a dexterously randomized $\ctheta $ because the expectation is 
larger than the minimum.  
Let us define $w^*:=(w_1^*,w_2^*,\cdots, w_p^*)^T$, where
$w_j^*=\sqrt{\Sigma_{jj}}$ and $W^*:=\mbox{diag}(w^*_1,\cdots, w^*_p)$. 
  We quantize $\Theta$ as
\begin{equation}
  \cTheta(q_*):=\{\delta (W^*)^{-1}z|z\in
 \rs{Z}^p\}, \label{quantizationrd}
\end{equation}
 where $\delta>0$ is a quantization width and $\rs{Z}$ is the set of all
 integers.
 Though $\cTheta$ depends on $x^n$ in fixed design cases
 \cite{chatterjeebarron14}, we must remove the dependency to satisfy the
 $\epsilon$-risk validity as above. 
For each $\theta$, $\ctheta $ is randomized as 
\begin{eqnarray}
 \ctheta _{j}=\left\{
\begin{array}{cl}
\frac{\delta}{w_j^*} \lceil m_{j}\rceil& \mbox{ with prob. } m_{j}-\lfloor m_{j}\rfloor\\
\frac{\delta}{w_j^*} \lfloor m_{j}\rfloor & \mbox{ with prob. }\lceil
 m_{j}\rceil-m_{j}\\
\frac{\delta}{w_j^*} m_{j} & \mbox{ with prob. }1-(\lceil m_{j}\rceil
 -\lfloor m_{j}\rfloor )\\
\end{array}
\right., \label{tildethetarandom}
\end{eqnarray}
where $m_j:=w_j^*\theta_j/\delta$ and each component of $\ctheta $
 is statistically independent of each other. Its important properties are
\begin{eqnarray}
 &&E_{\ctheta}[\ctheta]=\theta,\quad \mbox{(unbiasedness)}\nonumber\\
 &&E_{\ctheta}[|\ctheta|]=|\theta|, \label{absoluteunbiasedness}\\
 &&E_{\ctheta}[(\ctheta_j-\theta_j)(\ctheta_{j'}-\theta_{j'})]\le
I(j=j')\frac{\delta}{w_j^*}|\theta_j|, \nonumber 
\end{eqnarray}
 where $|\ctheta|$ denotes a vector whose $j$th component is the absolute
 value of $\ctheta_j$ and similarly for $|\theta|$. 
Using these, we can bound $E_{\ctheta}[H(\theta,\ctheta,x^n,y^n)]$ as follows.
The loss variation part in (\ref{erv2}) is the main concern because it
is more complicated than squared error of fixed design cases. Let us
consider the following Taylor expansion 
\begin{eqnarray}
\lefteqn{\hskip-0.8cm d^n_{\lambda}(p_*,p_{\theta})-d^n_{\lambda}(p_*,p_{\ctheta})=-\left(\frac{\partial
 d^n_{\lambda}(p_*,p_{\theta})}{\partial\theta}\right)^T(\ctheta-\theta)}\nonumber\\
 &&-\frac{1}{2}\trace\left(\frac{\partial^2d^n_{\lambda}(p_*,p_{\theta^{\circ}})}{\partial\theta\partial
											     \theta^T}(\ctheta -\theta)(\ctheta -\theta)^T\right), \label{type1expansion}
\end{eqnarray}
where $\theta^{\circ}$ is a vector between $\theta$ and $\ctheta $. The first term in the right side of
\eqnumn{type1expansion} vanishes after taking expectation
with respect to $\ctheta $ because
$E_{\ctheta }[\ctheta -\theta]=0$.
 As for the second term, we obtain 
\begin{eqnarray*}
&&\trace\left(\!-\frac{\partial^2d^n_{\lambda}(p_*,p_{\theta^{\circ}})}{\partial\theta\partial
		  \theta^T}(\ctheta -\theta)(\ctheta -\theta)^T\right)\\
&&\le
\frac{n\lambda}{8\sigma^2}\trace\left(\Sigma\left(\ctheta -\theta\right)\left(\ctheta -\theta\right)^T\right)
\end{eqnarray*}
 by Lemma \ref{hessianlemma}. Thus, expectation of the loss variation part with respect to $\ctheta$ is
  bounded as
\begin{equation}
E_{\ctheta}\left[d^n_{\lambda}(p_*,p_{\theta})-d^n_{\lambda}(p_*,p_{\ctheta })\right]\le
\frac{\delta n\lambda}{16\sigma^2}\|\theta\|_{w^*,1}. \label{elvp}
\end{equation}
The codelength validity part in (\ref{erv2}) have the same form as
that for the fixed design case in its appearance. However, we need to evaluate it
again in our setting because both $\cTheta$ and $\ctpen$ are different
 from those in \cite{chatterjeebarron14}. 
The likelihood term is calculated as
\begin{eqnarray*}
\frac{1}{2\sigma^2}\Big(\!2(Y-X\theta)^TX(\theta-\ctheta )
\!\!+\!\!\trace\big(X^TX(\ctheta -\theta)(\ctheta -\theta)^T\big)\!\Big).
\end{eqnarray*}
  Taking expectation with respect to $\ctheta $, we have
\begin{eqnarray}
\hskip-6mm E_{\ctheta }\left[\log\frac{p_{\theta}(y^n|x^n)}{p_{\ctheta }(y^n|x^n)}\right]
\!\!\!\!\!&=&\!\!\!\!\! \frac{n}{2\sigma^2}E_{\ctheta }\!\left[\trace\!\left(W^2(\ctheta -\theta)(\ctheta -\theta)^T\right)\right]\nonumber\\
\!\!\!\!\!&\le&\!\!\!\!\!\frac{\delta n}{2\sigma^2}\sum_{j=1}^p\frac{w_j^2}{w_j^*}|\theta_j|,\nonumber
\end{eqnarray}
where $W:= \mbox{diag}(w_1,w_2,\cdots, w_p)$.
We define a codelength
function $C(z):=\|z\|_1\log 4p+\log 2$ over $\rs{Z}^p$.  Note that $C(z)$ satisfies Kraft's inequality.
  Let us define a codelength function on $\cTheta(q_*)$ as
\begin{eqnarray}
 \ctpen(\ctheta |q_*):=\frac{1}{\beta}C\left(\frac{1}{\delta}W^*\ctheta
				       \right)=\frac{1}{\beta\delta}\|W^*\ctheta
 \|_1\log 4p+\frac{\log 2}{\beta}. \label{ctpen4lasso}
\end{eqnarray}
By this definition, $\ctpen$ satisfies $\beta$-stronger Kraft's inequality and
 does not depend on $x^n$ but depends on $q_*(x)$ through $W^*$.
By taking expectation with respect to $\ctheta $, we have
\begin{eqnarray*}
 E_{\ctheta }\left[\ctpen(\ctheta |q_*)\right]=\frac{\log 4p}{\beta\delta}\|\theta\|_{w^*,1}+\frac{\log 2}{\beta}
\end{eqnarray*}
 because of (\ref{absoluteunbiasedness}). 
Thus, the codelength validity part is bounded above by
\begin{equation}
 \frac{\delta n}{2\sigma^2}\sum_{j=1}^p\frac{w_j^2}{w_j^*}|\theta_j|+\frac{\log 4p}{\beta\delta}\|\theta\|_{w^*,1}+\frac{\log 2}{\beta}.\nonumber
\end{equation}
Combining with the loss variation part, we obtain an upper bound of
$E_{\ctheta }[H(\theta,\ctheta,x^n,y^n)]$ as
\[
\frac{\delta n\lambda}{16\sigma^2}
\|\theta\|_{w^*,1}
\!+\!\frac{\delta n}{2\sigma^2}\sum_{j=1}^p\frac{w_j^{2}}{w_j^*}|\theta_j|
\!+\!\frac{\log 4p}{\beta\delta}\|\theta\|_{w^*,1}
\!+\!\frac{\log
2}{\beta}. 
\]
 Since $x^n\in \typicalset$, we have
 \[
 \sqrt{(1-\epsilon)}w_j^*\le w_j\le \sqrt{(1+\epsilon)}w_j^*.
 \]
Thus, we can bound $E_{\ctheta }[H(\theta,\ctheta,x^n,y^n)]$ by the data-dependent weighted $\ell_1$ norm $\|\theta\|_{w,1}$ as 
\begin{eqnarray}
 \lefteqn{E_{\ctheta }[H(\theta,\ctheta,x^n,y^n)]}\nonumber\\
&\hskip-3mm\le& \hskip-2mm\frac{\delta n\lambda}{16\sigma^2}
\frac{\|\theta\|_{w,1}}{\sqrt{1-\epsilon}}
+\frac{\delta n\sqrt{1+\epsilon}}{2\sigma^2}\sum_{j=1}^p\frac{w_j^{2}}{w_j}|\theta_j|
+\frac{\log 4p}{\beta\delta}\frac{\|\theta\|_{w,1}}{\sqrt{1-\epsilon}}\nonumber\\
&&+\frac{\log 2}{\beta}\nonumber\\
&\hskip-3mm=&\hskip-2mm\left(
  \frac{\delta n}{\sigma^2}
  \left(\frac{\lambda}{16\sqrt{1-\epsilon}}\!+\!\frac{\sqrt{1+\epsilon}}{2}\right)\!+\!
  \frac{\log 4p}{\delta \beta\sqrt{1-\epsilon}}\right)
\|\theta\|_{w,1}\!+\!\frac{\log 2}{\beta}.\nonumber
\end{eqnarray}
Because this holds for any $\delta>0$, we can minimize
the upper bound with respect to $\delta $, which completes the proof.
 \end{proof}
\subsection{Some Remarks on the Proof of Lemma \ref{erv4lasso}}\label{someremarks}
The main difference of the proof from the fixed design case is in the loss
variation part. In the fixed design case, the \renyi divergence
$d_{\lambda}(p_*,p_{\theta}|x^n)$ is convex in terms of $\theta$. When the
\renyi divergence is convex, the negative Hessian is negative
semi-definite for all $\theta$. Hence, the loss variation part is
trivially bounded above by zero. On the other hand,
$d_{\lambda}(p_*,p_{\theta})$ is not convex in terms of $\theta$.
This can be intuitively seen by deriving the explicit form of
$d_{\lambda}(p_*,p_{\theta})$ instead of checking the positive
semi-definiteness of its Hessian.
From (\ref{qcalc}), we have
\begin{eqnarray}
\hskip-9mm Z^{\lambda}_{\theta}
  &=&
\int \frac{\exp\left(-\frac{1}{2}\left(x^T(\esigma)^{-1}x\right)\right)}{(2\pi)^{p/2}|\Sigma|^{1/2}}dx\nonumber\\
&=&|\Sigma|^{-1/2}|\esigma|^{1/2}=|\Sigma^{-1/2}\esigma\Sigma^{-1/2}|^{1/2}\nonumber\\
&=& 
 \left|I_p-\left(\frac{1}{c+\|\bar{\theta}'\|_2^2}\right)\bar{\theta}'\left(\bar{\theta}'\right)^T\right|^{1/2}\nonumber\\
 &=&
  \left|I_p-\left(\frac{\|\bar{\theta}'\|_2^2}{c+\|\bar{\theta}'\|_2^2}\right)\left(\frac{\bar{\theta}'}{\|\bar{\theta}'\|_2}\right)\left(\frac{\bar{\theta}'}{\|\bar{\theta}'\|_2}\right)^T\right|^{1/2}\!\!,\label{oldrenyi}
\end{eqnarray}
where $I_p$ is the identity matrix of dimension $p$. 
Prof. A. R. Barron suggested in a 
private discussion that $Z^{\lambda}_{\theta}$ can be
simplified more as follows. 
Let $Q:=[q_1,q_2,\cdots, q_p]$ be an orthogonal matrix such that
$q_1:=\bar{\theta}'/\|\bar{\theta}'\|_2$. Using this, we have
\begin{eqnarray*}
 \lefteqn{I_p-\left(\frac{\|\bar{\theta}'\|_2^2}{c+\|\bar{\theta}'\|_2^2}\right)\left(\frac{\bar{\theta}'}{\|\bar{\theta}'\|_2}\right)\left(\frac{\bar{\theta}'}{\|\bar{\theta}'\|_2}\right)^T}\\
 &=&\!\!\!\!QQ^T\!-\!\left(\frac{\|\bar{\theta}'\|_2^2}{c+\|\bar{\theta}'\|_2^2}\right)q_1q_1^T\\
 &=&\left(1-\left(\frac{\|\bar{\theta}'\|_2^2}{c+\|\bar{\theta}'\|_2^2}\right)\right)q_1q_1^T\!+\!\sum_{j=2}^pq_jq_j^T\\
 &=&\left(\frac{c}{c+\|\bar{\theta}'\|_2^2}\right)q_1q_1^T\!+\!\sum_{j=2}^pq_jq_j^T\\
 &=&\!\!\!\!Q\left(
     \begin{array}{ccccc}
      c/(c+\|\dtheta'\|^2_2) & 0 & 0 & \cdots &  0\\
      0 &1 & 0 & \cdots & 0\\
      0 &0 & 1 & \cdots &0\\
      \vdots & \vdots & \vdots & \vdots & \vdots\\
      0 &0 & 0&\cdots  & 1\\
     \end{array}
    \right)Q^T.
\end{eqnarray*}
Hence, the resultant $Z^{\lambda}_{\theta}$ is obtained as
\begin{eqnarray*}
 Z^{\lambda}_{\ctheta}&=&\left|I_p-\gamma(\|\bar{\theta}'\|_2^2)\left(\frac{\bar{\theta}'}{\|\bar{\theta}'\|_2}\right)\left(\frac{\bar{\theta}'}{\|\bar{\theta}'\|_2}\right)^T\right|^{\frac{1}{2}}\\
  &=&\left(\frac{c}{c+\|\bar{\theta}'\|_2^2}\right)^{\frac{1}{2}}.
\end{eqnarray*}
Thus, we have a simple expression of the \renyi divergence as
\begin{equation}
 d_{\lambda}(p_*,p_{\theta})=\frac{1}{2(1-\lambda)}\log \left(1+\frac{\|\dtheta'\|^2_2}{c}\right).\label{andrewsuggestion}
\end{equation}
From this form, we can easily know that the \renyi divergence is not
convex. When the \renyi divergence is non-convex, it is unclear in
general whether and how the loss variation part is bounded above. This is one
of the main reasons why the derivation becomes more difficult than that
of the fixed design case. 

We also mention an alternative proof of Lemma \ref{erv4lasso} based on (\ref{andrewsuggestion}). We
provided Lemma \ref{renyiview} to calculate Hessian of the \renyi
divergence. 
However, the above simple expression of the
\renyi divergence is somewhat easier to differentiate, while the
expression based on (\ref{oldrenyi}) is somewhat hard to do it. Therefore, we can twice differentiate the above \renyi
divergence directly in order to obtain Hessian instead of Lemma
\ref{bqlemma} in our Gaussian setting. However, there is no guarantee
that such a simplification is always possible in general setting. 
In our proof, we tried to give a somewhat systematic way which is
easily applicable to other settings to some extent. 
Suppose now, for example, we are aim at deriving $\epsilon$-risk valid $\ell_1$
penalties for lasso when $q_*(x)$ is subject to non-Gaussian
distribution. 
By (\ref{genhessianlasso}) in Lemma \ref{bqlemma}, it suffices only to
bound $\mbox{Var}_{q^{\lambda}_{\theta}}(xx^T\dtheta)$ in the sense of
positive semi-definiteness because $-E_{q^{\lambda}_{\theta}}[xx^T]$ is
negative semi-definite.   
In general, it seemingly depends on a situation which is
better, the direct differential or using (\ref{genhessianlasso}).
In our Gaussian setting, we imagine that the easiest way to calculate
Hessian for most readers is to calculate the first derivative by the
formula (\ref{firstderivative}) and then to differentiate it directly,
though this depends on readers' background knowledge. 
For other settings, we believe that providing Lemmas \ref{renyiview} and
\ref{bqlemma} would be useful in some cases. 
\subsection{Proof of Lemma \ref{expbound}}\label{expboundproof}
Here, we show that $x^n$ distributes out of $\typicalset$ with
exponentially small probability with respect to $n$. 
 \begin{proof}
  The typical set $\typicalset$ can be decomposed covariate-wise as
\begin{eqnarray*}
 \typicalset &=&\Pi_{j=1}^p\typicalset(j),\\
 \typicalset(j)
  &:=&\left\{\mib{x}_j\in \Re^n\,\big|
			\left|(w_j^*)^2-(\|\mib{x}_j\|^2_2/n)\right|\le
			\epsilon (w_j^*)^2\right\}\\
  &=&\left\{\mib{x}_j\in \Re^n\,\big|
			\left|(w_j^*)^2-w_j^2)\right|\le
			\epsilon (w_j^*)^2\right\}, 
\end{eqnarray*}
where $\mib{x}_j:=(x_{1j},x_{2j},\cdots, x_{nj})^T$ and the above $\Pi$
  denotes a direct product of sets. From its definition, $w_j^2$ is subject to a Gamma distribution
  ${\rm Ga}((n/2),(2s)/n)$ when $\mib{x}_j\sim
  \Pi_{i=1}^nN(x_j|0,(w_j^*)^2)$.
 We write $w_j^2$ as $z$ and $(w_j^*)^2$ as $s$ (the index $j$ is
  dropped for legibility). We rewrite the Gamma distribution $g(z;s)$ in the form of exponential family:
\begin{eqnarray*}
 g(z;s) \!\!\!\!&:=&\!\!\!\!{\rm Ga}\left(\frac{n}{2},\frac{2s}{n}\right)
=\frac{\Gamma(\frac{n}{2})}z^{\frac{n}{2}-1}\exp\left(-\frac{nz}{2s}\right)}{\left(\frac{2s}{n}\right)^{\frac{n}{2}}\\
&=&\!\!\!\!\exp\bigg(\frac{n-2}{2}\log
 z-\frac{nz}{2s}-\log \left(\frac{2s}{n}\right)^{\frac{n}{2}}\Gamma\left(\frac{n}{2}\right)\bigg)\\
&=&\!\!\!\!\exp\left(C(z)+\nu z-\psi(\nu)\right),
\end{eqnarray*}
 where
 \begin{eqnarray*}
  C(z)&:=&\left(\frac{n-2}{2}\right)\log z,\quad 
   \nu:=-\frac{n}{2s},\\
  \psi(\nu)&:=&\log (-\nu)^{-n/2}\Gamma(n/2).
 \end{eqnarray*}
That is, $\nu$ is a natural parameter and $z$ is a sufficient statistic,
 so that the expectation parameter $\eta(s)$ is $E_{g(z;s)}[z]$. The
  relationship between the variance parameter $s$ and
  natural/expectation parameters are summarized as 
\[
 \nu(s):=-\frac{n}{2s},\quad\eta(\nu)=-\frac{n}{2\nu}.
\]
For exponential families, there is a useful Sanov-type inequality (Lemma
  \ref{exp2bound} in Appendix).
Using this Lemma, we can bound $\Pr(\mib{x}_j\notin
\typicalset(j))$ as follows. 
For this purpose, it suffices to bound the probability of the event
  $|w_j^2-w_j^{*2}|\le w_j^{*2}\epsilon$. When $s=(w_j^*)^2$ and $s'=s(1\pm
\epsilon)$, 
\begin{eqnarray*}
 \lefteqn{\rs{D}(\nu(s\pm
  \epsilon s),\nu)}\\
&=&\left(-\frac{n}{2s(1\pm \epsilon)}-\left(-\frac{n}{2s}\right)\right)s(1\pm
  \epsilon)-\frac{n}{2}\log(1\pm \epsilon)\\
&=&\left(-\frac{n}{2s}\right)\left(\frac{1}{(1\pm \epsilon)}-1\right)s(1\pm
  \epsilon)-\frac{n}{2}\log(1\pm \epsilon)\\
&=&\left(-\frac{n}{2}\right)\left(1-(1\pm \epsilon)\right)-\frac{n}{2}\log(1\pm \epsilon)\\
&=&\frac{n}{2}\left(\pm \epsilon-\log(1\pm \epsilon)\right),
\end{eqnarray*}
where $\rs{D}$ is the single data version of the KL-divergence defined
  by (\ref{kldiv}). 
It is easy to see that $\epsilon-\log(1+\epsilon)\le
-\epsilon-\log(1-\epsilon)$ for any $0<\epsilon<1$. By Lemma
  \ref{exp2bound}, we obtain
\begin{eqnarray*}
 \lefteqn{\Pr(|w_j^2-w_j^{*2}|\le \epsilon w_j^{*2})}\\
&=& 1-\Pr(w_j^2-w_J^{*2}\ge \epsilon w_j^{*2}\mbox{ or }w_J^{*2}-w_j^2\ge \epsilon w_j^{*2})\\
&=& 1-\Pr(w_j^2-w_J^{*2}\ge \epsilon w_j^{*2})-\Pr(w_J^{*2}-w_j^2\ge \epsilon w_j^{*2})\\
 &\ge& 1-\exp\left(-\frac{n}{2}(\epsilon-\log(1+\epsilon))\right)\\
 &&-\exp\left(-\frac{n}{2}(-\epsilon-\log(1-\epsilon))\right)\\
&\ge& 1-2\exp\left(-\frac{n}{2}(\epsilon-\log(1+\epsilon))\right).
\end{eqnarray*}
Hence $\inprob$ can be bounded below as
\begin{eqnarray*}
 \inprob&=&\Pr(x^n\in \typicalset)=\Pi_{j=1}^p(1-\Pr(\mib{x}_j\notin
  \typicalset(j)))\\
&\ge &\left(1-2\exp\left(-\frac{n}{2}(\epsilon-\log(1+\epsilon))\right)\right)^p\\
&\ge &1-2p\exp\left(-\frac{n}{2}(\epsilon-\log(1+\epsilon))\right).
\end{eqnarray*}
The last inequality follows from $(1-t)^p\ge 1-pt$ for any $t\in
  [0,1]$ and $p\ge 1$.
To simplify the bound, we can do more. 
The maximum positive real number $a$
  such that, for any $\epsilon\in [0,1]$, $a\epsilon^2\le
  (1/2)(\epsilon-\log(1+\epsilon))$ is $(1-\log 2)/2$. Then, the
  maximum integer $a_1$ such that $(1-\log 2)/2\ge 1/a_1$ is $7$, which
  gives the last inequality in the statement.
 \end{proof}
 \subsection{Proof of Lemma \ref{cv4lasso}}\label{proofcv4lasso}
We can prove this lemma by checking the proof of Lemma \ref{erv4lasso}.
  \begin{proof}
Let
     \[
      L_1(\theta|x^n):=\mu_1\|\theta\|_{w,1}+\mu_2.
     \]
      Similarly to the rewriting from (\ref{rvrd}) to (\ref{rv2}),
      we can restate the codelength validity condition for
      $L_1(\theta|x^n)$ as ``there exist a quantize subset $\cTheta(x^n)$
      and a model description length $\ctpen(\ctheta|x^n)$ satisfying
      the usual Kraft's inequality, such that
\begin{eqnarray}
\lefteqn{
\forall x^n \in \rs{X}^n,\
\forall y^n\in \rs{Y}^n,\ \forall \theta\in \Theta,} \nonumber\\
 &&\hskip-7mm
  \min_{\ctheta\in \cTheta(x^n)}\bigg\{
  \log\frac{p_{\theta}(y^n|x^n)}{p_{\ctheta}(y^n|x^n)}+\ctpen(\ctheta|x^n)\bigg\}\le
  L_1(\theta|x^n) \label{cvcond2}.\mbox{''}
\end{eqnarray}
      Recall that (\ref{rvcondition}) is
      a sufficient condition for the $\epsilon$-risk validity of $L_1$,
      in fact, it was derived as a sufficient condition for the proposition that
      $L_1(\theta|x^n)$ bounds from above
      \begin{eqnarray}
     &\hskip-1cm E_{\ctheta}[H(\theta,\ctheta,v^n,y^n)]=&
\!\!\!\!\underbrace{E_{\ctheta}\left[
d^n_{\lambda}(p_*,p_{\theta})-
d^n_{\lambda}(p_*,p_{\ctheta})
           \right]
}_{\mbox{(i)}} \nonumber\\
       &&\!\!\!\!+
\underbrace{
E_{\ctheta}\left[
\log
\frac{p_{\theta}(y^n|v^n)}{p_{\ctheta}(y^n|v^n)}
+\ctpen(\ctheta|q_*)\right]}_{\mbox{(ii)}}
\label{eH}
      \end{eqnarray}
    for any $q_*\in \rs{P}^n_x$, $v^n \in \typicalset$, $y^n\in
      \rs{Y}^n$, $\theta\in \Theta$, where $\ctheta$ was randomized on
      $\cTheta(q_*)$ and $(\cTheta(q_*),\ctpen(\ctheta|q_*))$ were defined by
      (\ref{quantizationrd}) and (\ref{ctpen4lasso}), in particular,
$\ctpen(\ctheta|q_*)$ satisfies $\beta$-stronger Kraft's inequality.
  Recall that $H(\theta,\ctheta,x^n,y^n)$ is the inside part of the minimum in (\ref{erv2}).
      Here, we used $v^n$ instead of $x^n$ so as to discriminate from the above
      fixed $x^n$.
To derive the sufficient condition,
we obtained upper bounds on the terms (i) and (ii)
of (\ref{eH})
respectively,
and shown that $L_1(\theta|v^n)$ with $v^n \in A_\epsilon^n$
is not less than the sum of both upper bounds if (\ref{rvcondition}) is satisfied.
A point is that the upper bound on the term (i) we derived is
a non-negative function of $\theta$ (see (\ref{elvp})).
Hence, if $v^n \in A_\epsilon^n$ and (\ref{rvcondition}) hold,
$L_1(\theta|v^n)$ is an upper bound on
the term (ii), which is not less than
\[
  \min_{\ctheta\in \cTheta(q^*)}\bigg\{
  \log\frac{p_{\theta}(y^n|v^n)}{p_{\ctheta}(y^n|v^n)}+\ctpen(\ctheta|q^*)\bigg\}.
\]

Now, assume (\ref{rvcondition}) and
let us take $q_*\in \mathcal{P}^n_x$ given $x^n$, such that
$\Sigma_{jj}$ is equal to $(1/n)\sum_{i=1}^nx^2_{ij}$ for all
$j$.
Then we have $x^n \in A_\epsilon^n$, which implies
\[
L_1(\theta|x^n) \geq
  \min_{\ctheta\in \cTheta(q^*)}\bigg\{
  \log\frac{p_{\theta}(y^n|x^n)}{p_{\ctheta}(y^n|x^n)}+\ctpen(\ctheta|q^*)\bigg\}.
\]
Since
$q^*$ is determined by $x^n$
and $\ctpen(\ctheta|q^*)$ satisfies Kraft's inequality,
the codelength validity condition holds for $L_1$.
  \end{proof}
 \section{Numerical Simulations}\label{simulation}
 We investigate behavior of the regret bound
\eqnum{regretboundlasso}. In the regret bound, we take $\beta=1-\lambda$
with which the regret bound becomes tightest. Furthermore, $\mu_1$ and $\mu_2$ are taken as their
smallest values in (\ref{rvcondition}).
As described before, we cannot obtain the exact bound for KL divergence
which gives the most famous loss function, the mean square error (MSE), in this setting.
This is because the regret bound diverges to the infinity as
$\lambda\rightarrow 1$ unless $n$ is accordingly large enough. That is,
we can obtain only the approximate evaluation of the MSE.
The precision of that approximation depends on the sample size $n$. We
 do not employ the MSE here but another famous loss function, squared
Hellinger distance $d^2_H$ (for a single data). The Hellinger
 distance was defined in (\ref{hellinger}) as $n$ sample version
 (\thatis, $d^2_H=d^{2,1}_H$). 
 We can obtain a regret bound for $d^2_H(p_*,p_{\hat{\theta}})$ by (\ref{regretboundlasso})
 because two times the squared Hellinger distance $2d^2_H$ is bounded by Bhattacharyya
 divergence ($d_{0.5}$) in (\ref{bhattacharyya}) through the
 relationship (\ref{alpharenyi}).  
We set $n=200$, $p=1000$ and $\Sigma=I_p$ to mimic a typical
situation of sparse learning. 
The lasso estimator is calculated by a proximal gradient method
\cite{Beck2009}.
To make the regret bound tight, we take $\tau=0.03$ that is close
to zero compared to the main term (regret).
For this $\tau$,
Fig. \ref{figprob} shows the plot of (\ref{lowerboundP}) against
$\epsilon$.
We should choose the smallest $\epsilon$ as long as the regret bound
holds with large probability.
Our choice is $\epsilon=0.5$ at which the value of (\ref{lowerboundP}) is $0.81$.
We show the results of two cases in Figs. \ref{fig1}-\ref{fig3}. These plots express the value of
$d_{0.5}$, $2d^2_H$ and the regret bound that were
obtained in a hundred of repetitions with different
signal-to-noise ratios (SNR)
$E_{q_*}[(x^T\theta^*)^2]/\sigma^2$ (that is, different $\sigma^2$).
From these figures and other experiments, we observed that
$2d_H^2$ almost always equaled $d_{0.5}$ (they were almost overlapped).
As the SN ratio got larger, then the regret bound became looser, for example, about six times larger than $2d_H^2$ when SNR is $10$. One of
the reasons is that the $\epsilon$-risk validity condition is too strict to bound
the loss function when SNR is high.
Hence, a possible way to improve the risk bound is to restrict the parameter space
$\Theta$ used in $\epsilon$-risk validity to a range of $\hat{\theta}$, which is
expected to be considerably narrower than $\Theta$ due to high SNR.
In contrast, the regret bound is tight when SNR is 0.5 in
Fig. \ref{fig3}.
Finally, we remark that the regret bound dominated the \renyi divergence
over all trials, though the regret bound is probabilistic.
One of the reason is the looseness of the lower bound
$(\ref{lowerboundP})$ of the probability for the regret bound to hold. 
This suggests that $\epsilon$ can be reduced more if we can derive its
tighter bound.
\begin{figure}
\centering
 \includegraphics[width=2.7in]{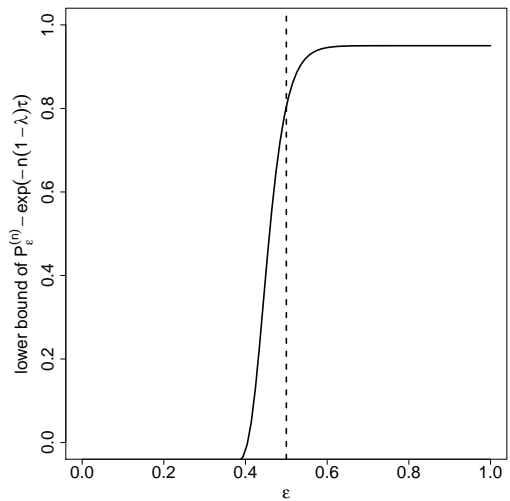}
 \caption{Plot of \eqnum{lowerboundP} against $\epsilon\in (0,1)$ when $n=200,p=1000$ and
 $\tau=0.03$. The dotted vertical line indicates $\epsilon=0.5$.}
\label{figprob}
\end{figure}
\begin{figure}
\centering
 \includegraphics[width=2.7in]{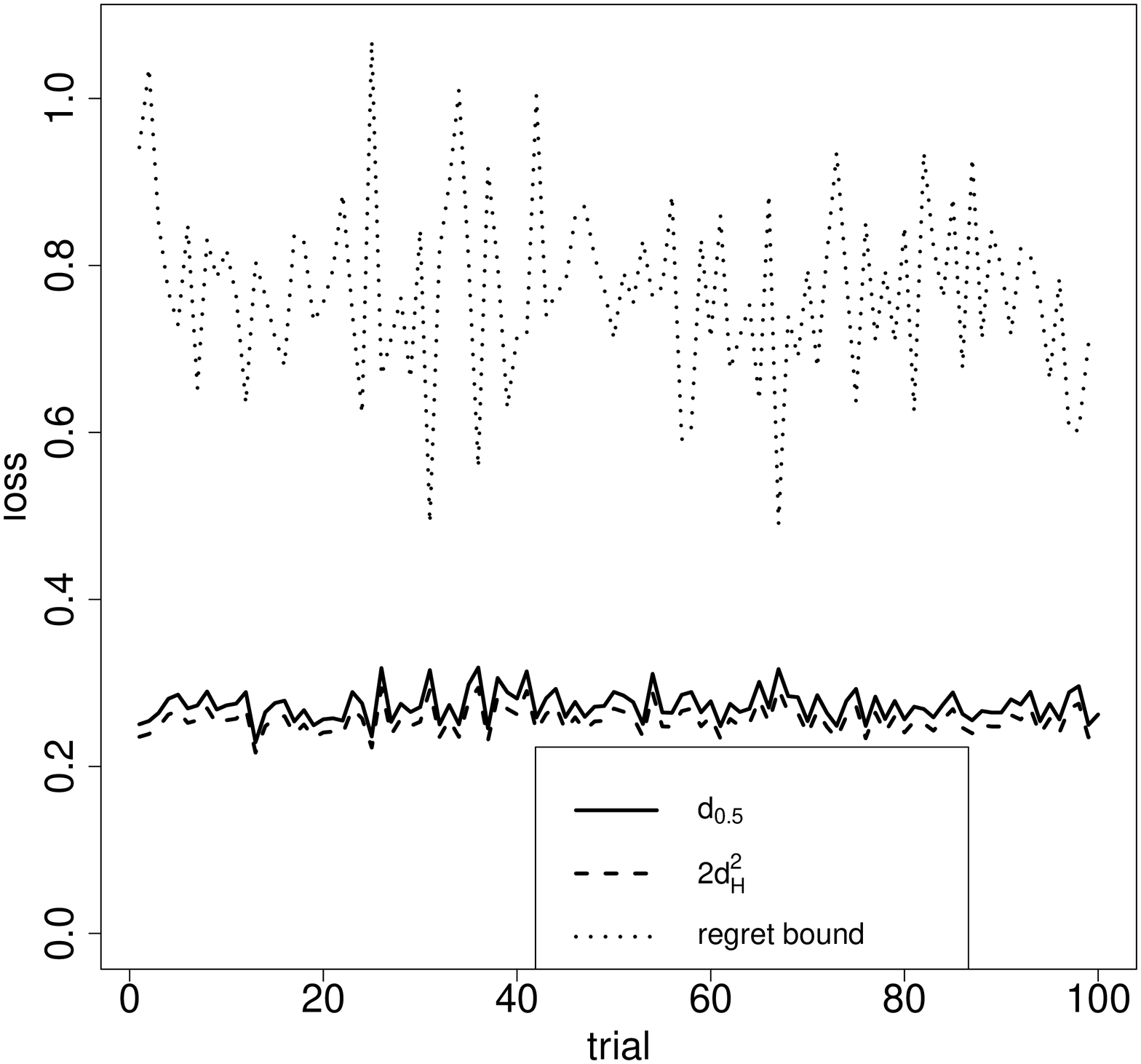}
 \caption{Plot of $d_{0.5}$ (Bhattacharyya div.), $2d_H^2$
 (Hellinger dist.) and the regret bound with
 $\tau=0.03$ in case that SNR=1.5.} 
\label{fig1}
\end{figure}
\begin{figure}
\centering
 \includegraphics[width=2.7in]{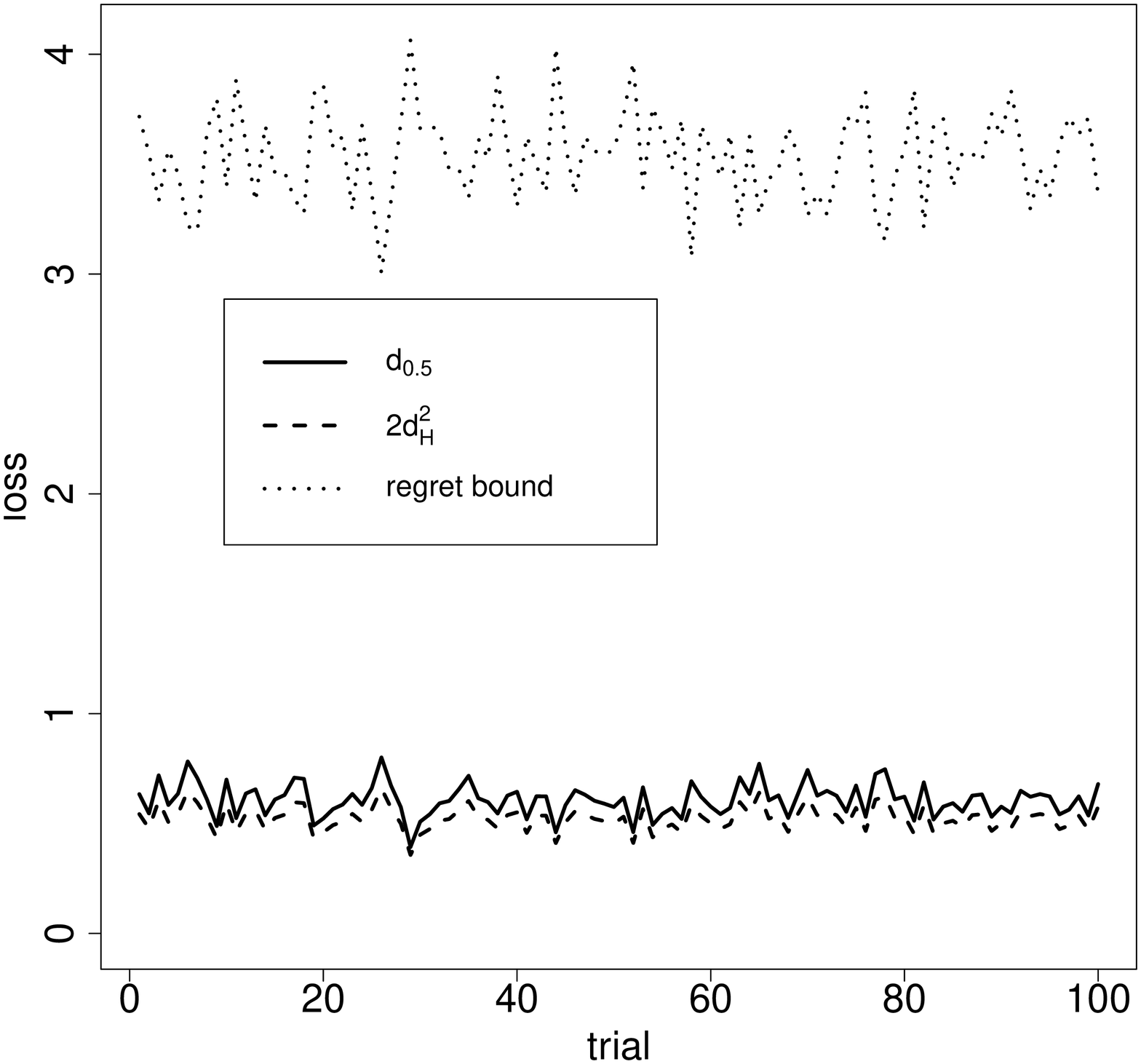}
 \caption{Plot of $d_{0.5}$ (Bhattacharyya div.), $2d_H^2$
 (Hellinger dist.) and the regret bound with
 $\tau=0.03$ in case that SNR=10.} 
\label{fig2}
\end{figure}
\begin{figure}[t]
\centering
 \includegraphics[width=2.7in]{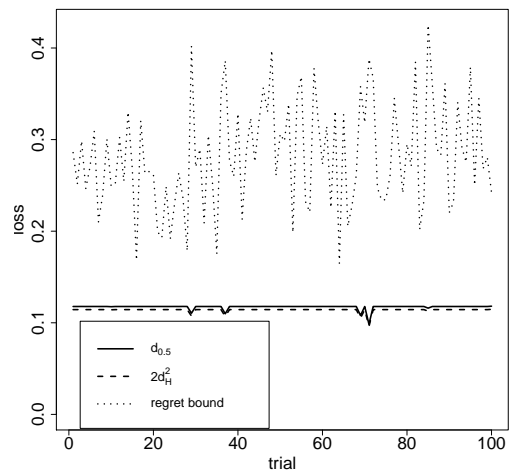}
 \caption{Plot of $d_{0.5}$ (Bhattacharyya div.), $2d_H^2$
 (Hellinger dist.) and the regret bound with
 $\tau=0.03$ in case that SNR=0.5.} 
\label{fig3}
\end{figure}

\section{Conclusion}\label{conclusion}
We proposed a way to extend the original BC theory to supervised
learning by using a typical set. Similarly to the original BC theory,
our extension also gives a mathematical justification of the MDL
principle for supervised learning. As an application, we derived a new
risk and regret bounds of lasso. The derived bounds still retains
various advantages of the original BC theory. In particular, it requires
considerably few assumptions.
Our next challenges are applying our proposal to non-normal cases for lasso and other machine learning methods.

\appendix[Sanov-type Inequality]\label{sanovsection}
The following lemma is a special case of the result in
\cite{csiszar84}. Below, we give a simpler proof. In the lemma, we denote a random variable of one
dimension by $X$ and denote its corresponding one dimensional variable
by $x$. 
  \begin{lemma}\label{exp2bound}
 Let 
\[
 x\sim p_{\theta}(x):=\exp(\theta x-\psi(\theta)),
\]
where $x$ and $\theta$ are of one dimension. Then, 
\begin{eqnarray*}
 \textstyle{\Pr_{\theta}}(X\ge \eta')\le \exp(-\rs{D}(\theta',\theta))&{\rm if
  }\,\,\eta'\ge \eta,\\
 \textstyle{\Pr_{\theta}}(X\le \eta')\le \exp(-\rs{D}(\theta',\theta))& {\rm if
  }\,\,\eta'\le \eta,
\end{eqnarray*}
where $\eta$ is the expectation parameter corresponding to the natural
   parameter $\theta$ and similarly for $\eta'$.
   The symbol $\rs{D}$ denotes the single sample version of the
   KL-divergence defined by (\ref{kldiv}).
  \end{lemma}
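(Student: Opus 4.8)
The plan is to use the exponential Chernoff--Cram\'er bound and to exploit the Legendre duality between the natural and expectation parameters of the one-dimensional exponential family. I would treat the upper tail first. For any tilting parameter $s\ge 0$, Markov's inequality applied to $e^{sX}$ gives $\Pr_{\theta}(X\ge \eta')\le e^{-s\eta'}E_{\theta}[e^{sX}]$. The exponential-family form makes the moment generating function explicit, since $E_{\theta}[e^{sX}]=\int \exp(sx+\theta x-\psi(\theta))\,dx=\exp(\psi(\theta+s)-\psi(\theta))$ by the fact that $p_{\theta+s}$ integrates to one. Hence $\Pr_{\theta}(X\ge \eta')\le \exp(\psi(\theta+s)-\psi(\theta)-s\eta')$.

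Next I would optimize the exponent over $s$. Differentiating in $s$ and setting the derivative to zero yields $\psi'(\theta+s)=\eta'$. Since $\psi'$ is precisely the mean (expectation-parameter) map, and $\theta'$ is by definition the natural parameter with mean $\eta'$, i.e. $\psi'(\theta')=\eta'$, the optimal choice is $s^{*}=\theta'-\theta$. The convexity of $\psi$, equivalently the monotonicity of the mean map $\psi'$, guarantees that $\eta'\ge \eta$ forces $\theta'\ge \theta$, so that $s^{*}\ge 0$ is admissible. Substituting $s=s^{*}$ gives $\Pr_{\theta}(X\ge \eta')\le \exp(\psi(\theta')-\psi(\theta)-(\theta'-\theta)\eta')$.

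The final step is to recognize this exponent as $-\rs{D}(\theta',\theta)$. Writing the log-likelihood ratio explicitly, $\log(p_{\theta'}(x)/p_{\theta}(x))=(\theta'-\theta)x-\psi(\theta')+\psi(\theta)$, and taking expectation under $p_{\theta'}$, whose mean is $\eta'$, gives $\rs{D}(\theta',\theta)=(\theta'-\theta)\eta'-\psi(\theta')+\psi(\theta)$, which is exactly the negative of the bound's exponent. This establishes the first inequality. The lower-tail case $\eta'\le \eta$ is entirely symmetric: I would apply Markov's inequality to $e^{sX}$ with $s\le 0$, obtaining the same bound $\exp(\psi(\theta+s)-\psi(\theta)-s\eta')$, and the same stationary point $s^{*}=\theta'-\theta$ now satisfies $s^{*}\le 0$ because $\eta'\le \eta$ implies $\theta'\le \theta$.

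Since the argument is a textbook Chernoff/Cram\'er computation, I do not expect a deep obstacle; the only points requiring care are the admissibility of the sign of the optimal tilt $s^{*}$, which is handled by monotonicity of the mean map, and the clean identification of the optimized exponent with the KL divergence, which follows from the exponential-family bookkeeping above.
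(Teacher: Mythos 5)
Your proposal is correct and follows essentially the same route as the paper's proof: both apply Markov's inequality to the exponentially tilted variable $e^{sX}$ with tilt $s=\theta'-\theta$, compute the moment generating function via the cumulant function $\psi$, and identify the resulting exponent $(\theta'-\theta)\eta'-\psi(\theta')+\psi(\theta)$ with $\rs{D}(\theta',\theta)$. The only cosmetic difference is that you arrive at the tilt $s^{*}=\theta'-\theta$ by optimizing the Chernoff exponent, whereas the paper substitutes this choice directly and uses monotonicity of the mean map to justify the direction of the inequality.
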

   \begin{proof}
In this setting, the KL divergence is calculated as 
\[
 \rs{D}(\theta,\theta')=E_{p_{\theta}}\left[\log\left(\frac{p_{\theta}(X)}{p_{\theta'}(X)}\right)\right]=(\theta-\theta')\eta-\psi(\theta)+\psi(\theta').
\]
  Assume $\eta'-\eta\ge 0$.
 Because of the monotonicity of natural parameter and expectation
 parameter of exponential family, 
\begin{eqnarray*}
 X\ge \eta' &\Leftrightarrow& (\theta'-\theta)X\ge (\theta'-\theta)\eta'\\
  &\Leftrightarrow& \exp\left((\theta'-\theta)X\right)\ge \exp\left((\theta'-\theta)\eta'\right).
\end{eqnarray*}
By Markov's inequality, we have
\begin{eqnarray*}
 \lefteqn{\textstyle{\Pr_{\theta}}\left(\exp\left((\theta'-\theta)X\right)\ge
  \exp\left((\theta'-\theta)\eta'\right)\right)}\\
 &\le&
 \frac{E_{p_{\theta}}\left[\exp\left((\theta'-\theta)X\right)\right]}{\exp\left((\theta'-\theta)\eta'\right)}\\
&=&\int \exp(\theta x-\psi(\theta))\exp((\theta'-\theta)x)dx\cdot \exp(-(\theta'-\theta)\eta')\\
&=&\int \exp(\theta' x-\psi(\theta))dx\cdot \exp(-(\theta'-\theta)\eta')\\
&=&\exp(\psi(\theta'))\exp(-\psi(\theta))\cdot \exp(-(\theta'-\theta)\eta')\\
&=&\exp(-\left((\theta'-\theta)\eta'-\psi(\theta')+\psi(\theta)\right)).
\end{eqnarray*}
The other inequality can also be proved in the same way.
   \end{proof}
 \section*{Inverse matrix formula}
\begin{lemma}\label{inverseformula}
Let $A$ be a non-singular $m\times m$ matrix. If $c$ and $d$ are both $m\times 1$
 vectors and $A+cd$ is non-singular, then
\[
 (A+cd^T)^{-1}=A^{-1}-\frac{A^{-1}cd^TA^{-1}}{1+d^TA^{-1}c}.
\]
\end{lemma}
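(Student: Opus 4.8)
The plan is to prove the identity by direct verification, exploiting the fact that $d^T A^{-1} c$ is a scalar. Write $\alpha := 1 + d^T A^{-1} c$ and let $B := A^{-1} - \alpha^{-1} A^{-1} c d^T A^{-1}$ denote the claimed inverse. The strategy is simply to compute $(A + c d^T) B$ and show it equals the identity $I_m$; the entire argument hinges on repeatedly collapsing the scalar $d^T A^{-1} c$ (which equals $\alpha - 1$) so that the rank-one correction terms telescope.

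Concretely, I would first expand
\[
(A + c d^T) A^{-1} = I_m + c d^T A^{-1},
\]
and then handle the correction term. Since $d^T A^{-1} c$ is a scalar, one gets
\[
(A + c d^T)\,\alpha^{-1} A^{-1} c d^T A^{-1}
= \alpha^{-1}\Bigl( c d^T A^{-1} + c\,(d^T A^{-1} c)\, d^T A^{-1}\Bigr)
= \alpha^{-1}\bigl(1 + (\alpha - 1)\bigr)\, c d^T A^{-1}
= c d^T A^{-1}.
\]
Subtracting the second line from the first yields $(A + c d^T) B = I_m + c d^T A^{-1} - c d^T A^{-1} = I_m$, as desired. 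The same manipulation on the other side (or, more cheaply, the observation that $A + c d^T$ is a square matrix possessing a right inverse and is therefore invertible with $B$ as its two-sided inverse) completes the verification.

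The only genuinely subtle point, rather than an obstacle, is that the formula presupposes $\alpha = 1 + d^T A^{-1} c \neq 0$, so that $B$ is well defined. I would note that this is guaranteed by the hypotheses via the matrix determinant lemma: since $\det(A + c d^T) = \det(A)\,(1 + d^T A^{-1} c)$ and both $A$ and $A + c d^T$ are assumed non-singular, the factor $\alpha$ cannot vanish. Alternatively, one can sidestep invoking the determinant lemma by arguing directly: if $\alpha = 0$ then $c d^T A^{-1} c = -c$, from which one checks that $A^{-1} c$ lies in the kernel of $A + c d^T$, contradicting its non-singularity. Either way the denominator is safely nonzero, and the verification above is valid.
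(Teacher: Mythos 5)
Your proof is correct, but a direct comparison with the paper's argument is impossible for a simple reason: the paper does not prove this lemma at all. It is stated in the appendix purely as a known tool (the Sherman--Morrison formula), with the proof deferred to a citation (``See, for example, Corollary 1.7.2 in \cite{schott05} for its proof''). Your verification is the standard one and is carried out cleanly: expanding $(A+cd^T)B$ with $B:=A^{-1}-\alpha^{-1}A^{-1}cd^TA^{-1}$ and collapsing the scalar $d^TA^{-1}c=\alpha-1$ does yield $I_m$, and your appeal to the fact that a square matrix with a right inverse is invertible (or simply to the hypothesis that $A+cd^T$ is non-singular) legitimately upgrades this to a two-sided inverse. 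What your write-up adds beyond the paper is the explicit treatment of the well-definedness of the formula: the hypotheses only assert non-singularity of $A$ and $A+cd^T$ (the latter with an obvious typo, $A+cd$ for $A+cd^T$), and you correctly show $\alpha=1+d^TA^{-1}c\neq 0$ follows, either via $\det(A+cd^T)=\det(A)\,(1+d^TA^{-1}c)$ or, more elementarily, by noting that $\alpha=0$ forces $c\neq 0$ and $(A+cd^T)A^{-1}c=c+c(d^TA^{-1}c)=0$, exhibiting a nonzero kernel vector and contradicting non-singularity. This is a genuine point the paper's statement glosses over, and your kernel argument avoids importing the matrix determinant lemma, keeping the proof fully self-contained.
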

See, for example, Corollary 1.7.2 in \cite{schott05} for its proof.
\section*{Acknowledgment}
We thank Professor Andrew Barron for fruitful discussion.
The form of \renyi divergence (\ref{andrewsuggestion}) is the
result of simplification suggested by him.  Furthermore, we learned the simple proof of Lemma
\ref{exp2bound} from him. We also thank Mr. Yushin Toyokihara for his support.

\IEEEpeerreviewmaketitle

\ifCLASSOPTIONcaptionsoff
  \newpage
\fi

\bibliographystyle{IEEEtran}

\end{document}